%% file: main.tex
\documentclass[11pt,a4paper]{article}

\newif\ifarxiv
\arxivtrue

\usepackage[T1]{fontenc}
\usepackage[ruled,  linesnumbered, vlined]{algorithm2e}

\makeatletter
\renewcommand{\nllabel}[1]
 {{\let\@currentlabel\algocf@currentlabel
  \let\@currentcounter\algocf@currentcounter
  \label{#1}}}%

\renewcommand{\algocf@nl@sethref}[1]{%
  \renewcommand{\theHAlgoLine}{\thealgocfproc.#1}%
  \hyper@refstepcounter{AlgoLine}%
  \gdef\algocf@currentlabel{#1}%
  \gdef\algocf@currentcounter{AlgoLine}%
 }%
\makeatother

\usepackage{float}

\usepackage[square, numbers]{natbib}

\usepackage{amsmath,amsfonts,amssymb}
\usepackage{amsthm}
\usepackage{thmtools}
\usepackage{thm-restate}
\usepackage{dsfont}
\usepackage[a4paper, margin=1in]{geometry}
\usepackage{graphicx}

\usepackage{enumerate}
\usepackage{enumitem}

\usepackage{upref}
\usepackage[acronym]{glossaries}
\usepackage[pagebackref,colorlinks=true,urlcolor=blue,linkcolor=blue,citecolor=green!50!black]{hyperref}
\usepackage[noabbrev, capitalize]{cleveref}
\hypersetup{colorlinks=true,allcolors=blue}
\usepackage{comment}

\usepackage{xcolor}
\usepackage{xparse}
\usepackage{stmaryrd}

\input{header/newcommand}

\newacronym{mpc}{MPC}{Massively Parallel Computation}

\title{A Scalable and Unified Framework to Weighted Rank Aggregation}
\date{}

\author{
  Amir Carmel\thanks{Weizmann Institute of Science, Rehovot, Israel. \href{mailto:amir6423@gmail.com}{\texttt{amir6423@gmail.com}}.} \and
  Debarati Das\thanks{Pennsylvania State University, University Park, PA, USA. \href{mailto:debaratix710@gmail.com}{\texttt{debaratix710@gmail.com}}.} \and
  Tien-Long Nguyen\thanks{Pennsylvania State University, University Park, PA, USA. \href{mailto:tfn5179@psu.edu}{\texttt{tfn5179@psu.edu}}.}
}

\begin{document}
\maketitle
\newcommand{\candidates}{\mathcal{C}}
\input{abstract}

\thispagestyle{empty}

\input{introduction}

\paragraph{Organization.} The remainder of the paper is organized as follows.
\cref{sec:framework} formalizes the general framework: we prove the universal \cref{property:universal} (\cref{lemma:property1}) and combine it with the metric-specific~\cref{property:metric.specific} to obtain~\cref{alg.space.efficient.general.framework} and its approximation guarantee (\cref{thm:generalframework}).
\cref{sec:description.of.scalable.cycle.removal} presents our new approximation algorithm $\scalableCycleRemoval$ for rank aggregation under the Ulam metric, describing each of its four steps: $\windowDecomposition$, $\localAggregation$, $\blockConcate$, and $\postprocess$. \cref{sec:analysis_for_the_algorithm_for_Ulam_distance} is dedicated to the analysis of $\scalableCycleRemoval$, culminating in the proof of \cref{lem:bound.gmedian.to.offlineOut}.
\cref{section:application} establishes the metric-specific \cref{property:metric.specific} for Hamming, Spearman's footrule, Kendall-tau, and Ulam distances, yielding our offline approximation results for the weighted setting. Finally, \cref{sec:mpc_implementation} presents the MPC implementations of all four local-solution algorithms and yields our main MPC results, \cref{thm:mpc.main.ulam.theorem,thm:mpc.main.theorem}.

\input{framework}

\input{Ulam/local_algorithm.tex}

\input{Ulam/analysis.tex}

\input{hammingSF}

\input{MPCimplementation}

\bibliographystyle{plainnat}
\bibliography{bibliography}

\end{document}

%% file: header/newcommand.tex
\theoremstyle{plain}
\newtheorem{theorem}{Theorem}[section]

\newtheorem{lemma}[theorem]{Lemma}

\newtheorem*{claim*}{Claim}
\newtheorem{corollary}[theorem]{Corollary}

\newtheorem{proposition}[theorem]{Proposition}

\newtheorem{property}{Property}
\makeatletter
\newenvironment{sameproperty}[1]{%
  \addtocounter{property}{-1}%
  \renewcommand{\theproperty}{\ref*{#1}}%
  \renewcommand{\theHproperty}{formal.\ref*{#1}}%
  \begin{property}%
}{%
  \end{property}%
}
\makeatother

\theoremstyle{definition}

\theoremstyle{remark}

\makeatletter
\newcommand{\setword}[2]{%
  \phantomsection
  #1\def\@currentlabel{\unexpanded{#1}}\label{#2}%
}
\makeatother

\newcommand{\opt}{\mathrm{OPT}}

\newcommand{\bigO}[1]{\mathcal{O}\left( #1 \right)}
\newcommand{\otilda}[1]{\widetilde{\mathcal{O}}\left(#1\right)}
\newcommand{\polylog}{\mathrm{polylog}}

\newcommand{\card}[1]{\left\lvert #1 \right \rvert}

\newcommand{\eqdef}{:=}

\newcommand{\ceil}[1]{\lceil{#1}\rceil}
\newcommand{\ceilenv}[1]{\left\lceil #1 \right\rceil}
\newcommand{\tuple}[1]{\mathbf{#1}}
\newcommand{\pair}[1]{\left\langle #1 \right\rangle}

\DeclareMathOperator{\costop}{cost} 
\DeclareMathOperator{\dist}{\texttt{dist}}
\NewDocumentCommand{\distE}{m g}{%
  \IfNoValueTF{#2}
    {\dist^{E}\left(#1\right)}%
    {\dist^{E}_{#2}\left(#1\right)}%
} 
\NewDocumentCommand{\distP}{m g}{%
  \IfNoValueTF{#2}
    {\dist^{P}\left(#1\right)}%
    {\dist^{P}_{#2}\left(#1\right)}%
} 
\NewDocumentCommand{\weight}{g}{%
    \IfNoValueTF{#1}
    {w}%
    {w\left(#1\right)}%
} 
\NewDocumentCommand{\kt}{m g g}{
  \IfNoValueTF{#2}{%
    \dist\left(#1\right)%
  }{
    \IfNoValueTF{#3}{%
      \dist_{#2}\left(#1\right)%
    }{
      \dist^{#3}_{#2}\left(#1\right)%
    }%
  }%
} 
\newcommand{\unali}[1]{I_{#1}} 
\newcommand{\cost}[2]{\costop(#1,#2)}
\newcommand{\costp}[1]{\cost{#1}{P}}
\NewDocumentCommand{\ed}{g}{%
    \IfNoValueTF{#1}
    {\mathrm{d}}%
    {\mathrm{ID}\left(#1\right)}%
} 
\newcommand{\lcs}{\mathrm{LCS}} 
\NewDocumentCommand{\aggregatePerm}{g}{%
    \IfNoValueTF{#1}
    {\texttt{AggregateToPermutation}}%
    {\texttt{AggregateToPermutation}\left(#1\right)}%
} 
\newcommand{\inpset}{P} 
\newcommand{\smallSubset}{Q} 
\NewDocumentCommand{\pickBetter}{g}{%
    \IfNoValueTF{#1}
    {\texttt{PickBetterPermutation}}%
    {\texttt{PickBetterPermutation}\left(#1\right)}%
} 

\DeclareMathOperator*{\argmin}{arg\,min}

\newcommand{\permutations}{\mathcal{S}_n}
\newcommand{\gmedian}{x^{*}} 
\newcommand{\lmedian}{y^{*}} 
\newcommand{\perm}{x} 

\newcommand{\slack}[3]{\Delta_{#1,#2}(#3)}
\newcommand{\slackopt}[2]{\Delta^*_{#1,#2}}
\newcommand{\metricspace}{\mathcal{X}}
\newcommand{\totalslack}[1]{\Delta(#1)}
\newcommand{\opttotalslack}{\Delta^*}
\newcommand{\wnorm}[1]{\|#1\|}

\newcommand{\indic}{\mathds{1}}
\newcommand{\distPara}{\delta} 
\newcommand{\ltime}[1]{t_{\ell}(#1)} 
\newcommand{\tdist}[1]{t(#1)} 

\newcommand{\scalableCycleRemoval}{\mathtt{ScalableMedianReconstruct}} 
\newcommand{\windowDecomposition}{\mathtt{Window\ Decomposition}} 
\newcommand{\localAggregation}{\mathtt{Block\ Reconstruction}} 
\newcommand{\blockConcate}{\mathtt{Block\ Composition\ via\ Dynamic\ Programming}} 
\newcommand{\postprocess}{\mathtt{PostProcessing}} 
\newcommand{\mpcSpaceRegime}{\varepsilon} 
\newcommand{\genStr}{z} 
\newcommand{\approxFact}{\alpha} 
\newcommand{\approxFactMPCED}{\lambda} 

\NewDocumentCommand{\windowSet}{g}{%
    \IfNoValueTF{#1}
    {\mathcal{W}}%
    {\mathcal{W}_{#1}}%
} 
\NewDocumentCommand{\sWindowSet}{g}{%
    \IfNoValueTF{#1}
    {\mathcal{S}}%
    {\mathcal{S}_{#1}}%
} 
\NewDocumentCommand{\windTupSet}{g}{%
    \IfNoValueTF{#1}
    {W}%
    {W_{#1}}%
} 
\NewDocumentCommand{\gap}{g}{%
    \IfNoValueTF{#1}
    {g}%
    {g_{#1}}%
} 
\newcommand{\edErr}{\rho}
\NewDocumentCommand{\upperBoundED}{g}{%
    \IfNoValueTF{#1}
    {\phi}%
    {\phi_{#1}}%
} 
\newcommand{\blSize}{\kappa} 
\NewDocumentCommand{\optStartPoint}{g}{%
    \IfNoValueTF{#1}
    {\alpha}%
    {\alpha_{#1}}%
} 
\NewDocumentCommand{\optEndPoint}{g}{%
    \IfNoValueTF{#1}
    {\beta}%
    {\beta_{#1}}%
} 
\NewDocumentCommand{\startPoint}{g}{%
    \IfNoValueTF{#1}
    {s}%
    {s_{#1}}%
} 
\NewDocumentCommand{\endPoint}{g}{%
    \IfNoValueTF{#1}
    {e}%
    {e_{#1}}%
} 
\NewDocumentCommand{\mpcGoodString}{g}{%
    \IfNoValueTF{#1}
    {\widetilde{y}}%
    {\widetilde{y}_{ #1 }}%
} 
\NewDocumentCommand{\offlineGoodString}{g}{%
    \IfNoValueTF{#1}
    {\widetilde{s}}%
    {\widetilde{s}_{ #1 }}%
} 
\NewDocumentCommand{\offlineOut}{g}{%
    \IfNoValueTF{#1}
    {y_{\mathrm{offline}}}%
    {y_{ \mathrm{offline}, #1 }}%
} 
\NewDocumentCommand{\offlineLocalOut}{g}{%
    \IfNoValueTF{#1}
    {s}%
    {s_{ #1 }}%
} 
\NewDocumentCommand{\concateStr}{g}{%
    \IfNoValueTF{#1}
    {s}%
    {s( #1 )}%
}
\NewDocumentCommand{\offlineIntS}{g}{%
    \IfNoValueTF{#1}
    {s}%
    {s_{ #1 }}%
} 
\NewDocumentCommand{\mpcOut}{g}{%
    \IfNoValueTF{#1}
    {y^{\mathrm{MPC}}}%
    {y^{\mathrm{MPC}}_{ #1 }}%
} 
\NewDocumentCommand{\mpcLocalOut}{g}{%
    \IfNoValueTF{#1}
    {y}%
    {y_{ #1 }}%
} 
\NewDocumentCommand{\mpcInp}{g}{%
    \IfNoValueTF{#1}
    {I}%
    {I_{#1}}%
} 
\NewDocumentCommand{\mpcRDist}{g}{%
    \IfNoValueTF{#1}
    {R_{D}}%
    {R_{#1}}%
} 
\NewDocumentCommand{\mpcSDist}{g}{%
    \IfNoValueTF{#1}
    {S_{D}}%
    {S_{#1}}%
} 
\NewDocumentCommand{\machineSet}{g}{%
    \IfNoValueTF{#1}
    {\mathcal{M}}%
    {\mathcal{M}_{#1}}%
}
\NewDocumentCommand{\machine}{m g}{%
    \IfNoValueTF{#2}
    {M_{#1}}%
    {M_{#1}^{#2}}%
} 
\NewDocumentCommand{\lInt}{g}{%
    \IfNoValueTF{#1}
    {L}%
    {L_{#1}}%
} 
\NewDocumentCommand{\nInt}{g}{
    \IfNoValueTF{#1}
    {N}%
    {N_{#1}}%
} 
\NewDocumentCommand{\anInt}{g}{
    \IfNoValueTF{#1}
    {AN}%
    {AN_{#1}}%
} 
\NewDocumentCommand{\sInt}{g}{
    \IfNoValueTF{#1}
    {S}%
    {S_{#1}}%
} 
\newcommand{\blIntS}[1]{\sigma_{#1}} 
\newcommand{\mpcLRound}{R_{\ell}} 
\newcommand{\mpcLSpace}{S_{\ell}} 

\newcommand{\mpcIdx}[1]{\mathrm{Idx}_{#1}} 
\NewDocumentCommand{\errGood}{g}{%
    \IfNoValueTF{#1}
    {\bar{G}}%
    {\bar{G}_{#1}}%
} 
\NewDocumentCommand{\appGood}{g}{%
    \IfNoValueTF{#1}
    {\widetilde{G}}%
    {\widetilde{G}_{#1}}%
} 
\newcommand{\specialChar}{*} 
\newcommand{\tp}{\mathtt{tp}} 

\definecolor{greenlantern}{rgb}{0.0, 0.5, 0.0}

%% file: abstract.tex
 \begin{abstract}
 
 The rank aggregation problem, seeks to combine multiple rank orderings of the same set of candidates into a single consensus ordering. Such problems arise in diverse domains, including web search, employment, college admissions, and voting. 
In this work we focus on the 1-median objective: given a set of $m$ rankings over $[n]$, the goal is to compute a ranking that minimizes the sum of its distances to all input rankings.
 
We study rank aggregation under several classical distance metrics: Ulam distance, Spearman’s footrule, Hamming distance, and Kendall-tau, as well as their weighted variants. Our contributions begin with a novel unified framework that identifies a key structural property: it suffices to focus on a small subset of rankings (of size three or five), where the corresponding local one-median provides a good approximation to the global median. This principle extends across these distance measures, yielding a general algorithmic framework for weighted rank aggregation.

Building on this, we present a new approximation algorithm for rank aggregation under the Ulam distance that scales in the Massively Parallel Computation (MPC) model. Our algorithm computes a $(2-\alpha)$-approximation, for a constant $\alpha>0$, to the $1$-median in a constant number of rounds, using local memory sublinear in $n$ (the size of a ranking) and total memory near linear in $n$.

We further design new MPC approximation algorithms for Spearman's footrule and for the \emph{element-weighted} variants of Hamming and Kendall-tau distances. For each metric, we obtain a $(2-\zeta)$-approximation, for a constant $\zeta>0$ (which may differ across metrics), to the $1$-median in a constant number of rounds, using local memory sublinear in $n$ and total memory linear or near-linear in $n$.

Moreover, for the Ulam distance, where computing the $1$-median is NP-hard 
[Fischer et al., ESA, 2025], we simplify and strengthen the analysis of Chakraborty et al. [ITCS 2023], obtaining an improved $1.968$-approximation 
that further extends to the weighted setting. 

 \end{abstract}

%% file: introduction.tex
\clearpage
\setcounter{page}{1}

\newif\iffirsttime
\firsttimetrue

\newif\ifsecondtime
\secondtimetrue

\newif\ifthirdtime
\thirdtimetrue

\section{Introduction}
Aggregating inconsistent information from diverse sources is a fundamental challenge across disciplines such as social choice theory and information retrieval. The task of reconciling potentially conflicting preferences into a single consensus ranking is known as rank aggregation. This well-studied problem traces back to the late 18th century, when Condorcet and Borda introduced early voting systems for elections with more than two candidates. In modern settings, rank aggregation continues to play a central role in applications ranging from sports and elections to search engines, databases, web evaluation systems, and statistics~\cite{DKNS01, rank1, KV10, kolde2012robust, chen2015spectral, SVWXY19, kuhlman2020rank}. The complexity of the problem is amplified when disagreements or cycles arise in the input rankings, reflecting the conflicts among different sources. Addressing these inconsistencies to produce a meaningful consensus makes rank aggregation a critical and widely impactful data aggregation task.

Given a set of $n$ items, a ranking can be represented by a permutation $\pi$ on $[n]$. To compare different rankings, several distance measures have been introduced, including Kendall-tau distance~\cite{kendall1938new,diaconis1977spearman,kemeny1959mathematics,MS07,ACN08,KV10} (also known as Kemeny distance in the context of rank aggregation), Hamming distance~\cite{cicirello2019classification}, Spearman’s footrule distance~\cite{spearman1961proof,spearman1906footrule,diaconis1977spearman,DKNS01,KV10}, and Ulam distance~\cite{CDK21, DBLP:conf/innovations/Chakraborty0K23,fischer2025hardness} (see~\cref{subsec:problemsetup} for formal definitions). Among the most studied aggregation frameworks are median rank aggregation (or simply rank aggregation)~\cite{kemeny1959mathematics,young1988condorcet,young1978consistent,DKNS01} and maximum rank aggregation~\cite{bachmaier2015hardness,popov2007multiple,biedl2009complexity}, which aim to find the median and the center ranking of a given set, respectively. In this paper, we consider the continuous version of these problems where the median/center can be any permutation from the input metric space. 

Rank aggregation is computationally tractable under Hamming distance and Spearman’s footrule, where exact solutions can be obtained using straightforward minimum-cost bipartite matching algorithms. In contrast, under Kendall-tau and Ulam distance, the problem is NP-hard~\cite{alon2006ranking,ACN08,fischer2025hardness}. Nevertheless, a folklore 2-approximation follows directly from the triangle inequality, and a series of works have pushed beyond this trivial approximation. In particular, there exists a PTAS for Kendall-tau~\cite{MS07} while for Ulam distance~\cite{CDK21, DBLP:conf/innovations/Chakraborty0K23} achieved a 1.999-approximation.

Traditional rank aggregation metrics, as discussed above, treat all errors uniformly, 
overlooking the fact that in real-world settings, some mistakes are more costly than 
others. In information retrieval, for instance, misplacing a highly relevant document 
should incur a much larger penalty than misplacing an irrelevant one. This motivates the use of element weights, where each item is assigned a relevance score, and mistakes involving high-weight elements are penalized more heavily~\cite{KV10, sevaux2005permutation}.

In many applications, rank aggregation must be performed on massive datasets, where not only is the number of rankings large, but also the size of each ranking. For instance, scholarly search engines such as Corpus rank over 200M papers~\cite{pomikalek2012building}, e-commerce platforms order millions of products~\cite{takanobu2019aggregating}, 
and genomic studies combine extensive datasets to prioritize disease-associated genes~\cite{aerts2006gene}. Handling such large-scale data naturally motivates the design of efficient parallel algorithms. 

In particular, this large-scale setting presents two primary computational challenges: handling a large number of input rankings $m$, and processing rankings of large size $n$ that may exceed the memory capacity of a single machine. We address the first challenge using sampling techniques, working with only a polylogarithmic-sized sample of the input rankings. The second challenge is more substantial and necessitates distributed parallel computation, naturally motivating the design of efficient parallel algorithms. In particular, when $n$ is large, even storing a single ranking on one machine may be infeasible, requiring rankings to be distributed across multiple machines.

A well-studied framework for this purpose is the Massively Parallel Computation (MPC) model~\cite{DBLP:conf/stoc/AndoniNOY14,DBLP:journals/jacm/BeameKS17,DBLP:conf/isaac/GoodrichSZ11,DBLP:conf/soda/KarloffSV10}. In this model, each machine has full access to its own local memory, but communication between machines occurs only between rounds. Thus, the round complexity of an algorithm becomes the central performance measure, since network communication is typically the main bottleneck in practice. The ultimate objective is to design constant-round algorithms, which are highly desirable for large-scale systems.
Despite its importance, rank aggregation has not been well studied in the massively parallel setting, and existing algorithms cannot be implemented directly in the MPC model. For instance, algorithms for Hamming or Spearman’s footrule rely on min-cost matching, but efficient implementations of matching in constant rounds with memory sublinear in the number of vertices are not known, necessitating new scalable approaches. In this work, we introduce a generalized framework that can be deployed in scalable settings and apply it to several classical rank aggregation measures.

\subsection{Problem formulation and preliminaries}\label{subsec:problemsetup}

Let $(\mathcal{X}, \dist)$ be an arbitrary metric space. Consider a finite set $P = \{p_1, p_2, \ldots, p_m\} \subseteq \mathcal{X}$. For a given $x \in \mathcal{X}$, we define the cost function as
\begin{equation*}
\costp{x} = \frac{1}{|P|} \sum_{i=1}^m \dist(x, p_i).
\end{equation*}
The \emph{1-median problem} seeks to find a point in $ \mathcal{X}$ that minimizes this cost function. Let $x^* \in \mathcal{X}$ be an optimal solution for $P$, that is, $x^* = \arg\min_{x \in \mathcal{X}} \costp{x}$, and denote $\opt = \costp{x^*}$ as the optimal cost. Note that $x^*$ may not be unique; when multiple optimal solutions exist, we fix an arbitrary choice.

While our framework is conceptually applicable to general metric spaces, in this paper we focus on the case where $\mathcal{X} = \permutations$ is the set of all permutations on $[n]$ elements, and $\dist$ is some distance metric over $\permutations$. This setting is also known as the \emph{rank aggregation problem}. Specifically, we study four classical distance measures on permutations: \emph{Spearman's footrule distance},  \emph{Hamming distance}, \emph{Kendall-tau distance}, and \emph{Ulam distance}, along with the element-weighted variants.

\paragraph{Spearman's footrule distance} 
measures the $\ell_1$ distance between the position vectors of two permutations:
\[
\dist_F (p,q) = \sum_{i=1}^n \Big|p[i]-q[i]\Big|
\]

\paragraph{Hamming distance}
measures the number of positions where two permutations differ:
\[
\dist_H(p,q) = \sum_{i=1}^n \indic_{p[i] \neq q[i]}.
\]
The weighted version penalizes mismatches proportionally to the average weight of the mismatched elements:
\[
\dist^w_H(p,q) = \sum_{i=1}^n \frac{w(p[i])+w(q[i])}{2}\indic_{p[i] \neq q[i]}.
\]

\paragraph{Kendall-tau distance}
measures the number of elements pairs $(e, e')$ such that the relative order of $e$ and $e'$ is reversed between permutations $p$ and $q$. Formally,
\[
\dist_\tau (p,q) = \Big|
\big\{ \{e,e'\} : (p^{-1}[e] - p^{-1}[e']) (q^{-1}[e] - q^{-1}[e']) < 0 \big\}
\Big|
\]
where $p^{-1}[e]$ is the position of the element $e$ in the permutation $p$.
The weighted version penalizes inversions proportionally to the average weight of the two elements:
\[
\dist^w_\tau (p,q) = \sum_{\substack{e < e' \\ (p^{-1}[e] - p^{-1}[e']) (q^{-1}[e] - q^{-1}[e']) < 0}} \frac{w(e)+w(e')}{2}.
\]

\paragraph{Ulam distance} measures the minimum number of move operations required to transform one permutation into another, where a move operation removes an element and reinserts it at a different position. We express this using the Indel distance (also known as LCS distance), which counts single-character insertions and deletions. The Indel distance is exactly twice the Ulam distance, since each move corresponds to one deletion followed by one insertion. 

$$\dist_U (p,q) = n-LCS(p,q) = \frac{1}{2}\ed{p,q}$$

In the weighted variant, each insertion or deletion of element $i$ has cost $w(i)$. Thus, the total weighted distance equals the sum of weights of all elements that must be moved to transform one permutation into the other.

\paragraph{Framework.} Our framework relies on analyzing the geometric structure of the metric space through 
the following key quantity. For an element $x \in \mathcal{X}$ and a pair $i,j\in [m]$, 
we consider the \emph{triangle inequality slack}
\begin{equation*}
\slack{i}{j}{x} =  \dist(x,p_i) + \dist(x,p_j) - \dist(p_i, p_j).
\end{equation*}
This quantity is twice the \emph{Gromov product}, 
a fundamental concept in hyperbolic geometry. 
By the triangle inequality, we have $\slack{i}{j}{x} \geq 0$ for all $i, j \in [m]$ and $x \in \mathcal{X}$. 

Finally, for any subset $Q \subseteq P$, we define the \emph{total slack} as
\begin{equation}
\totalslack{x,Q} = \sum_{\substack{i < j, \\ p_i,p_j \in Q}} \slack{i}{j}{x}.
\end{equation}
While the total slack depends on the set $Q$, we omit $Q$ as it is clear from context, and further denote $\opttotalslack = \totalslack{x^*}$ the total slack of the optimal median. A small total slack of $x$ with respect to $Q$ means that while the points in $Q$ are mutually distant from one another, $x$ serves as an effective median for the entire set. 
Note that,
\begin{align*}
    \totalslack{x} = 2\binom{|Q|}{2}\cost{x}{Q} -   \sum_{\substack{i < j, \\ p_i,p_j \in Q}} \dist(p_i,p_j).
\end{align*}
It follows that an optimal median of $Q$ minimizes the total slack, and in particular
\begin{equation}\label{eq:localleqglobal}
    \min_{y \in \metricspace} \totalslack{y} \leq \opttotalslack.
\end{equation}
The inequality holds because $x^*$, which is optimal for $P$, need not be optimal for the subset $Q$.

\paragraph{MPC Model.} 
Throughout this paper, we assume the input consists of 
$m$ permutations on $[n]$.
In the \emph{MPC model}~\cite{DBLP:conf/stoc/AndoniNOY14,DBLP:journals/jacm/BeameKS17,DBLP:conf/isaac/GoodrichSZ11,DBLP:conf/soda/KarloffSV10}, both the number of machines and the local memory per machine are required to be significantly smaller than the input size. 
Accordingly, we fix a parameter $0 < \epsilon < 1$, assume each machine has $\otilda{n^{1-\epsilon}}$ memory\footnote{Throughout, $\widetilde{\mathcal{O}}(f) = O(f \cdot \polylog f)$}, and aim to minimize the total number of machines used by the algorithm.

An MPC algorithm proceeds in a sequence of rounds. 
Within each round, every machine performs computations on the data stored in its memory. 
No communication is allowed during a round; instead, communication occurs only between rounds, with the restriction that the amount of data received by any machine does not exceed its memory capacity. 
Furthermore, any data output by a machine must be computed solely from the data already present locally. 
The input data is initially distributed across the machines.

\subsection{Our contribution}
In this paper we propose a new unified framework for providing approximate solutions to the 1-median problem for various ranking measures. The key idea can be summarized by the following principle:
\emph{If no input point serves as a good median, then there exists a constant-size subset $Q \subseteq P$ such that a good approximation to the 1-median over $Q$ lies within a small distance of the optimal solution for $P$. }  We refer to the approximation over $Q$ as a \emph{local solution} and the optimal 
median for $P$ as the \emph{global median}.

The main advantage of this framework lies in its efficiency: since the subset $Q$ is small, it is computationally efficient to obtain a local solution $y$ that is optimal or near-optimal for $Q$, and the distance between $x^*$ and $y$ can be bounded using the slack expressions $\totalslack{x^*}$ and $\totalslack{y}$.

While Chakraborty, Das, and Krauthgamer~\cite{DBLP:conf/innovations/Chakraborty0K23} also proposed a construction for approximating the median from a small number of input permutations, their approach relies on combinatorial properties specific to the Ulam distance. In contrast, our local-vs-global property is significantly more general, yielding a broader and unified approach that applies to a variety of ranking problems. 

Furthermore, the framework of~\cite{DBLP:conf/innovations/Chakraborty0K23} has inherent bottlenecks that limit its applicability in distributed settings. In contrast, our first contribution is to show that our framework is sufficiently robust to be deployed in distributed settings. We design new algorithms in the \gls{mpc} model that compute approximate medians over such subsets $Q$. This scalability is crucial for modern applications involving massive datasets, where both the number of rankings and the length $n$ of each ranking are large, making it impractical for a single machine to process even one ranking~\cite{aerts2006gene,pomikalek2012building,takanobu2019aggregating}.
Following this we propose the first approximation algorithm breaking the 2-factor barrier for  rank aggregation under Ulam distances in the MPC model, using $\otilda{n^{1-\mpcSpaceRegime}}$ memory per machine, total memory near-linear in $n$: the size of ranking , and a constant number of rounds.

\begin{restatable}{theorem}{mpcUlamResult} \label{thm:mpc.main.ulam.theorem}
For any constants $\distPara > 0$, $ \mpcSpaceRegime \in (0,1/8)$,\ifsecondtime\footnote{ The algorithm we present in this paper works with $ 0 < \mpcSpaceRegime < 1/8 $. However, $ 1/8 < \mpcSpaceRegime < 1 $ is not a barrier to our technique. For a sufficiently small constant $\varepsilon>0$, the algorithm uses local memory sublinear in $n$ and total memory near-linear in $n$.
}$^{,}$\footnote{Smaller $\delta$ yields better approximations but requires sampling more permutations 
from the input, increasing both space complexity and running time by constant factors.}\fi
there exists a polynomial-time \gls{mpc} algorithm that, given a set of $m$ permutations $P \subseteq \permutations$, computes, with high probability, a $(2 - \approxFact + \bigO{\distPara})$-approximation to the $1$-median under the Ulam distance, where $\alpha>0$ is a constant.\ifthirdtime\footnote{Our analysis gives $ \alpha = 0.000007 $. This constant has not been optimized and can be further improved.}\fi
The algorithm uses $\bigO{1}$ communication rounds, has total space $\otilda{n^{1+6\mpcSpaceRegime}}$, and requires $\otilda{n^{1-\mpcSpaceRegime}}$ local memory per machine.

\end{restatable}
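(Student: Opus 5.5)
The plan is to obtain \cref{thm:mpc.main.ulam.theorem} by plugging an MPC implementation of the local-solution routine $\scalableCycleRemoval$ into the general framework of \cref{sec:framework}.

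\textbf{Reduction to local problems.} The framework gives the following guarantee via \cref{thm:generalframework}, which combines the universal \cref{property:universal} (\cref{lemma:property1}) with the metric-specific \cref{property:metric.specific}. Either some input permutation is already a $(2-\alpha')$-approximate median, or there is a subset $Q\subseteq P$ of constant size (three or five) whose near-optimal local median $y$ satisfies $\dist_U(y,\gmedian)$ small relative to $\opt$. The smallness is measured through the slacks $\totalslack{\gmedian}$ and $\totalslack{y}$, using \eqref{eq:localleqglobal}.

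\textbf{Handling large $m$ by sampling.} I first replace $P$ by a uniform sample of $\polylog$ many permutations. This sample, with high probability, contains a good candidate among the input points. It also contains a good triple or quintuple $Q$, with a constant-factor loss absorbed into the $\bigO{\distPara}$ term. The probability bound comes from the averaging argument behind \cref{lemma:property1}: a constant fraction (depending on $\distPara$) of tuples are good.

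For each candidate tuple, in parallel, I run the MPC version of $\scalableCycleRemoval$. Its approximation guarantee relative to $\gmedian$ is \cref{lem:bound.gmedian.to.offlineOut}, adapted to approximate edit-distance computations with error $\edErr$. I then evaluate the cost of all candidates on a further sample and output the best. Cost estimation needs approximate Ulam distance in MPC. A constant-round $(1+\distPara)$-approximation of Ulam/LCS between permutations with sublinear memory is available via standard block decomposition, since LCS of permutations reduces to longest increasing subsequence. This again contributes only $\bigO{\distPara}$.

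\textbf{MPC implementation of the four steps.}
\begin{enumerate}
\item $\windowDecomposition$ splits $[n]$ into blocks of size about $n^{1-\mpcSpaceRegime}$ and assigns windows to machines; this needs only sorting and indexing in $\bigO{1}$ rounds.
\item $\localAggregation$ is performed locally per window, because each window fits in $\otilda{n^{1-\mpcSpaceRegime}}$ memory.
\item $\blockConcate$ is a DP over $n^{\mpcSpaceRegime}$ blocks, with candidate start and end points $\startPoint,\endPoint$ drawn from a grid of size $\mathrm{poly}(n^{\mpcSpaceRegime})$. The DP table has about $n^{\bigO{\mpcSpaceRegime}}$ entries, and each transition requires an approximate edit-distance estimate $\upperBoundED$. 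I would compute all transition costs in parallel in one batch of rounds, then solve the DP on a single machine. This fits because $n^{c\mpcSpaceRegime}\le n^{1-\mpcSpaceRegime}$ when $\mpcSpaceRegime<1/8$, which is where that constraint enters.
\item $\postprocess$ turns the string into a permutation by removing duplicates and inserting missing elements. This is done with sorting in $\bigO{1}$ rounds.
\end{enumerate}
The total space is (number of windows) $\times$ (number of candidate endpoints) $\times$ (block size). This should give $\otilda{n^{1+6\mpcSpaceRegime}}$.

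\textbf{Main obstacle.} The hard step is showing that the DP composition with \emph{approximate} transition costs, restricted to grid endpoints, still yields the $(2-\approxFact)$ guarantee. The coarsening errors of $\bigO{\distPara}$ per block must sum to $\bigO{\distPara}\opt$ rather than blowing up.

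My first attempt would charge each block's error to the local Ulam cost within that block. Then $\sum_j \upperBoundED{j}\le(1+\distPara)\sum_j \dist(\gmedian|_j,\cdot)$ follows from the additivity of indel distance under an optimal alignment split at block boundaries. Comparing with the exact offline output $\offlineOut$ then gives the bound from \cref{lem:bound.gmedian.to.offlineOut} up to $(1+\bigO{\distPara})$ factors.
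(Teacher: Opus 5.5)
Your architecture matches the paper's: \cref{alg.mpc.general.framework} with $r=5$, one machine per constructive group running Block Reconstruction, a single machine solving the DP on short summaries, broadcast-tree post-processing, and an external approximate Ulam-distance routine for candidate evaluation. The approximation chain, however, has a hole. \cref{thm:generalframework} requires a local algorithm with $\dist(\gmedian,y)\le C\opttotalslack$, and $\scalableCycleRemoval$ does not provide this. \cref{lem:bound.gmedian.to.offlineOut} carries the additive term $0.0009\sum_{i=1}^{5}\ed{\gmedian,\pi_i}$. Small slack says nothing about the distances $\ed{\gmedian,\pi_i}$ themselves, so the slack-only averaging behind \cref{lemma:property1} cannot control this term. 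The missing ingredient is \cref{lemma:structural}: with probability at least $0.008$, a random quintuple satisfies both $\dist(\pi_i,\gmedian)\le(1+\alpha)\opt$ for every $i$ and $\totalslack{\gmedian}\le 510\alpha\opt$. For $\alpha=0.000007$ this gives $\ed{\offlineOut,\gmedian}\le 0.96\opt$, hence cost at most $1.96\opt$. That is \cref{thm:offline.main.theorem}, and the MPC theorem simply inherits it. A Markov bound on $\sum_i\dist(\gmedian,\pi_i)$, whose expectation is $5\opt$, combined by a union bound with the slack event would also close the gap.

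The step you single out as the main obstacle does not arise. Grid coarsening of endpoints is already absorbed into \cref{lem:bound.gmedian.to.offlineOut} through \cref{lem:good.candidate.exists}. The MPC version is also an exact simulation. Each Phase~1 machine holds its five windows of length $\bigO{\blSize}$ and computes $\sum_{i}\ed{y,\pi_i[s_i,e_i)}$ exactly, and all other terms of $\mathrm{BlockED}$ are index arithmetic, so the DP machine reproduces the offline output. The only approximate distances are in candidate evaluation, via the $(1+\lambda)$-approximation of \cite{hajiaghayi2019massively} with $\otilda{n^{1+\mpcSpaceRegime}}$ space per pair; an LIS ``block decomposition'' is not a standard constant-round primitive. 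Also, $\upperBoundED$ is not a transition-cost estimate. $n^{\upperBoundED}$ ranges over geometric guesses for the global $\ed{\lmedian,\pi_i}$ and only sets the spacing $\edErr n^{\upperBoundED-\mpcSpaceRegime}$ of window start points.

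Finally, your space count is asserted rather than derived. Both $\otilda{n^{1+6\mpcSpaceRegime}}$ and the restriction $\mpcSpaceRegime<1/8$ come from the selective grouping \eqref{eq.def.wind.tup.set.j}. It allows at most one degenerate window per group, drawn from $\sWindowSet$ of size $\bigO{n^{2\mpcSpaceRegime}}$, which yields $\card{\windTupSet}=\bigO{n^{7\mpcSpaceRegime}\log^{8}n}$ groups, each on a machine with $\bigO{n^{1-\mpcSpaceRegime}}$ memory. The DP machine then receives $\bigO{n^{7\mpcSpaceRegime}\log^{9}n}$ words, which fits only when $7\mpcSpaceRegime<1-\mpcSpaceRegime$. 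Your product ``windows $\times$ endpoints $\times$ block size'' omits this five-tuple count, which is what fixes the exponent. The naive grouping $\windowSet^{5}$ would give $n^{10\mpcSpaceRegime}$ groups.
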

\secondtimefalse
\thirdtimefalse

Next, we show the applicability of our framework to weighted rank aggregation by establishing this property for the element-weighted variants of Kendall-tau, Hamming, and Ulam distances, and for Spearman's footrule.
More importantly, we show that the framework extends beyond mere generality and remains sufficiently robust even in the weighted setting.
Building on this framework, we propose the first approximation algorithms breaking the 2-factor barrier for rank aggregation under Spearman's footrule and the element-weighted variants of Hamming and Kendall-tau distances in the MPC model, using $\otilda{n^{1-\mpcSpaceRegime}}$ memory per machine, linear or near-linear total memory, and a constant number of rounds.

\begin{restatable}{theorem}{mpcresult} \label{thm:mpc.main.theorem}
    For any constant $\delta>0$, and $\epsilon \in (0,1)$, there is a polynomial time \gls{mpc} algorithm that, given a set of $m$ permutations $P \subseteq \permutations$, with high probability computes the following: 
        \begin{itemize}
        \item $(1.75+\bigO{\delta})$ approximation of $1$-median under Spearman's footrule distance and element-weighted Hamming distance using $\bigO{1}$ communication rounds, and $\otilda{n^\epsilon}$ processors each with $\bigO{n^{1-\epsilon}}$ local memory. 

        \item $(1.9+\bigO{\delta})$ approximation of $1$-median under element-weighted Kendall-tau distance using $\bigO{1}$ communication rounds, and $\otilda{n^{2\epsilon}}$ processors each with $\bigO{ n^{1-\epsilon} }$ local memory.
        
    \end{itemize}
\end{restatable}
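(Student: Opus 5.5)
The plan follows the paper's framework: reduce to a small local subproblem via \cref{property:universal} and \cref{property:metric.specific}, then implement the local solver in MPC.

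\emph{Step 1: reducing $m$ by sampling.} Draw a polylogarithmic sample of the input permutations. By standard concentration, sample costs estimate $\costp{x}$ within a $(1+\delta)$ factor for any fixed candidate, whp. The sample is also used to locate the subset $Q$ demanded by the framework. If some input permutation is already a good median, with cost at most $(1.75+\bigO{\delta})\opt$ (resp.\ $1.9$), it is found by evaluating the sampled candidates on a sampled estimate of $P$. Otherwise \cref{lemma:property1} guarantees a triple or quintuple $Q$ whose total slack $\opttotalslack$ is small relative to $\opt$. A constant fraction of sampled tuples have this property, so polylogarithmically many candidate subsets suffice.

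\emph{Step 2: the metric-specific local bound.} For each metric, invoke the metric-specific \cref{property:metric.specific} established in \cref{section:application}. It says that a local (near-)optimal median $y$ of $Q$ satisfies $\dist(y,\gmedian)\le c\cdot(\totalslack{y}+\opttotalslack)$. Combined with \eqref{eq:localleqglobal} and the triangle inequality $\costp{y}\le \opt+\dist(y,\gmedian)$, this yields $\costp{y}\le(1+\text{const})\opt$. Balancing against the "good input point" case gives the constants $1.75$ (footrule, weighted Hamming) and $1.9$ (weighted Kendall-tau), i.e.\ \cref{thm:generalframework} specialized to each metric. Since a local solution that is $(1+\delta)$-optimal for $Q$ suffices, the framework only needs an approximate local median, which perturbs the bound by $\bigO{\delta}$.

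\emph{Step 3: MPC local solvers, the main obstacle.} For weighted Hamming and footrule, the exact local median of $|Q|\le5$ permutations is a min-cost bipartite matching, which is not known to be computable in $\bigO{1}$ rounds with sublinear memory. The first approach I would try exploits the structure of the constant-size $Q$. For Hamming, positions on which a majority of $Q$ agrees are fixed greedily. The remaining positions and elements form a residual instance whose total size is bounded by the pairwise distances in $Q$, which are comparable to $\opt$. These can be resolved approximately by splitting $[n]$ into $n^{\epsilon}$ contiguous position blocks. Each machine solves the matching restricted to elements whose $Q$-positions fall mostly in its block, and elements crossing blocks are charged to the slack. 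For footrule, I would take coordinatewise medians of the position vectors (an $\ell_1$ median) and round them to a permutation by sorting, which costs at most a factor absorbed into the constants. Sorting is done in $\bigO{1}$ MPC rounds with $\otilda{n^\epsilon}$ machines. For weighted Kendall-tau, I would use a weighted KwikSort/pivot-style or sorting-by-weighted-median-position rule on $Q$. With constant $|Q|$ this is a constant-factor local approximation, and the framework tolerates it because only the slack ratio matters. Pairwise comparisons across $n^{\epsilon}$ blocks of $n^{1-\epsilon}$ elements require $n^{2\epsilon}$ machine pairs, which accounts for the processor count. Verifying that these local approximations keep the $\bigO{\delta}$ loss, and do not destroy the slack-based bound, is the delicate step.

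\emph{Step 4: selection.} Finally, evaluate all candidates (input samples and local solutions) on the sample in parallel and output the cheapest. Distances under footrule and Hamming are sums over positions, and Kendall-tau distance is computed via blockwise inversion counting. All of these are computable in $\bigO{1}$ rounds within the stated memory.
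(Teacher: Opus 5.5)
Your outer layer (sampling, \cref{lemma:property1}, and choosing the cheapest candidate on a sample) matches the paper. The quantitative core is missing, though. The results of \cref{section:application} do not give a generic bound $\dist(y,x^*)\le c(\totalslack{y}+\opttotalslack)$ for near-optimal local medians. They prove $\dist(x^*,y)\le C\opttotalslack$ for three \emph{specific} solvers:
\begin{itemize}
\item $C=1$ for majority-fill Hamming and for median-and-sort footrule;
\item $C=3$ for KWIK-SORT on the weighted majority tournament.
\end{itemize}
These constants are what produce the ratios. By \cref{thm:generalframework} the factor is $2-\frac{1}{C\binom{r}{2}+1}$ with $r=3$, so your claim that any constant-factor local solver is fine ``because only the slack ratio matters'' is false: a larger $C$ gives a strictly worse factor. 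You also cannot fall back on \eqref{eq:localleqglobal} with an $\bigO{\delta}$ loss. KWIK-SORT is only a $2$-approximation for feedback arc set, and your block-restricted Hamming matching is not shown to be a $(1+\delta)$-optimal median of $Q$. The two $C=1$ facts are short:
\begin{itemize}
\item Hamming: every position $k$ with $x^*[k]\neq y[k]$ carries slack at least $w(x^*[k])$. Either the two majority copies both differ from $x^*[k]$, or the three entries are distinct. Summing gives $\dist^w_H(x^*,y)=\sum_{e\text{ displaced}}w(e)\le\opttotalslack$ for \emph{any} completion of the majority positions, so no residual matching is needed.
\item Footrule: the position-wise median $z$ satisfies $\dist_F(x,z)\le\frac12\totalslack{x}$ for every $x$, and sorting returns the permutation closest to $z$ (rearrangement inequality). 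Hence $\dist_F(x^*,y)\le\dist_F(x^*,z)+\dist_F(z,y)\le 2\dist_F(x^*,z)\le\opttotalslack$.
\end{itemize}

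For Kendall-tau, your ``sort by weighted median position'' option violates \cref{property:metric.specific} outright. Take unit weights, $n\ge5$, $\pi_1=\mathrm{id}$, $\pi_2=(3,4,\dots,n,1,2)$ and $\pi_3=(1,2,n,3,\dots,n-1)$. Their inversion sets relative to $\mathrm{id}$ are pairwise disjoint, so for $P=Q$ the median is $x^*=\mathrm{id}$ and $\opttotalslack=0$. Yet $n$ has median position $n-2$ while $n-1$ has median position $n-1$, so your output inverts them. What works is the option you list without analysis: run KWIK-SORT on the majority tournament with edge weights $\frac12(w(a)+w(b))$. These weights satisfy the triangle inequality needed for the Ailon--Charikar--Newman $2$-approximation, and the argument goes as follows:
\begin{itemize}
\item reversing the edges of $B_{x^*}$ makes the tournament acyclic, so $\wnorm{B_y}\le 2\wnorm{B_{x^*}}$;
\item every pair on which $x^*$ and $y$ disagree lies in $B_{x^*}\cup B_y$;
\item \cref{prop:identity} gives $\wnorm{B_{x^*}}\le\opttotalslack$, hence $C=3$.
\end{itemize}
Note that this runs through $\wnorm{B_y}$, not $\totalslack{y}$.

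Two MPC steps are also unaddressed:
\begin{itemize}
\item Plain KWIK-SORT has logarithmic recursion depth. The paper reaches $\bigO{1}$ rounds with the multi-pivot simulation of Im et al. It picks $\Theta(n^{1-\epsilon})$ pivots per round, which is still $\Omega(\log n)$ and, with their partition-size bound, keeps the depth constant. Edge directions are read implicitly from positions in the three permutations, so total space stays linear.
\item For Hamming, the per-block counts of unused elements and unassigned positions need not match. They must be paired globally by rank, computed with a constant-depth broadcast tree; your block-local matching leaves this unresolved.
\end{itemize}
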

\firsttimefalse

Next, we simplify the analysis of~\cite{DBLP:conf/innovations/Chakraborty0K23}, improve the approximation factor from $1.999$ to $1.968$, and extend the result to the weighted Ulam metric. In contrast to prior work, which is restricted to the unweighted setting, we present the first approximation algorithm for computing a median under the element-weighted Ulam distance.

\begin{restatable}{theorem}{ulamresult}
There is a polynomial time algorithm that, given a set of $m$ permutations $P \subseteq \permutations$, provides a 1.968-approximate solution to the $1$-median problem under the element-weighted Ulam metric with high probability.
\end{restatable}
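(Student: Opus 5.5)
The plan is to use the local-versus-global framework: either some input permutation is already a good median, or a small subset $Q\subseteq P$ of size three has a local median close to $\gmedian$, and we output the better of the two.

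\textbf{Step 1 (sampling).} Sample $O(\log n)$ permutations, with probability proportional to the (normalized) cost measure; for the unweighted objective this is uniform. Concentration then lets us estimate $\costp{\cdot}$ for polynomially many candidates up to a $(1+\distPara)$ factor with high probability. We return the best candidate under these estimates, so it suffices to show that some candidate achieves cost at most $1.968\,\opt$.

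\textbf{Step 2 (the dichotomy).} Let $\gamma$ be a small constant, set $r_i=\dist(\gmedian,p_i)$, and assume every input permutation has cost greater than $(2-\gamma)\opt$. We have $\costp{p_i}\le r_i+\opt$, and a Markov-type argument gives that a constant fraction of inputs satisfy $r_i\le(1+\eta)\opt$. Averaging the identity $\costp{p_i}\le \opt+r_i$ against the failure assumption then shows that, for a random pair $i,j$ drawn from this near set, $\dist(p_i,p_j)\ge(2-\gamma')\opt$. Equivalently, $\slack{i}{j}{\gmedian}\le\gamma''\opt$. This is the universal \cref{property:universal}: there is a triple $Q$ with $\opttotalslack=O(\gamma)\opt$. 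Moreover, such a triple appears among the samples with constant probability per trial, so we try all triples of sampled permutations.

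\textbf{Step 3 (metric-specific step for weighted Ulam).} For each triple $Q$, compute a local solution $y$ with $\totalslack{y}\le c\cdot\opttotalslack$. The natural approach is to extend Chakraborty et al.'s triple construction to weights: elements aligned in all three permutations are kept, and the remaining ones are inserted greedily. Weights enter only as costs of moved elements, so the LCS arguments become weighted LCS, i.e., maximum-weight common subsequence, which is still poly-time computable. Then, as in \cref{property:metric.specific}, we need to show $\dist(y,\gmedian)\le \beta\,(\totalslack{y}+\opttotalslack)$. I would argue this element by element: an element on which $y$ and $\gmedian$ disagree must be "unaligned" with respect to at least two of the three permutations in one of the two solutions, and so it contributes its weight to the slack.

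\textbf{Step 4 (combining).} We have $\costp{y}\le\opt+\dist(y,\gmedian)\le(1+O(\gamma))\opt$ in the second case, and $\costp{p_i}\le(2-\gamma)\opt$ in the first. Balancing $\gamma$ against the constants from Step 2 gives $\min\{2-\gamma,\,1+C\gamma\}$. Optimizing $\gamma$, with the constants obtained in the simplified analysis, should yield $1.968$.

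The main obstacle is Step 3. Bounding $\dist(y,\gmedian)$ linearly by the total slack for weighted Ulam requires a charging argument in which each misplaced element of weight $w$ is charged to an indel of weight $w$ in one of the pairwise optimal alignments. Making this charging injective, that is, not charging the same element across the three pairs more than a bounded number of times, is where I expect the work to lie.
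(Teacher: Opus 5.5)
\textbf{Verdict: there is a genuine gap, in Step 3, and it comes from the choice $r=3$ rather than from making the charging injective.} Steps 1--2 are essentially the paper's framework (\cref{lemma:property1} plus Indyk-style selection). One correction there: the element weights play no role in sampling. The objective is the plain average over $P$, so inputs are sampled uniformly.

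\textbf{Why the charging fails for triples.} Suppose every element is unaligned with at most one of three permutations. Two elements $a,b$ can still be co-aligned in only one of them, say $a$ unaligned with $\pi_1$ and $b$ with $\pi_3$. Their order in $x$ is then dictated by $\pi_2$ alone and need not be the majority order. Concretely, take $\pi_1=wbzac$, $\pi_2=cwabz$, $\pi_3=bwazc$. The permutation $x=wabzc$ has zero total slack, with optimal unaligned sets $\{a\},\{c\},\{b\}$. The permutation $y=wbazc$ admits optimal alignments with unaligned sets $\{a\},\{b,c\},\{w\}$. Neither has an element unaligned with two of the $\pi_i$, yet $x\neq y$, and $x$ puts $a$ before $b$ against the majority. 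So the claim ``disagreement forces an element into $B_x\cup B_y$'' is false for triples.

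\textbf{How the paper avoids this.} It uses $r=5$. An element of $[n]\setminus B_x$ is unaligned with at most one of five permutations, so any two such elements are co-aligned in at least three of them, a strict majority. Hence both $x$ and $y$ order the elements outside $B_x\cup B_y$ by the majority, so these elements form a common subsequence. This gives $\dist^w_U(x,y)\le\wnorm{B_x}+\wnorm{B_y}\le\totalslack{x}+\totalslack{y}$ (\cref{lemma:ulam}). Here $\wnorm{B_x}\le\sum_{i<j}\wnorm{I^{x}_{\pi_i}\cap I^{x}_{\pi_j}}\le\totalslack{x}$ follows from \cref{prop:identity}. Overcounting across pairs is harmless, so no injective charging is ever needed.

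\textbf{The local algorithm is also missing.} ``Aligned in all three permutations'' presupposes a reference permutation, so your construction is not well defined, and you never bound $\totalslack{y}$. The paper does the following:
\begin{itemize}
\item Build the majority tournament $G_Q$ on the five permutations.
\item Deleting $B_{\gmedian}$ makes $G_Q$ acyclic. So a minimum-weight feedback vertex set has weight at most $\wnorm{B_{\gmedian}}\le\opttotalslack$.
\item The 2-approximation for weighted feedback vertex set in tournaments of Lokshtanov et al.\ returns $F$ with $\wnorm{F}\le 2\opttotalslack$.
\item Output a topological order of $G_Q-F$ followed by $F$. The elements outside $F\cup B_{\gmedian}$ are ordered by the majority in both $y$ and $\gmedian$, so they form a common subsequence.
\end{itemize}
This gives $\dist^w_U(\gmedian,y)\le 3\opttotalslack$, i.e.\ \cref{property:formal.metric.specific} with $C=3$ and $r=5$. \cref{thm:generalframework} then yields $2-\frac{1}{3\binom{5}{2}+1}=2-\frac{1}{31}\approx 1.9677$.

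The constant $1.968$ comes from exactly these values of $C$ and $r$. It cannot come from ``optimizing $\gamma$'' in a triple-based balance as you describe. Indeed, with $r=3$ and the same $C$ the formula would give $2-\frac{1}{10}=1.9$.
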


Note that in our problem the total input size is $O(mn)$. A traditional MPC model only enforces that local memory is sublinear in the input size. In contrast, our results guarantee significantly stronger bounds by requiring local memory to be only sublinear in $n$. This is particularly important since, in most real-world applications, the number of rankings $m$ can be much larger than the size of each ranking $n$.

Finally, we note that our framework, particularly the local-global principle, is 
not specific to permutations and may extend to other metric spaces where a below-$2$ 
approximation for the $1$-median problem remains open. For example, the edit 
distance metric for the well-known median string problem~\cite{hoppenworth2020fine, CDK21}, and 
the Fr\'echet distance or Dynamic Time Warping metrics for the average curve 
problem~\cite{DBLP:conf/swat/BuchinDS20}.

\subsection{Technical overview}\label{subsec:techoverview}

\paragraph{Framework Overview.} Our general framework operates under the assumption that no point in $P$ achieves cost better than $(2-\alpha)\opt$ for some constant $\alpha \in (0,1)$. Otherwise, we can achieve a $(2-\alpha)$-approximation by simply returning the minimum-cost point from the input set.
With this assumption, the framework relies on two key properties:

\begin{property}[\textbf{Universal}]\label{property:universal}
    For every $2 \leq r \leq m$, there exists a subset $Q \subseteq P$ of size $r$ such that 
    \begin{equation}\label{eq:smallq}
        \opttotalslack \leq \alpha \binom{r}{2} \opt.
    \end{equation}
\end{property}

\begin{restatable}[\textbf{Metric-specific}]{property}{SecondProperty} \label{property:metric.specific}
    There exists a constant $r$ such that for every subset $Q \subseteq P$ of size $r$ and every $y \in \metricspace$,
    \begin{equation}\label{eq:distq}
        \dist(x^*,y) \leq \totalslack{x^*} + \totalslack{y}.
    \end{equation}
\end{restatable}

~\cref{property:universal} holds universally for all metric spaces. We prove this through an averaging argument by considering the pairwise distances of points in $P$ and expressing them using the distances from the optimal median $x^*$.
This intuitively suggests that for any two permutations in $Q$, their distance is 
approximately equal to the sum of their respective distances to the optimal median.

However,~\cref{property:metric.specific} must be established individually for each metric. 
When it holds and $y$ is an optimal median of $Q$, Equation~\eqref{eq:localleqglobal} 
implies $\dist(x^*, y) \leq 2\opttotalslack$. This means $x^*$ and $y$ are close, 
so $y$ serves as a good median for $P$, establishing the \emph{local-global property}. For local solutions, where $y$ is not optimal median of $Q$, we demonstrate how $\totalslack{y}$ can be bounded in terms of $\opttotalslack$ for our specific metrics. A detailed analysis of our framework is given in~\cref{sec:framework}.

\paragraph{New Scalable Approximation Algorithms for 1-Median under Ulam Distance.}
\ifarxiv
Building on the above framework, in~\cref{sec:description.of.scalable.cycle.removal} (algorithm description), and~\cref{sec:analysis_for_the_algorithm_for_Ulam_distance} (analysis) we present a new approximation algorithm for the 
1-median under the Ulam distance, which we later show can be adapted to the MPC setting.
\else
Building on the above framework, in~\cref{sec:description.of.scalable.cycle.removal} (algorithm description) we present a new approximation algorithm for the 
1-median under the Ulam distance, which we later show can be adapted to the MPC setting.
\fi
 For the Ulam metric, we assume that the set $Q$ contains five permutations. 
A natural baseline approach for computing the $1$-median of the permutations in $Q$ is to apply a dynamic programming algorithm. 
However, this approach is not space-efficient, as it requires $n^5$ space, which is prohibitive in the \gls{mpc} setting.

An alternative is to adapt the strategy from~\cite{DBLP:conf/innovations/Chakraborty0K23}. 
In this approach, an element is called \emph{bad} if it is not aligned in an optimal alignment between a permutation in $Q$ and the optimal median for at least two permutations in $Q$. 
Let the \emph{bad set} denote the collection of all such elements. 
The algorithm constructs a tournament graph by taking the majority relative order for each pair of elements induced by the permutations in $Q$, then removes cycles and outputs a topological ordering.

The main bottleneck of this approach is the cycle-removal step. 
In~\cite{DBLP:conf/innovations/Chakraborty0K23}, the algorithm repeatedly identifies the shortest cycle and removes all its vertices. 
Since the graph is a tournament, the shortest cycle always has length three, and it can be shown that at least one vertex in each such triangle must be bad, which allows bounding the number of good elements removed.

However, this approach does not extend to the \gls{mpc} model. Because no single machine can store an entire permutation, a $3$-cycle $(a,b,c)$ may not be present on any one machine and therefore may go undetected. 
One possible workaround is to ensure that every triple of elements is collocated on at least one machine. 
But this leads to further complications: a single element may appear on many machines, and different machines may independently select distinct triangles involving the same vertex (e.g., $(a,b,c)$ and $(a,d,e)$). 
This is problematic because the~\cite{DBLP:conf/innovations/Chakraborty0K23} analysis crucially relies on the selected cycles being vertex-disjoint in order to bound the number of good elements removed.

It is nontrivial to reconcile these independently detected cycles while maintaining vertex disjointness in the \gls{mpc} setting. 
This difficulty forms the main obstacle to implementing prior approaches. 
To overcome this, we propose a new framework and algorithm that provides a local solution for the permutations in $Q$.

First, fix an optimal median of $Q$, denoted by $y$.
We partition $y$ into blocks of length $\blSize= n^{1-\varepsilon}$,
for some constant $\varepsilon > 0$.
Our goal is to construct, for each such block of $y$, a substring of comparable
length that provides a good approximation whenever the objective value of the
block is not large. We classify an objective value as large if it is at least $\kappa/\rho$, where $\rho > 0$ is a small constant to be specified later. 
If the objective value is large, we instead use a block consisting only of dummy
elements, which already yields a sufficiently good approximation. Furthermore, when approximating a block, we focus only on the good elements and ignore the bad elements. This is valid because the total number of bad elements is small and their cumulative contribution to the objective is large, allowing them to be handled arbitrarily. For this we devise a four-step process.

\vspace{1mm}
\noindent
\textbf{Windows Decomposition:} For each permutation $\pi \in Q$, fix an optimal alignment between $\pi$ and $y$. Let $q$ be a block of $y$. Define $e^\ell \in q$ to be the leftmost element of $q$ that is matched in the optimal alignment between $\pi$ and $y$, and define $e^r \in q$ to be the rightmost element with this property.
Suppose $e^\ell$ is matched to $\pi[i]$ and $e^r$ is matched to $\pi[j]$.
We define $\pi^q$ to be the substring $\pi[i,j]$.
The collection of substrings $\{\pi^q : \pi \in Q\}$ suffices to reconstruct $q$;
we refer to these substrings as the \emph{constructive blocks}.
The main challenge is to identify these constructive blocks efficiently.

We now describe a window decomposition for the permutations in 
$Q$, beginning with several observations specific to our problem.
A key difficulty is that constructive blocks may have highly variable lengths.
We first consider the case where at least one constructive block is very large,
for example of size greater than $k|q|$ for some sufficiently large constant $k$.
In this situation, the objective value of $q$ is already large, and therefore a
trivial approximation using a block of dummy elements suffices.

Next, consider the case where at least two constructive blocks are very small.
Then many elements of $q$ do not appear in both of these blocks and hence remain
unmatched in at least two permutations from $Q$.
Such elements are therefore \emph{bad elements}.
This implies that the number of bad elements in $q$ is large.
Since the total number of bad elements across all blocks is small, this scenario
cannot occur frequently.
Consequently, in this case as well, a good approximation can be obtained by using
a trivial block consisting only of dummy elements.

From this point onward, we assume that no constructive block is too large and that
at most one constructive block is very small.
Even under this assumption, the constructive blocks may still have different
sizes.
To handle this, we introduce a windowing strategy that, for each permutation
$\pi \in Q$, generates windows of variable sizes starting at various indices.
We ensure that for every constructive block $\pi^q$ whose size is neither too
large nor too small, there exists a window $\pi[i',j']$ such that 
\[
|i-i'| + |j-j'| \le \rho \cdot \mathrm{obj}(q),
\]
where $\mathrm{obj}(q)$ denotes the objective value of the block $q$. 

A crucial aspect of this strategy is that we allow windows of size zero to handle very small constructive blocks, and we consider such zero-length windows at multiple starting indices.
We provide the details of the window decomposition 

\vspace{1mm}
\noindent
\textbf{Block Reconstruction:} Next, using these windows as constructive blocks, we attempt to estimate
the blocks of $y$.
Since we do not know in advance which windows correspond to a particular block, we try various choices of five windows. Specifically, we form \emph{constructive groups} by selecting one window from each string in $Q$ using a structured method rather than enumerating all combinations.

Fix an arbitrary choice of five windows (constructive blocks), forming a constructive group.
We first construct a graph whose vertices correspond to elements that appear in
at least four constructive blocks.
This restriction is justified because we only aim to approximate the good
elements: any element that does not appear in at least four blocks is classified
as a bad element.

For each pair of vertices, corresponding to two elements $c$
and $d$, we orient the edge between them as follows: we add an edge from $c$ to $d$ if $c$ appears before $d$ in at least three of the
five blocks, and an edge from $d$ to $c$ if $d$ appears before $c$ in at least three blocks.
If neither direction has a strict majority, we add no edge and delete both vertices $c$ and $d$.
This is valid because the absence of a majority implies that at least one of $c$
or $d$ is a bad element.
Since the total number of bad elements is small, the number of good elements deleted
in this process is also small.

After this pruning step, the resulting graph is a tournament.
This property is crucial for the next phase, where we remove cycles.
We iteratively find the shortest cycle in the graph and delete all vertices in
that cycle.
Because the graph is a tournament, the shortest cycle has length at most three,
and among the vertices of such a cycle, at least one must be a bad element.
Moreover, since the cycles removed in different iterations are vertex-disjoint,
the total number of good elements deleted remains bounded by the number of bad
elements.

Once the graph becomes acyclic, we compute a topological ordering of the remaining
vertices, which yields a candidate block for the constructive group. For the subsequent dynamic programming step, we ensure that each
candidate block has length at least $\kappa$.
Thus, if the string obtained from the topological ordering has length less than $\kappa$, we
pad it with dummy elements to reach length $\kappa$.
The details of this step are provided in~\cref{sub:local_aggregating_algorithm}.

\vspace{1mm}
\noindent
\textbf{Block Composition via Dynamic Programming:} 
\ifarxiv
In~\cref{sub:combining_blocks}, we combine these candidate blocks, each of length at least $\kappa$,
using dynamic programming to obtain a good approximation of $y$.
\else
We combine these candidate blocks, each of length at least $\kappa$,
using dynamic programming to obtain a good approximation of $y$.
\fi
Since the dynamic program operates on blocks, the total state space is polynomial
in $n/\kappa = n^{\varepsilon}$.
Consequently, the DP can be implemented in the MPC setting within a single round
by storing all blocks on one machine.

An important subtlety is that, as discussed earlier, for some blocks of $y$ the
corresponding constructive blocks may be either very large or very small.
In such cases, we approximate the block of $y$ using a block consisting entirely
of dummy elements.
However, we do not include these dummy-only blocks explicitly in the dynamic
program, as doing so would significantly increase the space requirements.
Instead, they are handled implicitly within the DP formulation.

The output of the dynamic program is a string that provides a good approximation
of $y$ over the good elements.
Moreover, each element appears at most once in this string.
This property is guaranteed by the construction of the graph used in the previous
step, where a vertex corresponding to an element is included only if the element
appears in at least four constructive blocks. 

\vspace{1mm}
\noindent
\textbf{Post Processing:} The DP output may omit some elements and may include dummy elements
As argued earlier, the number of such elements is small and bounded by the number
of bad elements.
To convert the output into a permutation, we replace each dummy element with a
missing element.
One simple approach would be to delete all dummy elements and append the missing
elements at the end, which would still yield a good approximation.
However, our chosen strategy is crucial for enabling an efficient implementation
in the MPC setting. 
The details are given in~\cref{sub:postprocessing}.

\paragraph{New Scalable Approximation Algorithms for 1-Median with Element Weights.}

In the element-weighted setting although exact (or almost exact) $1$-median algorithms may exist for computing the median of 
$Q$, they may not be directly adapted to the massively parallel setting.
Such is the case for the rank aggregation problem under Hamming, Spearman's footrule and Kendall-tau. To this end, we design algorithms that are scalable and are well-suited for implementation in the MPC model

Again, the cornerstone of our approach is to devise a consensus strategy among the permutations in $Q$. We then show that for the relevant $r$, this consensus is close to the optimal median of the entire set $P$. We establish these distance bounds using the total slack term, which provides a unified framework across metric spaces. 
The details and analysis of the algorithms are presented in~\cref{section:application}.

\vspace{-4mm}
\paragraph{Hamming Distance.} Our algorithm constructs a consensus by selecting the majority element at each position where one exists among the three permutations in $Q$. 
The reasoning for selecting the majority as the basis of our consensus strategy is that if the optimal median does not follow the majority, then both input permutations supporting the majority will differ from the optimum, thereby increasing their slack. Since the total slack is assumed to be small (by Equation~\ref{eq:smallq}), this situation cannot occur frequently.
Positions without majority agreement are filled arbitrarily with the remaining elements. The key insight is that such positions are exactly those contributing to the total slack: since they lack a majority, they affect both the local solution and the global optimum equally.
Importantly, this property holds even in the element-weighted setting, making our consensus strategy robust to weights and yielding a $1.75$-approximation.

\vspace{-4mm}
\paragraph{Spearman's Footrule.} We construct a consensus by taking the median element at each position across the three permutations, which yields a pseudo-permutation that may contain duplicates. This consensus is then converted into a valid permutation by sorting the elements and reassigning ranks according to their sorted order. While it is clear that taking the median value at each position minimizes the median cost, it is less obvious that this remains optimal once we enforce the consensus to be a valid permutation. 

Our key observation is that whenever the optimal median permutation selects a value different from the position-wise median, the slack increases for at least a pair of inputs. For example, suppose at position $i$ the three input permutations $\pi_1, \pi_2, \pi_3$ have values $a \geq b \geq c$. If the median instead uses some $b' \neq b$ (say $b'>b$) at index $i$, then the slack of $\pi_2$ and $\pi_3$ at this position increases by $2(b'-b)$. Since the total slack is assumed to be small, such deviations cannot occur frequently. Thus, even under the constraint that the output must be a permutation, taking the position-wise median remains the best consensus strategy. This transformation ultimately yields the closest valid permutation to our consensus under the Spearman's footrule metric, resulting in an overall $1.75$-approximation.

\vspace{-4mm}
\paragraph{Kendall-tau Distance.} The Kendall-tau metric compares inversions between pairs of elements. The consensus is determined by majority vote over element pairs rather than positions. We construct a tournament graph where vertex $a$ precedes vertex $b$ if $a$ appears before $b$ in at least two of the three permutations. The resulting feedback arc set problem, solved via the KWIK-SORT algorithm, yields a permutation that respects the majority preference on most pairs. 
While the KWIK-SORT algorithm was applied to rank aggregation under the Kendall-tau metric in~\cite{ACN08} to obtain an approximate 1-median solution, the weighted setting requires a different approach. Using our framework, a weighted tournament is derived from the three input permutations, with edge weights assigned so that the triangle inequality holds, thereby enabling the use of the KWIK-SORT algorithm.

\vspace{-4mm}
\paragraph{Improved Approximation for Ulam Distance.}

Unlike the previous metrics, consensus under Ulam distance requires subsets $Q$ of size $5$. This follows from a more general property: for any two permutations $x,y$, we show the distance from $x$ to $y$ is at most $\totalslack{x} + \totalslack{y}$. Combined with the universal~\cref{property:universal}, this enables us to improve the metric-specific analysis of~\cite{DBLP:conf/innovations/Chakraborty0K23} and naturally extend the result to the element-weighted variant.

\paragraph{MPC Algorithm for Approximating 1-Median.}
We start by describing how permutations are represented in the \gls{mpc} model and how pairwise distances are computed, primitives common to all metrics.

\vspace{-4mm}
\paragraph{Storing and Computing Distance in MPC:}

A key challenge in the \gls{mpc} model is that each machine has only $\widetilde{\mathcal{O}}(n^{1-\mpcSpaceRegime})$ local memory, which is insufficient to store an entire permutation; consequently, each permutation must be distributed across at least $\widetilde{\mathcal{O}}(n^{\mpcSpaceRegime})$ machines. This sublinear memory constraint makes it nontrivial to compute a local solution and validate its cost even for a small subset $Q$ of permutations, necessitating specialized \gls{mpc} algorithms tailored to different distance metrics. Hamming and Spearman's footrule distances can be computed in constant rounds using $\bigO{n}$ total space via block-wise aggregation. Kendall-tau distance requires $n^{2\mpcSpaceRegime}$ machines to handle inversions between all block pairs, resulting in a total of $ \bigO{n^{1+\mpcSpaceRegime}} $ space. Computing Ulam distance in MPC is more challenging, requiring the $(1+\approxFactMPCED)$-approximation algorithm of~\cite{hajiaghayi2019massively} with $\otilda{n^{1+\mpcSpaceRegime}}$ space. 

\vspace{-4mm}
\paragraph{A Sublinear Total Space Framework:}
We address the constraint on total memory through a sampling-based approach.
Rather than enumerating all possible candidates, which may be as many as $ \bigO{m^r} $, we show that it suffices to consider only $ \bigO{\log(n)} $ randomly selected input permutations and $ \bigO{\log(n)} $ local solutions derived from $ \bigO{\log(n)} $ random subsets of size $ r $. To identify the best candidates, instead of computing their exact costs, we leverage results from~\cite{indyk1999sublinear, indyk2001high}, which demonstrate that, with high probability, a $ (1+\delta) $-approximate solution to the optimal candidate can be found by selecting the permutation with the minimum cost over a random sample of $ \bigO{\log(n)/\distPara^{2}} $ input permutations.

Next, we outline how to compute a local solution of the set $Q$ under Ulam, Hamming, Spearman's footrule, and Kendall-tau distances using constantly many communication rounds.

\vspace{-4mm}
\paragraph{Ulam distance.} To employ our new algorithm for the Ulam distance in the \gls{mpc} setting, we proceed in three phases.

First, we distribute the task of constructing candidate blocks. For each group of five windows, we assign a specific machine to build the tournament graph, remove cycles, and output the resulting block. These results must then be aggregated to find the optimal combination of blocks. A direct approach would be to collect all candidate blocks onto a single machine to run the dynamic programming. However, this is not space-efficient, as the total size of these blocks may exceed the local memory of a single machine. To overcome this, we observe that the dynamic program does not need the actual strings; it only requires their objective values, lengths, and the starting and ending indices of the corresponding windows. This summary information is small enough to fit in the memory of one machine, allowing us to compute the optimal sequence of blocks efficiently.

The solution to the dynamic program implicitly defines a string consisting of $ n^{\mpcSpaceRegime} $ blocks. Each block is either a candidate block stored on one of the distributed machines or a logical block consisting entirely of dummy elements. By backtracking through the solution, we can identify the specific machines holding the chosen candidate blocks and assign new machines to generate the required blocks of dummy elements.

It remains to transform this distributed string into a valid permutation of length $ n $. We first truncate the string by removing the leftmost dummy elements until the total length is exactly $ n $. This is a simple counting task that can be performed using a broadcast tree of constant depth. Next, we must replace each remaining dummy element with a unique unused element. Our strategy is to pair the distinct unused elements with the remaining dummy positions according to their sorted order. That is, the $ k $-th smallest unused element should be placed in the $ k $-th available dummy position.

To implement this matching, we need to compute the global rank of each unused element and each dummy position. Since the data is distributed, each machine first counts the unused elements it holds locally. Then, using a broadcast tree, each machine computes the total count of unused elements residing on all preceding machines. This allows every machine to determine the exact global rank of its elements. We apply the same procedure to rank the dummy positions. Finally, the elements and positions with matching ranks are routed to the same machine to complete the permutation.

\vspace{-4mm}
\paragraph{Hamming distance.} In the algorithm for Hamming distance, (i) a majority element is assigned to each position if it exists among the permutations in $ \smallSubset $. (ii) Positions without majority agreement are filled arbitrarily with remaining elements. The first step can be simulated in \gls{mpc} similar to computing the Hamming distance: we bring the information from the $ i $'th block of each permutation in $Q$ to a single machine and decide the majority. 

The difficulty lies in the second step, as the list of unused elements and unassigned positions can be large and may not fit in a single machine. A simple approach is to assign unused elements in each $ i $ block to unassigned positions in the same block. However, this may not be feasible as the number of unused elements and unassigned positions in a block may differ. To resolve this, we consider the sorted list of unused elements and unassigned positions, and aim to pair them according to their order in the sorted lists. This requires computing the rank of each unused element relative to all unused elements, and the rank of each unassigned position relative to all unassigned positions. To implement this, we let each $ i $'th machine keep the unused elements in the $ i $'th block. The rank of each element can be computed if each machine knows the number of unused elements in all the previous blocks. This information can be made available to all machines by using a broadcast tree of depth $ \max( 1,\frac{2\mpcSpaceRegime-1}{1-\mpcSpaceRegime} ) $. We proceed similarly for unassigned positions. Finally, the elements and positions with the same rank are sent to the same machine.

\vspace{-4mm}
\paragraph{Spearman's Footrule.} In the algorithm for Spearman's footrule distance, we first compute a pseudo-permutation by assigning the median element at each position across the permutations in $ \smallSubset $. The output permutation is obtained by sorting the elements in the pseudo-permutation and reassigning ranks according to their sorted order. The first step can be implemented in \gls{mpc} similar to Hamming distance. The second step pose a similar challenge as in Hamming distance, since the elements are distributed across multiple machines. By sorting all elements in the pseudo-permutation, we reduce this problem to the pairing problem as in Hamming distance.

\vspace{-4mm}
\paragraph{Kendall-tau distance.} For Kendall-tau distance, recall that we need to apply the KWIK-SORT algorithm to solve the feedback arc set problem on the majority graph induced by $ \smallSubset $. The KWIK-SORT algorithm starts with a random vertex as pivot, and partitions the remaining vertices into two sets according to the direction of the edges between them and the pivot. We then recurse on the two sets, and concatenate the results. We follow a constant rounds \gls{mpc} implementation of KWIK-SORT presented in~\cite{im2020fast}. The approach to simulate this in constant \gls{mpc} rounds is to pick multiple pivots in each round. These pivots form a decision tree, which partitions the vertices into multiple sets. We then recurse on each set in parallel. However, if we directly follow the algorithm in~\cite{im2020fast}, we will violate the local space constraint. In particular,~\cite{im2020fast} allows the local space to be $ \otilda{n} $, hence, they can choose $ \bigO{n^{1/2}} $ pivots simultaneously in each round. Our local space may not be sufficient to accommodate this many vertices. We overcome this by choosing only $ \bigO{n^{1-\mpcSpaceRegime}} $ vertices as pivots. Due to the fact that the size of partitioned sets can not be too large~\cite[Lemma 7]{im2020fast}, as long as the number of pivots is $ \Omega(\log(n)) $, the number of rounds remains constant. Another challenge is that in~\cite{im2020fast}, the total space is $ \bigO{n^{2}} $, as they need to store the edges explicitly. We resolve this by observing that we can access the direction of an edge by accessing the positions of two elements in the permutations in $ \smallSubset $. This allows us to store the edges implicitly, keeping the total space at $ \bigO{n} $.
We provide the details of the MPC algorithms in~\cref{sec:mpc_implementation}.

%% file: framework.tex
\section{General Framework Analysis}\label{sec:framework}

In this section, we establish our framework and prove the components that are universal across all metric spaces. We begin by formalizing~\cref{property:universal} in the following lemma.

\begin{lemma}\label{lemma:property1}
Let $P$ be a finite set of points in a metric space $(\metricspace, \dist)$.
Assume that the number of points $p \in P$ for which $\costp{p} \leq (2-\alpha)\opt$ is at most $\delta m$, for $\delta \in [0,1-\alpha]$. Let $Q$ be a random subset of $P$ of size $r\geq 2$. Then, $\mathbb{E}_Q[\opttotalslack] \leq \binom{r}{2}(\alpha+2\delta)\opt$.
\end{lemma}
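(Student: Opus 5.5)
The plan is to use linearity of expectation to reduce the statement to a single uniformly random pair, and then evaluate the average pairwise slack of $x^*$ exactly in terms of $\opt$ and the costs of the input points.

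\emph{Reduction to one pair.} If $Q$ is a uniformly random $r$-subset of $P$, then each of the $\binom{r}{2}$ pairs inside $Q$ is marginally a uniformly random unordered pair $\{p_i,p_j\}$ of distinct points of $P$. By linearity of expectation,
\[
\mathbb{E}_Q[\opttotalslack]=\binom{r}{2}\,\mathbb{E}_{i\neq j}\bigl[\slack{i}{j}{x^*}\bigr].
\]
It therefore suffices to show $\mathbb{E}_{i\neq j}[\slack{i}{j}{x^*}]\le(\alpha+2\delta)\opt$.

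\emph{Summing over ordered pairs.} I would sum over the $m(m-1)$ ordered pairs $i\neq j$ and treat the two parts of the slack separately.
\begin{itemize}
\item The median terms give $\sum_{i\neq j}\bigl(\dist(x^*,p_i)+\dist(x^*,p_j)\bigr)=2(m-1)\sum_i\dist(x^*,p_i)=2m(m-1)\opt$.
\item The pairwise terms give $\sum_{i\neq j}\dist(p_i,p_j)=\sum_i m\,\costp{p_i}$. This uses that $\costp{\cdot}$ averages over all of $P$, and the term $\dist(p_i,p_i)=0$ contributes nothing.
\end{itemize}

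\emph{Lower bound on the pairwise terms.} This is the only step that uses the hypothesis.
\begin{itemize}
\item At least $(1-\delta)m$ points satisfy $\costp{p}>(2-\alpha)\opt$.
\item Every point satisfies $\costp{p}\ge\opt$, by optimality of $x^*$.
\end{itemize}
Since $2-\alpha\ge 1$, the sum is minimized when exactly $\delta m$ points have cost $\opt$. Hence
\[
\sum_i\costp{p_i}\ge m\bigl((1-\delta)(2-\alpha)+\delta\bigr)\opt .
\]

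\emph{Combining.}
\[
\mathbb{E}_{i\neq j}[\slack{i}{j}{x^*}]\le 2\opt-\frac{m}{m-1}\bigl((1-\delta)(2-\alpha)+\delta\bigr)\opt .
\]
The bracketed quantity is nonnegative, and $m/(m-1)\ge1$, so the factor $m/(m-1)$ can be dropped while keeping the inequality. This leaves
\[
\bigl(2-(2-\alpha-\delta+\alpha\delta)\bigr)\opt=(\alpha+\delta-\alpha\delta)\opt\le(\alpha+2\delta)\opt .
\]

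\emph{Where care is needed.} There is no real obstacle; the argument is pure bookkeeping. The points to get right are:
\begin{itemize}
\item the normalization $1/|P|$ in $\costp{\cdot}$, so that the diagonal terms vanish;
\item the direction of the $m/(m-1)$ factor, which is only harmless because the subtracted quantity is nonnegative;
\item the fact that the hypothesis on the cost of the input points enters only through the lower bound on $\sum_i\costp{p_i}$.
\end{itemize}
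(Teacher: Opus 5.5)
Your proposal is correct and follows essentially the same argument as the paper: linearity of expectation over the pairs inside $Q$, the identity $\sum_{i\neq j}\dist(p_i,p_j)=m\sum_i\costp{p_i}$ expanded through the slack, and a lower bound on the total cost from the hypothesis. The only difference is that you bound the costs of the at most $\delta m$ cheap points below by $\opt$, whereas the paper bounds them by $0$; this gives you the slightly sharper intermediate constant $\alpha+\delta-\alpha\delta$ before you relax to $\alpha+2\delta$.
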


\begin{proof}
Let $F = \{ p \in P \mid \costp{p} \leq (2-\alpha)\opt \}$ be the set of points from $P$ with good approximation factor. From our assumption on the size of $F$, we have $|F| \leq \delta m$, so we can lower bound the total cost
\begin{equation*}
\sum_{p \in P} \costp{p} = \sum_{p \in F} \costp{p} + \sum_{p \in P \setminus F} \costp{p} \geq \sum_{p \in P \setminus F} \costp{p} \geq (2-\alpha)(1-\delta)m\opt.
\end{equation*}

On the other hand, we can express the same total cost by summing over all pairwise distances,
\begin{equation*}
\sum_{p \in P} \costp{p} = \sum_{p \in P} \frac{1}{m} \sum_{i=1}^m \dist(p, p_i) = \frac{1}{m} \sum_{\substack{i=1}}^m \sum_{\substack{j=1, \\ j \neq i}}^m \dist(p_i, p_j) .
\end{equation*}

Using the definition of triangle inequality slack, we can write,
\begin{align*}
\frac{1}{m} \sum_{\substack{i=1}}^m \sum_{\substack{j=1, \\ j \neq i}}^m \dist(p_i, p_j) &= \frac{1}{m} \sum_{\substack{i=1}}^m \sum_{\substack{j=1, \\ j \neq i}}^m  \left(\dist(x^*, p_i) + \dist(x^*, p_j) - \slackopt{i}{j}\right) \\
&= \frac{2}{m} \sum_{\substack{i=1}}^m \sum_{\substack{j=1, \\ j \neq i}}^m  \dist(x^*, p_i) - \frac{1}{m} \sum_{\substack{i=1}}^m \sum_{\substack{j=1, \\ j \neq i}}^m \slackopt{i}{j} \\
&= 2(m-1)\opt - \frac{1}{m} \sum_{\substack{i=1}}^m \sum_{\substack{j=1, \\ j \neq i}}^m \slackopt{i}{j}
\end{align*}

Combining both bounds,
\begin{equation*}
(2-\alpha)(1-\delta)m\opt \leq 2(m-1)\opt - \frac{1}{m} \sum_{\substack{i=1}}^m \sum_{\substack{j=1, \\ j \neq i}}^m\slackopt{i}{j}.
\end{equation*}

Rearranging,
\begin{equation*}
 \sum_{\substack{i=1}}^m \sum_{\substack{j=1, \\ j \neq i}}^m \slackopt{i}{j} \leq (m\alpha(1-\delta) + 2m\delta-2)m\opt \leq (m\alpha + 2m\delta-2)m\opt.
\end{equation*}

Since each pair $\{i,j\}$ with $i \neq j$ appears twice in the double sum, and $\slackopt{i}{i} = 0$,
\begin{equation*}
\sum_{1 \leq i < j \leq m} \slackopt{i}{j} \leq \frac{1}{2}(m\alpha + 2m\delta-2)m\opt.
\end{equation*}

For a subset $Q$ of size $r$ drawn uniformly at random from $P$, using linearity of expectation,
\begin{equation*}
\mathbb{E}_Q[\opttotalslack] = \mathbb{E}_Q\left[\sum_{\substack{i < j \\ p_i, p_j \in Q}} \slackopt{i}{j}\right] = \sum_{1 \leq i < j \leq m} \slackopt{i}{j} \cdot \Pr[p_i \in Q, p_j \in Q].
\end{equation*}

The probability that both $p_i$ and $p_j$ are selected in a random subset of size $r$ is
\begin{equation*}
\Pr[p_i \in Q, p_j \in Q] = \frac{\binom{m-2}{r-2}}{\binom{m}{r}}.
\end{equation*}

Therefore,
\begin{align*}
\mathbb{E}_Q[\opttotalslack] &= \frac{\binom{m-2}{r-2}}{\binom{m}{r}} \sum_{1 \leq i < j \leq m} \slackopt{i}{j} \leq \frac{\binom{m-2}{r-2}}{\binom{m}{r}} \cdot \frac{1}{2}(m\alpha + 2m\delta-2)m\opt \\
&= \frac{r(r-1)}{2(m-1)} \cdot (m\alpha + 2m\delta-2)\opt \leq \binom{r}{2}(\alpha+2\delta)\opt
\end{align*}

where the last inequality holds for $\delta \leq 1-\alpha$.
\end{proof}

We formalize~\cref{property:metric.specific} as we apply it in our analysis.

\begin{sameproperty}{property:metric.specific} [{\protect\hypersetup{linkcolor=black}\nameref{property:metric.specific}}]\label{property:formal.metric.specific}
    For our framework to apply to a metric space $(\metricspace, \dist)$, we require the existence of constants $C > 0$ and $r \geq 2$, and an algorithm that, given any subset $Q \subseteq P$ of size $r$, produces a solution $y$ such that $\dist(x^*, y) \leq C \cdot \opttotalslack$.
\end{sameproperty}

The challenge is to design such an algorithm and establish this property for each 
specific metric space.

We now combine both properties to design a general framework for the $1$-median problem in the sublinear regime.
Our framework operates by establishing that if only few points in $ P$ achieve a $(2-\alpha)$-approximation for some small constant $\alpha$, then there exist many constant-size subsets $Q \subseteq P$ with small total slack. Following~\cref{property:metric.specific} the algorithm applied on such $Q$ produce a point that is close to the optimal global median.

Hence, we construct a set of candidates $\candidates$ such that with high probability one of the candidates achieves a $(2-\alpha)$-approximation. We employ a sublinear approach that was developed by Indyk~\cite{indyk1999sublinear, indyk2001high} for the discrete $1$-median problem, where the medians are restricted to a specific set of points.
In~\cite{indyk2001high} it was shown that sampling $\bigO{\delta^{-2}\log n}$ points is sufficient to identify a candidate that achieves a $(1+\delta)$-approximation to the best median in the candidate set $\candidates$ with high probability.
See~\cref{alg.space.efficient.general.framework} for the complete pseudocode.

\newcommand{\estimateSet}{S}
\begin{algorithm}[htbp]
    \SetKwInOut{KwIn}{Input}
    \SetKwInOut{KwOut}{Output}
    \KwIn{Set $ \inpset \subseteq \metricspace$, constant $ r $, $\delta>0$}
    \KwOut{Approximate 1-median solution}
    $\candidates \gets $ a set of $ \bigO{\log {n}/\distPara} $ points sampled uniformly at random from $ \inpset $ \nllabel{line.sample.C}\;
    \For{ $ i \in \{1,2,\ldots, \bigO{\log{n}/\distPara} \} $\nllabel{line.sampleq}}{
        $ \smallSubset\gets $ a set of $ r $ permutations sampled uniformly at random from $ \inpset $\;
        $y \leftarrow \texttt{localAlg}(Q)$; \tcp{compute a local solution over $Q$} \nllabel{line.local.alg}
        $\candidates \leftarrow \candidates \cup \{y\}$
    \;
    }
    Sample a set $ \estimateSet $ of $ \bigO{\log{n/\distPara^{2}}} $ permutations uniformly at random from $ \inpset $
    \nllabel{line.costcalc}\;
    \Return{$ x\eqdef \argmin_{x\in \candidates}\cost{x}{\estimateSet} $}\nllabel{line.Indyk.return}\;
    \caption{A space efficient general framework for 1-median approximation.}
    \label{alg.space.efficient.general.framework}
\end{algorithm}

\begin{theorem}\label{thm:generalframework}
    Let $ \tdist{n}$ denote the time required to compute the distance between two permutations of length $n$.  Assume 
   ~\cref{property:metric.specific} holds with constants $C$ and $r$, and there is an algorithm that, given any subset $ \smallSubset\subseteq \inpset $ of size $ r $, in time $ \ltime{n} $ produces a solution $ y $ such that $ \dist(\gmedian,y) \leq C\cdot \opttotalslack $. Then~\cref{alg.space.efficient.general.framework} returns with high probability a $(2-\frac{1}{C\binom{r}{2}+1} + \bigO{\delta})$-approximation to the $ 1 $-median problem on $ \inpset $ using $ \otilda{\ltime{n} + \tdist{n}} $ time.
\end{theorem}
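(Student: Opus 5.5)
Set $\alpha \eqdef \frac{1}{C\binom{r}{2}+1}$. The proof combines \cref{lemma:property1} with \cref{property:formal.metric.specific}, and splits into two cases according to how many input points are already good medians.

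\emph{Case 1: at least $\delta m$ points of $\inpset$ satisfy $\costp{p}\le(2-\alpha)\opt$.} Line~\ref{line.sample.C} draws $\bigO{\log n/\distPara}$ uniform samples. With probability at least $1-(1-\delta)^{\bigO{\log n/\delta}}=1-n^{-\Omega(1)}$, one of them lands in this set, so $\candidates$ contains a $(2-\alpha)$-approximate point.

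\emph{Case 2: fewer than $\delta m$ such points.} Here \cref{lemma:property1} applies (for $\delta\le 1-\alpha$) and gives
\[
\mathbb{E}_Q[\opttotalslack]\le\binom{r}{2}(\alpha+2\delta)\opt .
\]
By Markov, $\opttotalslack\le(1+\delta)\binom{r}{2}(\alpha+2\delta)\opt$ with probability $\Omega(\delta)$. Over the $\bigO{\log n/\delta}$ independent iterations of line~\ref{line.sampleq}, some $Q$ satisfies this bound w.h.p. For that $Q$ the local algorithm returns $y$ with
\[
\dist(\gmedian,y)\le C\opttotalslack\le C\binom{r}{2}(\alpha+\bigO{\delta})\opt .
\]
The triangle inequality then gives $\costp{y}\le\opt+\dist(\gmedian,y)\le\bigl(1+C\binom{r}{2}\alpha+\bigO{\delta}\bigr)\opt$.

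\emph{Balancing.} The two guarantees are $2-\alpha$ and $1+C\binom{r}{2}\alpha$. They coincide exactly when $\alpha(C\binom{r}{2}+1)=1$, which is our choice of $\alpha$. So in either case $\candidates$ contains a $(2-\alpha+\bigO{\delta})$-approximation w.h.p.

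\emph{Selection.} Invoke the sampling result of Indyk~\cite{indyk1999sublinear,indyk2001high}. With $|\estimateSet|=\bigO{\delta^{-2}\log n}$ and a union bound over the $\bigO{\log n/\delta}$ candidates, the minimizer of $\cost{x}{\estimateSet}$ has true cost within $(1+\bigO{\delta})$ of the best candidate w.h.p. Multiplying gives the claimed $(2-\alpha+\bigO{\delta})$ factor.

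\emph{Running time.} There are $\bigO{\log n/\delta}$ calls to the local algorithm, each taking $\ltime{n}$. Evaluation takes $\bigO{\log n/\delta}\cdot\bigO{\delta^{-2}\log n}$ distance computations, each taking $\tdist{n}$. The total is $\otilda{\ltime{n}+\tdist{n}}$ for constant $\delta$.

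\emph{Main obstacle.} The delicate step is the probability bookkeeping in Case 2. Markov only gives success probability about $\delta/(1+\delta)$ per trial, so the number of sampled $Q$'s must scale as $\log n/\delta$. The $\bigO{\delta}$ losses from the Markov slack, from the $2\delta$ term in \cref{lemma:property1}, and from Indyk's estimator must all be absorbed into one $\bigO{\delta}$ term. The constant hidden there depends on $C$ and $r$, which is acceptable since both are constants.
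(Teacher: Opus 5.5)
Your proof is correct and follows the paper's argument. The shared steps are: the choice $\alpha = (C\binom{r}{2}+1)^{-1}$; the split on whether at least $\delta m$ input points are already $(2-\alpha)$-approximate; \cref{lemma:property1} plus Markov over $\bigO{\log n/\delta}$ sampled sets $Q$; the triangle inequality; and Indyk's sampling for the final selection. The paper uses the Markov threshold $\binom{r}{2}(\alpha+3\delta)\opt$ where you use a $(1+\delta)$ factor, but both give success probability $\Omega(\delta)$ per trial, and your explicit balancing of $2-\alpha$ against $1+C\binom{r}{2}\alpha$ only makes explicit what the paper leaves implicit.
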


\begin{proof}

    \textbf{Approximation ratio analysis.} 
    Let $ \alpha = (C\binom{r}{2}+1)^{-1} $.
    If the number of points $p \in P$ for which $\costp{p} \leq (2-\alpha)\opt$ is at least $\delta m$, then w.h.p such a point is added to the set of candidates $\candidates$. Otherwise, by~\cref{lemma:property1}, expected total slack of $Q$ for every $Q$ sampled is at most $\binom{r}{2}(\alpha+2\delta)\opt$. 
    Using Markov inequality, in~\cref{line.sampleq}, with high probability, we sample a set $Q$ with at most $\binom{r}{2}(\alpha+3\delta)\opt$. The output $y$ of $\texttt{localAlg}(Q)$ is added to $\candidates$, furthermore, following~\cref{property:metric.specific}, $\dist (x^*,y) \leq C \opttotalslack \leq C\binom{r}{2}(\alpha+3\delta)\opt$.
    Then, by the triangle inequality,
    \begin{align*}
        \costp{y} \leq \frac{1}{m}\sum_{p \in P} (\dist(y,x^*)+\dist(x^*,p)) \leq & C\binom{r}{2}(\alpha+3\delta)\opt+\opt \\\leq (C\binom{r}{2}(\alpha+3\delta)+1)\opt.
    \end{align*}
    Using Indyk's sampling method and our selection of $\alpha$, we obtain~\cref{alg.space.efficient.general.framework} return a $(2-\alpha+\bigO{\delta})$-approximation.

    \textbf{Running time analysis.} 
   ~\cref{line.sampleq} repeats $ \bigO{\log(n)/\distPara} $ times, each executing $ \texttt{localAlg} $ in $ \ltime{n} $ time. 
   ~\cref{line.costcalc} computes the cost of $ \bigO{\log(n)/\delta} $ candidates to set $S$ of size $ \bigO{\log(n)/\delta^2} $ in $ \bigO{\log(n)^{2}\tdist{n}/\delta^3} $ time.  
    For a constant $\delta$, the total running time is then $\otilda{\ltime{n} + \tdist{n}} $.

\end{proof}

%% file: Ulam/local_algorithm.tex
\section{New Approximation Algorithm for Rank Aggregation under Ulam Metric}\label{sec:description.of.scalable.cycle.removal} 

In this section, we present our new approximation rank aggregation algorithm under Ulam metric. Using the framework we developed in~\cref{alg.space.efficient.general.framework}, it suffices to design a scalable algorithm on input $ Q $ consisting of $ r $ permutations, producing a permutation $ \offlineOut $ that is close to $ \gmedian $.

\ifarxiv
Our contribution is a new offline polynomial-time algorithm, $ \scalableCycleRemoval $, which we implement in the MPC model using $ \bigO{1} $ rounds (\cref{sec:mpc.Ulam}).
\else
Our contribution is a new offline polynomial-time algorithm, $ \scalableCycleRemoval $, which we implement in the MPC model using $ \bigO{1} $ rounds.
\fi
This algorithm satisfies a variant of~\cref{property:formal.metric.specific}, as stated in~\cref{lem:bound.gmedian.to.offlineOut}.

\begin{restatable}{lemma}{gmedianOfflineOut}\label{lem:bound.gmedian.to.offlineOut}
    Let $ Q = \{ \pi_{1}, \pi_{2}, \dots, \pi_{5} \} $ be a set of five permutations from $ \inpset $. The output $ \offlineOut $ of $ \scalableCycleRemoval $ with input $ Q $ satisfies
    \begin{align}
        \ed{\offlineOut, \gmedian} \leq 0.0009 \sum_{i=1}^{5}\ed{\gmedian, \pi_{i}} + 266 \totalslack{\gmedian}. \label{eq:scalable.cycle.removal}
    \end{align}
\end{restatable}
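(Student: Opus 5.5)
We bound $\ed{\offlineOut,\gmedian}$ by tracking which elements the algorithm fails to place consistently with an optimal alignment. The plan is to charge every loss to either the bad set or to the small rounding errors of the window decomposition.

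\textbf{Step 1: bound the bad set.} Fix optimal alignments between $\gmedian$ and each $\pi_i$. Call an element \emph{bad} if it is unaligned in at least two of these five alignments, and let $B$ be the bad set. Each pair slack $\slack{i}{j}{\gmedian}$ is at least the number of elements unaligned in both $\pi_i$ and $\pi_j$ against $\gmedian$. This holds because the common aligned elements give a common subsequence of $\pi_i,\pi_j$, so $\ed{\pi_i,\pi_j}\le \ed{\gmedian,\pi_i}+\ed{\gmedian,\pi_j}-2|U_i\cap U_j|$. Summing over pairs gives $|B| \le \totalslack{\gmedian}$ up to a constant. Every good element is aligned in at least four permutations. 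For any two good elements $c,d$, at least three permutations align both, so their majority order agrees with $\gmedian$. Hence every $3$-cycle in a tournament built from true constructive blocks contains a bad vertex.

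\textbf{Step 2: reduce from $\gmedian$ to the local median $y$.} The algorithm reconstructs an optimal median $y$ of $Q$, block by block. We use the triangle inequality $\ed{\offlineOut,\gmedian}\le \ed{\offlineOut,y}+\ed{y,\gmedian}$. The second term is handled through the metric-specific inequality $\ed{\gmedian,y}\le \totalslack{\gmedian}+\totalslack{y}$ and $\totalslack{y}\le\totalslack{\gmedian}$ (Equation~\eqref{eq:localleqglobal}). It is cleaner, however, to run the good/bad analysis with respect to whichever reference median the block decomposition uses. Either way, it remains to bound $\ed{\offlineOut,y}$ by $\rho\cdot\sum_i \ed{y,\pi_i}$ plus $O(|B|)$. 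Then $\sum_i\ed{y,\pi_i}\le\sum_i\ed{\gmedian,\pi_i}$ yields the $0.0009$ term once $\rho$ is small.

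\textbf{Step 3: per-block accounting.} Split the blocks $q$ of $y$ into three kinds. \emph{Huge} blocks have some constructive block of length more than $k|q|$, which forces $\mathrm{obj}(q)\ge(k-1)|q|$. \emph{Starved} blocks have two tiny constructive blocks, which forces at least $\Omega(|q|)$ bad elements in $q$. \emph{Regular} blocks are all the rest. For huge and starved blocks, dummy replacement costs at most $2|q|$, charged to $\mathrm{obj}(q)/(k-1)$ or to $O(|B\cap q|)$ respectively. Blocks whose objective exceeds $\kappa/\rho$ are charged to $\rho\,\mathrm{obj}(q)$.

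For a regular block, the window decomposition supplies windows with endpoint error at most $\rho\,\mathrm{obj}(q)$, so at most that many elements differ from the true constructive blocks. Block Reconstruction then discards three groups of elements. First, elements appearing in fewer than four windows are bad or lie in the shifted boundary. Second, pairs with no strict majority include a bad element in each pair. Third, the removed $3$-cycles are vertex-disjoint and each contains a bad or boundary element, so at most $3$ vertices are lost per charged element. The topological order of the survivors agrees with $y$ on all surviving good elements, because majority order equals true order for them. Hence the block differs from $q$ by $O(|B\cap q|+\rho\,\mathrm{obj}(q))$. The DP picks a composition at least as good as this witness composition. Post-processing replaces dummies, and each dummy or missing element costs $O(1)$ indel operations. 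Summing over blocks gives $\ed{\offlineOut,\gmedian}\le O(\rho)\sum_i\ed{\gmedian,\pi_i}+O(1)\cdot\totalslack{\gmedian}$. The explicit constants $0.0009$ and $266$ come from tracking $k$, $\rho$ and the multiplicity of charges.

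\textbf{Main obstacle.} The hardest step is Step 3 for regular blocks. We must show that the approximate windows, which may be shifted relative to the true constructive blocks, only perturb the cycle-removal argument by $O(\rho\,\mathrm{obj}(q))$ elements. We also need the DP's objective, computed from window summaries, to upper bound the true contribution, so that the witness composition is feasible and no element is double-counted across adjacent blocks. The first thing to try is to treat boundary-shifted elements as additional "bad" elements for that block and rerun the Step 1 reasoning verbatim.
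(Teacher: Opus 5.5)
\textbf{There is a genuine gap: your argument bounds the witness string, not the string the DP actually returns.} Your Steps 1 and 3 reconstruct the paper's analysis of a \emph{witness} string. The huge/starved/regular split, the charging of dropped vertices to bad or boundary elements, and the vertex-disjointness of the removed triangles all match the paper's lemmas. Together they bound $\sum_j \ed{\sigma_j,\lmedian[\ell_j,r_j)}$ for the string $\offlineGoodString$ assembled from the good constructive groups.

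The failure is at ``The DP picks a composition at least as good as this witness composition \dots\ Summing over blocks gives $\ed{\offlineOut,\gmedian}\le\dots$''. The DP does not, and cannot, optimize distance to $\lmedian$ or $\gmedian$, since both are unknown. It minimizes $\mathrm{BlockED}$, which is an upper bound on the \emph{cost} $\sum_{i=1}^{5}\ed{\cdot,\pi_i}$. Its output may be an entirely different string from your witness. So your per-block accounting, including your dummy count for post-processing, controls only the witness. It says nothing directly about $\ed{\offlineOut,\lmedian}$ or $\ed{\offlineOut,\gmedian}$. Near-optimal cost alone does not place a point near a particular median; that needs a separate argument.

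\textbf{The missing idea is to convert near-optimal cost into closeness via slack.} You already hold the tool. The inequality $\ed{x,z}\le\card{B_x}+\card{B_z}\le\totalslack{x}+\totalslack{z}$ holds for \emph{every} pair of permutations, so apply it to $x=\offlineOut$, $z=\gmedian$ directly, not only to $\lmedian$. For five permutations, $\totalslack{x}=4\sum_i\ed{x,\pi_i}-\sum_{i<j}\ed{\pi_i,\pi_j}$. Hence any excess cost over $\lmedian$ becomes four times that excess in slack. The paper's chain is as follows.
\begin{enumerate}
\item Your per-block witness bound gives a bound on $\mathrm{BlockED}(\offlineGoodString)$. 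Use a blockwise triangle inequality through $\lmedian[\ell_j,r_j)$ against $\pi_i[\alpha_{i,j},\beta_{i,j})$, plus twice the window endpoint shifts. These shifts contain an additive grid term $\rho n^{\phi_i-\varepsilon}$. That term is not bounded by $\rho\,\mathrm{obj}(q)$ per block, contrary to your claim; it is controlled only after summing over the $n^{\varepsilon}$ blocks.
\item DP optimality gives $\mathrm{BlockED}(\offlineIntS)\le\mathrm{BlockED}(\offlineGoodString)$.
\item Post-processing does not increase $\mathrm{BlockED}$, so $\sum_i\ed{\offlineOut,\pi_i}\le 1.000225\sum_i\ed{\lmedian,\pi_i}+66\,\totalslack{\gmedian}$.
\item The slack identity then gives $\card{B_{\offlineOut}}\le\totalslack{\offlineOut}\le\totalslack{\lmedian}+0.0009\sum_i\ed{\lmedian,\pi_i}+264\,\totalslack{\gmedian}$.
\item Finally, $\ed{\offlineOut,\gmedian}\le\card{B_{\offlineOut}}+\totalslack{\gmedian}$. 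Finish with $\totalslack{\lmedian}\le\totalslack{\gmedian}$ and $\sum_i\ed{\lmedian,\pi_i}\le\sum_i\ed{\gmedian,\pi_i}$.
\end{enumerate}
This is where $0.0009=4\cdot 0.000225$ and $266=264+1+1$ come from. Routing through $\lmedian$ with the triangle inequality, as in your Step 2, would add about $2\totalslack{\gmedian}$ more and miss the stated constant.
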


Additionally to \cref{lem:bound.gmedian.to.offlineOut}, we require the following structural lemma which, which extends Lemma~\ref{lemma:property1} with additional properties that we exploit in the Ulam case.

\begin{lemma}\label{lemma:structural}
    Let $\alpha \leq 1$ and $\delta \leq \frac{\alpha}{100}$.
    Let $P$ be a finite set of $m$ points in a metric space $(\metricspace, \dist)$, 
    and assume that $
    |\{p \in P : \costp{p} \leq (2-\alpha)\opt\}| \leq \delta m$.
    Let $Q$ be a uniformly random subset of $P$ of size $5$. 
    Then with probability at least $0.008$, both of the following hold:
    \begin{enumerate}[label=\textup{(P\arabic*)}]
        \item $\dist(q, \gmedian) \leq (1+\alpha)\opt$ for all $q \in Q$, and \label{enu.five.prop.middle}
        \item $\totalslack{\gmedian} \leq 510\alpha \cdot \opt$. \label{enu.upper.bound.total.slack}
    \end{enumerate}
\end{lemma}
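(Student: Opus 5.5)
We bound the probability of each event separately and then take a union bound over the two failure events. The plan is to make \ref{enu.five.prop.middle} fail with probability at most a constant noticeably below $1$, and \ref{enu.upper.bound.total.slack} fail with probability only slightly larger than the complement of that. The total-slack event is handled by Markov's inequality applied to \cref{lemma:property1}. The per-point event needs a separate averaging argument.

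\textbf{Step 1 (property \ref{enu.five.prop.middle}).} Let $B=\{p\in P:\dist(p,\gmedian)>(1+\alpha)\opt\}$. Since $\sum_{p}\dist(p,\gmedian)=m\opt$, Markov's inequality gives $|B|\le m/(1+\alpha)$. This alone is too weak, so we use the hypothesis that few points are good medians. For any $p$, the triangle inequality gives
\[
\costp{p}\le \dist(p,\gmedian)+\opt .
\]
Hence every $p$ with $\dist(p,\gmedian)\le(1-\alpha)\opt$ satisfies $\costp{p}\le(2-\alpha)\opt$ and so lies in $F$, where $|F|\le\delta m$. Therefore all but $\delta m$ points have $\dist(p,\gmedian)>(1-\alpha)\opt$.

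Now write $\dist(p,\gmedian)=(1-\alpha)\opt+e_p$ for $p\notin F$, so that $e_p>0$. The total $\sum_p \dist(p,\gmedian)=m\opt$ forces $\sum_{p\notin F}e_p\le (\alpha+\delta)m\opt$, up to lower-order terms. A point lies in $B$ only if $e_p>2\alpha\opt$. Markov's inequality on the $e_p$ then gives
\[
|B|\le \tfrac{(\alpha+\delta)}{2\alpha}\,m\le 0.505\,m .
\]
So a uniform random point is bad with probability at most about $0.505$, and all five points of $Q$ avoid $B$ with probability at least about $0.495^5\approx 0.0297$. The small correction for sampling without replacement only helps.

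\textbf{Step 2 (property \ref{enu.upper.bound.total.slack}).} By \cref{lemma:property1} with $r=5$,
\[
\mathbb{E}[\opttotalslack]\le 10(\alpha+2\delta)\opt\le 10.2\,\alpha\opt .
\]
Markov's inequality gives
\[
\Pr\bigl[\opttotalslack>510\alpha\opt\bigr]\le 10.2/510=0.02 .
\]
Combining the two steps, both properties hold with probability at least $0.0297-0.02\approx 0.0097\ge 0.008$.

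\textbf{Main obstacle.} The delicate step is Step 1. The probability that all five sampled points are ``middle'' points must be pushed above the $0.02$ Markov loss, and this requires the sharper bound $|B|\lesssim m/2$ rather than the naive $m/(1+\alpha)$. If the constants turn out slightly off, for instance because of the exact bookkeeping when some points of $F$ are very close to $\gmedian$, the fallback is to condition on $Q$ avoiding $B$. Under that conditioning we would bound the conditional expectation of the slack, using that the pairwise slack sum over $P\setminus B$ is at most the total slack sum. This replaces the union bound by the tighter estimate
\[
\Pr[\text{P1}]\cdot\Pr[\text{P2 fails}\mid\text{P1}] ,
\]
where the conditional failure probability is at most $10.2\alpha\opt/\bigl(510\alpha\opt\cdot\Pr[\text{P1}]\bigr)$ or similar.
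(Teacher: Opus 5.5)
Your route is the paper's. You use $\dist(p,x^*)\ge\costp{p}-\mathrm{OPT}$ to show that all but $\delta m$ points are farther than $(1-\alpha)\mathrm{OPT}$ from $x^*$, average to get that about half of $P$ is within $(1+\alpha)\mathrm{OPT}$, apply Markov to \cref{lemma:property1} with $r=5$, and take a union bound. The paper lower-bounds the ``middle'' set $F_1$ (density $\ge 0.49$). You upper-bound the far set $B$ (density $\le 0.505$), which targets (P1) exactly and leaves a bit more room.

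However, Step 1 as written has a bookkeeping gap that matters for the constant. Your Markov count on the $e_p$ ranges only over $P\setminus F$, so it bounds $|B\setminus F|$, not $|B|$. Nothing prevents points of $F$ from being far from $x^*$: a cost of at most $(2-\alpha)\mathrm{OPT}$ only forces $\dist(p,x^*)\le(3-\alpha)\mathrm{OPT}$. The crude patch $|B|\le(0.505+\delta)m$ fails when $\alpha$ is close to $1$, since it gives only $0.485^5-0.02\approx0.0068<0.008$. The points of $F$ close to $x^*$, which you flagged, are harmless.

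The repair is to keep the far points of $F$ in the average, since each contributes more than $(1+\alpha)\mathrm{OPT}$. Every $p\notin B\cup F$ has $\dist(p,x^*)>(1-\alpha)\mathrm{OPT}$, so
\[
m\,\mathrm{OPT}\;\ge\;(1+\alpha)\mathrm{OPT}\,|B|+(1-\alpha)\mathrm{OPT}\,\bigl(m-|B|-\delta m\bigr).
\]
This gives $|B|\le\frac{\alpha+\delta-\alpha\delta}{2\alpha}\,m\le 0.505\,m$, and the rest of your computation goes through. This is the paper's bookkeeping: it takes the far set to be $P\setminus(F_0\cup F_1)$, of size at least $(1-\gamma-\delta)m$, and so never has to separate out $F_0$.

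Two smaller corrections follow.

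First, sampling without replacement hurts rather than helps. For a set of density $g<1$, the probability that five distinct uniform points all land in it is $\prod_{k=0}^{4}\frac{gm-k}{m-k}\le g^5$. The loss is a factor $1-O(1/m)$, which your margin ($0.0297$ against the required $0.028$) absorbs once $m$ is moderately large. The paper's proof also silently uses $\gamma^5$.

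Second, your fallback in the form $\Pr[\text{P2 fails}\mid\text{P1}]\le\Pr[\text{P2 fails}]/\Pr[\text{P1}]$ is algebraically identical to the union bound. Only the version that restricts the pairwise slack sum to pairs inside $P\setminus B$ would improve the constant, since it loses a factor of roughly $g^{-2}$ rather than $g^{-5}$.
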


\begin{proof}
Let 
\begin{align}
    F_0 &= \{p \in P \mid \mathrm{cost}(p,P) < (2-\alpha)\opt\}, \nonumber \\ 
    F_1 &= \left\{p \in P \mid (1-\alpha)\opt \leq \dist(\gmedian, p) \leq (1+\alpha)\opt\right\}. \nonumber
\end{align}
Recall $|F_0| = \delta m$ and further denote $|F_1| = \gamma m$ for some $\gamma \in [0,1]$. Then,
\begin{align*}
    \opt &= \frac{1}{m}\sum_{p \in P} \dist(\gmedian,p) \\
    &\geq \frac{1}{m}\sum_{p \in F_1} \dist(\gmedian,p) + \frac{1}{m}\sum_{p \in P \setminus (F_0 \cup F_1)} \dist(\gmedian,p) \\
    &\geq \gamma(1-\alpha)\mathrm{OPT} + (1-\gamma-\delta)(1+\alpha)\mathrm{OPT}.
\end{align*}
Solving for $\gamma$, we obtain
\[
\gamma \geq \frac{1}{2} - \frac{(1+\alpha)\delta}{2\alpha} \geq \frac{1}{2}-\frac{1}{100},
\]
where the last inequality follows from our assumptions on $\alpha$ and $\delta$.

Consider a random subset $Q$ of five permutations chosen uniformly from $P$. Let $B$ denote the bad event that $Q$ does not satisfy both $\totalslack{\gmedian} \leq 500(\alpha+2\delta)\opt$ and $Q \subseteq F_1$. 

For the first condition, by Lemma~\ref{lemma:property1} and Markov's inequality,
\[
\mathbb{P}\!\left[\totalslack{\gmedian} > 500(\alpha+2\delta)\mathrm{OPT}\right] \leq \tfrac{1}{50}.
\]
Since $\delta \leq \frac{\alpha}{100}$, we have $500(\alpha+2\delta) \leq 510\alpha$, and therefore $
\mathbb{P}\!\left[\totalslack{\gmedian} > 510\alpha\cdot \mathrm{OPT}\right] \leq \tfrac{1}{50}$.

For the second condition, the probability that not all five sampled points are from $F_1$ is $
1-\gamma^5 \leq 1-\left(\frac{1}{2}-\frac{1}{100}\right)^5 < 0.972$.
By the union bound, $\mathbb{P}[B] \leq \frac{1}{50} + 0.972 = 0.992$.
\end{proof}

\begin{restatable}{theorem}{offlineulamresult} \label{thm:offline.main.theorem}
    For any constant $ \distPara>0 $, there is a  polynomial time algorithm that, given a set of $ m $ permutations $ P \subseteq \permutations $, with high probability computes a $ ( 2-\alpha + \bigO{\distPara}) $ approximation of $ 1 $-median under Ulam distance, where $ \alpha> 0 $ is a constant.
\end{restatable}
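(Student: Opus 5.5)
Both proofs run \cref{alg.space.efficient.general.framework} with $r=5$ and $\texttt{localAlg}=\scalableCycleRemoval$, then split into two cases on the input points. The target constant is a small $\alpha$ (the paper suggests $\alpha=0.000007$), which we fix so that the final arithmetic closes.

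\textbf{Case 1: many good input points.} Suppose more than $\delta m$ points $p\in P$ satisfy $\costp{p}\le(2-\alpha)\opt$, with $\delta\le \alpha/100$. Then the $\bigO{\log n/\delta}$ uniform samples placed into $\candidates$ in \cref{line.sample.C} contain such a point with high probability.

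\textbf{Case 2: few good input points.} Otherwise \cref{lemma:structural} applies. Each sampled $Q$ satisfies \ref{enu.five.prop.middle} and \ref{enu.upper.bound.total.slack} with probability at least $0.008$. With $\bigO{\log n}$ independent repetitions, some sampled $Q$ satisfies both with high probability.

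\textbf{Distance bound for the good $Q$.} Apply \cref{lem:bound.gmedian.to.offlineOut} to this $Q$, and use \ref{enu.five.prop.middle} in the form $\ed{\gmedian,\pi_i}=2\dist_U(\gmedian,\pi_i)\le 2(1+\alpha)\opt$. Also rewrite the total slack in the indel scale, where it is twice the Ulam slack; by \ref{enu.upper.bound.total.slack} it is at most $2\cdot 510\alpha\opt$. This gives
\[
\dist_U(\offlineOut,\gmedian)\le 0.0009\cdot 5(1+\alpha)\opt+266\cdot 510\alpha\,\opt .
\]
The factor conventions must be tracked carefully here, since the slack term carries a factor of $2$ in both the lemma and the Ulam/indel conversion. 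The triangle inequality then yields
\[
\costp{\offlineOut}\le\bigl(1+0.0045(1+\alpha)+135660\alpha\bigr)\opt .
\]

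\textbf{Choice of $\alpha$.} We need $1+0.0045(1+\alpha)+135660\alpha\le 2-\alpha$. Roughly, this requires $135661\alpha\le 0.995$, so $\alpha\approx 7\cdot 10^{-6}$ works. This is consistent with the announced constant.

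\textbf{Selecting the output.} In either case $\candidates$ contains a $(2-\alpha)$-approximate point. The Indyk sampling step in \cref{line.Indyk.return} returns a candidate within a $(1+\bigO{\delta})$ factor of the best one, for total ratio $2-\alpha+\bigO{\delta}$.

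\textbf{Running time.} The runtime is polynomial, since $\scalableCycleRemoval$ is polynomial and each Ulam distance is computable via LCS.

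\textbf{Main obstacle.} Inside this proof, the substantive content is the constant bookkeeping in the distance bound. This means consistent normalization between indel and Ulam distances and between the slacks defined in each. It also means checking that the failure probabilities ($0.992$ per trial and the Markov step) amplify correctly. All the heavy lifting is deferred to \cref{lem:bound.gmedian.to.offlineOut}, which is proved later.
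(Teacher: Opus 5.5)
Your proposal is correct and follows the paper's proof essentially step for step: \cref{alg.space.efficient.general.framework} with $r=5$ and $\scalableCycleRemoval$ as the local algorithm, the same two-case split, \cref{lemma:structural} amplified over $\bigO{\log n}$ sampled sets, then \cref{lem:bound.gmedian.to.offlineOut} with the triangle inequality, and finally Indyk's selection step. Your bookkeeping is slightly more careful than the paper's, which writes $0.0009(1+\alpha)\opt$ where summing over the five permutations actually gives $0.0045(1+\alpha)\opt$; the inequality still closes at $\alpha = 7\cdot 10^{-6}$ either way.
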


\begin{proof}[Proof of~\cref{thm:offline.main.theorem}]
    We utilize the framework of~\cref{alg.space.efficient.general.framework} with $ r=5 $ and in place of $ \texttt{localAlg} $ in~\cref{line.local.alg}, we use our algorithm $ \scalableCycleRemoval $. Our proof relies on~\cref{lemma:structural}, which describes a structural property of the input set under Ulam distance. 
    
    Fix $ \alpha = 0.000007 $.
    If the number of permutations in $ p\in \inpset $ for which $ \cost{p}{\inpset} \leq (2-\alpha)\opt $ is at least $ \delta m $, then w.h.p, such a permutation is included in $ \candidates $ in~\cref{line.sample.C} of~\cref{alg.space.efficient.general.framework}.

    Otherwise, \cref{lemma:structural} implies that a random set $ Q $ of $ 5 $ permutations satisfies properties~\ref{enu.five.prop.middle} and~\ref{enu.upper.bound.total.slack} with probability at least $ 0.008 $. As we sample $ \bigO{\log n} $ sets in~\cref{line.sampleq}, w.h.p., such a set $ Q $ is included in the samples.

    In this case, the candidate set $ \candidates $ contains a permutation
    \[
        y = \scalableCycleRemoval(\pi_{1}, \dots, \pi_{5}).
    \]
    From~\cref{lem:bound.gmedian.to.offlineOut} and our choice of $ \alpha $, we have:
    \begin{align}
        \ed{y, \gmedian} &\leq 0.0009 (1+\alpha)\opt + 266* 510 \alpha \opt \nonumber \leq 0.96 \opt. \nonumber
    \end{align}
    Consequently, by the triangle inequality:
    \begin{align}
        \cost{y}{\inpset} \leq \dfrac{1}{m}\left( \sum_{i=1}^{m} \left( \ed{y, \gmedian} + \ed{\gmedian, p_{i}} \right) \right) \leq \ed{y, \gmedian} + \opt \leq 1.96\opt. \nonumber
    \end{align}

    In all cases, the candidate set $ \candidates $ contains a permutation achieving a $ (2-\alpha) $-approximation to the optimal $ 1 $-median. Choosing a candidate from $ \candidates $ using Indyk's sampling method~\cite{indyk2001high} (lines~\ref{line.costcalc}--\ref{line.Indyk.return} of~\cref{alg.space.efficient.general.framework}) 
    returns a permutation with cost at most $(2-\alpha + \bigO{\distPara})\opt $. 

\ifarxiv
As the algorithm $ \scalableCycleRemoval $ runs in time polynomial in $ n $ (complete running time analysis provided in Section~\ref{sub:postprocessing}), it is clear that the overall algorithm runs in polynomial time.
\else
As the algorithm $ \scalableCycleRemoval $ runs in time polynomial in $ n $, it is clear that the overall algorithm runs in polynomial time.
\fi
    
\end{proof}
The remainder of this section describes the algorithm $\scalableCycleRemoval$.

\subsection{Overview of ScalableMedianReconstruct Algorithm}\label{sec:overview_of_scalablepermutationreconstruct_algoirthm} 

Our objective is to compute a good estimation of the median of $ Q $, defined as \[\lmedian = \argmin_{\genStr \in \permutations}\sum_{i=1}^{5}\ed{\pi_{i}, \genStr}. \] 
To outline the algorithm, we first fix an optimal alignment between $ \lmedian $ and each input permutation $ \pi_{i} $ for $ i=1, \dots, 5 $. Consider a partition of $ \lmedian $ into $ n^{\mpcSpaceRegime} $ disjoint contiguous blocks $ \lmedian[\ell_{j}, r_{j}) $ of size $ \blSize = n^{1-\mpcSpaceRegime} $. The fixed alignments induce a corresponding decomposition of each $ \pi_{i} $ into substrings $ \pi_{i}[\optStartPoint{i,j}, \optEndPoint{i,j}) $, such that the block $ \lmedian[\ell_{j}, r_{j}) $ aligns with $ \pi_{i}[\optStartPoint{i,j}, \optEndPoint{i,j}) $. 
We refer to each $ \pi_{i}[\optStartPoint{i}, \optEndPoint{i,j}) $ as a \emph{constructive block}, and to $ \{ \pi_{i}[\optStartPoint{i,j}, \optEndPoint{i,j}) \}_{i=1}^{5} $ as a \emph{constructive group}. 
Each block $\lmedian[\ell_{j}, r_{j})$ can be approximately reconstructed using the corresponding constructive group $\{\pi_{i}[\optStartPoint{i,j}, \optEndPoint{i,j})\}_{i=1}^{5}$. We refer to these approximations as \emph{candidate blocks}.

The challenge lies in efficiently identifying these constructive blocks and combining the candidate blocks to obtain an overall approximation of $y^*$. The $\scalableCycleRemoval$ algorithm accomplishes this through the following four steps:

\ifarxiv
\begin{itemize}
    \item\texttt{\nameref*{sub:windows_decomposition}}: Each string $\pi_i \in Q$ is decomposed into windows, which serve as \emph{constructive blocks} for building the blocks of $y^*$. We then form \emph{constructive groups} by selecting one window from each string using a structured method, rather than enumerating all combinations. The details are provided in~\cref{sub:windows_decomposition}. 
    \item\texttt{\nameref*{sub:local_aggregating_algorithm}}: For each constructive group, a \emph{candidate block} is computed using the $ \localAggregation $ algorithm (\cref{alg.local.cycle.removal}). These candidate blocks serve as primitives for the next step. The details are provided in~\cref{sub:local_aggregating_algorithm}.
    \item\texttt{\nameref*{sub:combining_blocks}}: The candidate blocks from the previous step are combined to form an \emph{intermediate string} $ \offlineIntS $ via a dynamic programming approach. The intermediate string $ \offlineIntS $ consists of $ n^{\mpcSpaceRegime} $ blocks, each being either a string created in the $ \localAggregation $ step or a string of dummy elements of length $ \blSize $. The details are provided in~\cref{sub:combining_blocks}.
    \item\texttt{\nameref*{sub:postprocessing}}: The intermediate string $ \offlineIntS $ has length at least $ n $ and contains no repeated elements, but it may not be a permutation due to missing elements and the presence of dummy elements. To transform $ \offlineIntS $ into a permutation $ \offlineOut $, we replace the dummy elements with the unused elements. The details are provided in~\cref{sub:postprocessing}. 
\end{itemize}
\else
\begin{itemize}
    \item\texttt{Windows Decomposition}: Each string $\pi_i \in Q$ is decomposed into windows, which serve as \emph{constructive blocks} for building the blocks of $y^*$. We then form \emph{constructive groups} by selecting one window from each string using a structured method, rather than enumerating all combinations.
    \item\texttt{Block Reconstruction}: For each constructive group, a \emph{candidate block} is computed using the $ \localAggregation $ algorithm. These candidate blocks serve as primitives for the next step.
    \item\texttt{Block Decomposition via Dynamic Programming}: The candidate blocks from the previous step are combined to form an \emph{intermediate string} $ \offlineIntS $ via a dynamic programming approach. The intermediate string $ \offlineIntS $ consists of $ n^{\mpcSpaceRegime} $ blocks, each being either a string created in the $ \localAggregation $ step or a string of dummy elements of length $ \blSize $.
    \item\texttt{PostProcessing}: The intermediate string $ \offlineIntS $ has length at least $ n $ and contains no repeated elements, but it may not be a permutation due to missing elements and the presence of dummy elements. To transform $ \offlineIntS $ into a permutation $ \offlineOut $, we replace the dummy elements with the unused elements.
\end{itemize}
\fi

Throughout our algorithm, we use a small constant $ \edErr >0 $, the value of which will be specified later. We denote the dummy element by $ \specialChar $.

\ifarxiv
\subsection{Windows Decomposition}\label{sub:windows_decomposition} 

In this phase, we decompose the interval $ [1,n+1) $ into a set of windows $ \windowSet $. Each window is defined as a pair of start and end positions $ \pair{ \startPoint, \endPoint } $ such that $ 1\leq \startPoint \leq \endPoint \leq n+1 $. We aggregate these windows into a set $ \windTupSet \subseteq \windowSet^{5} $ consisting of groups of five windows. Based on this set, we refer to a collection of five substrings $ \{ \pi_{i}[\startPoint{i}, \endPoint{i}) \}_{i=1}^{5} $ as a \emph{constructive group} if $ \{ \pair{ \startPoint{i}, \endPoint{i} } \}_{i=1}^{5} $ belongs to $ \windTupSet $.

We decompose $ [1,n+1) $ into $ \windowSet = \bigcup_{j=1}^{n^{\mpcSpaceRegime}}\windowSet{j} \cup \bigcup_{j=1}^{n^{\mpcSpaceRegime}}\sWindowSet{j} $, where $ \windowSet{j} $ and $ \sWindowSet{j} $ are constructed as follows.
For each $ \upperBoundED $ such that $ n^{\upperBoundED} $ belongs to the set $ \{ 1, 1+ \edErr, (1+\edErr)^{2}, \dots, (1+\edErr)^{\ceil{\log_{1+\edErr}(2n)}} \} $, we select start positions $ \startPoint $ from the range $ [ \ell_{j} - n^{\upperBoundED}, \ell_{j} + n^{\upperBoundED} ] $ that are divisible by $ \gap = \edErr n^{\upperBoundED - \mpcSpaceRegime} $. The set $ \windowSet{j} $ consists of windows starting at such positions $ \startPoint $. For each $ \startPoint $, the end positions are defined as:
\begin{align}
    \endPoint = 
    \begin{cases}
    \startPoint + \blSize\\
    \startPoint + \blSize + (1+\edErr)^{a}, & 0\leq a \leq \log_{1+\edErr}\min(\blSize/\edErr, n^{\upperBoundED}) \\
    \startPoint + \blSize - (1+\edErr)^{a}, & 0 \leq a \leq \ceilenv{\log_{1+\edErr}(\blSize - \edErr\blSize)}
    \end{cases} \label{eq:mpc_ulam_end_point_choices}
\end{align}
The set $ \sWindowSet{j} $ is a collection of degenerate windows of the form $ [\startPoint, \startPoint) $, where $ \startPoint $ is selected from the same set of start positions as above.

We now describe how to group these windows to form $ \windTupSet $, thereby inducing the constructive groups.
\paragraph{A Simple Grouping Strategy.}\label{par:a_simple_windows_decomposition}
A simple approach is to construct $ \windTupSet = \windowSet^{5} $. Observe that $ \card{\windowSet{j}} = \bigO{n^{\mpcSpaceRegime}\log^{2}n} $ and $ \card{\sWindowSet{j}} = \bigO{n^{\mpcSpaceRegime}} $. Consequently, the total size of the window set is $ \card{\windowSet} = \bigO{n^{2\mpcSpaceRegime}\log^{2}n} $, leading to $ \card{\windTupSet} = \bigO{n^{10\mpcSpaceRegime}\log^{10}n} $.

\paragraph{A Selective Grouping Strategy.}
To reduce the size of the set, we employ a more selective method. We define $ \windTupSet $ as the union of sets $ \windTupSet{1}, \windTupSet{2}, \dots, \windTupSet{ n^{\mpcSpaceRegime} } $, where each $ \windTupSet{j} $ is constructed as:
\begin{align}
    \windTupSet{j} = \windowSet{j}^{5} \cup \bigcup_{k=1}^{5} \left( \windowSet{j}^{k-1} \times \sWindowSet \times \windowSet{j}^{5-k} \right). \label{eq.def.wind.tup.set.j}
\end{align}
Here, $ \sWindowSet = \bigcup_{j} \sWindowSet{j} $ denotes the set of all degenerate windows. Since $ \card{\sWindowSet} = \bigO{n^{2\mpcSpaceRegime}} $, the size of each $ \windTupSet{j} $ is bounded by $ \bigO{n^{6\mpcSpaceRegime}\log^{8}n} $. Therefore, the total size is reduced to $ \card{\windTupSet} = \bigO{n^{7\mpcSpaceRegime}\log^{8}n} $.

\subsection{Block Reconstruction}\label{sub:local_aggregating_algorithm} 

For every constructive group $ \{  \pi_{i}[\startPoint{i}, \endPoint{i} )   \}_{i=1}^{5} $, we apply \texttt{\nameref*{sub:local_aggregating_algorithm}} algorithm (\cref{alg.local.cycle.removal}) with input $ \{ \pi_{i}[\startPoint{i}, \endPoint{i}) \}_{i=1}^{5} $ to construct a candidate block. This candidate block, along with the corresponding input intervals, is collected into a set $ C_{j} $. Formally, for each $ j \in \{1, \dots, n^{\mpcSpaceRegime}\} $,

\begin{align}
    C_{j} := &\>\>\>\>\>\left\{ \biggl( \localAggregation\Bigl(\bigl\{ \pi_{i}[\startPoint{i}, \endPoint{i}) \bigr\}_{i=1}^{5}\Bigr), \bigl\{ \pair{ \startPoint{i}, \endPoint{i} } \bigr\}_{i=1}^{5} \biggr)| \bigl\{ \pair{ \startPoint{i}, \endPoint{i} } \bigr\}_{i=1}^{5} \in \windTupSet{j} \right\}. \nonumber
\end{align}

The \texttt{\nameref*{sub:local_aggregating_algorithm}} algorithm proceeds as follows. First, we construct a directed graph $ G=(V,E) $ where the vertex set $ V $ consists of all elements appearing in at least four of the five input substrings. For any pair of distinct vertices $ a, b \in V $, we add a directed edge $ (a, b) $ to $ E $ if and only if $ a $ precedes $ b $ in a strictly greater number of substrings than $ b $ precedes $ a $. 

Next, we prune $ G $ to obtain an acyclic tournament. We first iteratively remove any pair of vertices $ ( a, b ) $ that is not connected by an edge. Once every remaining pair is connected, the graph becomes a tournament. Since a tournament contains cycles if and only if it contains triangles, we iteratively identify triangles and remove their vertices until the graph is acyclic. Finally, we extract a string $ \tp $ from the topological ordering of the remaining vertices. If the length of $\tp $ is less than $ \blSize $, we append $ \specialChar $ elements to extend it to length $ \blSize $. The complete procedure is detailed in~\cref{alg.local.cycle.removal}.

\begin{algorithm}[htbp]
    \SetKwInOut{KwIn}{Input}
    \SetKwInOut{KwOut}{Output}
    \KwIn{A constructive group $ \{ \pi_{i} \}_{i=1}^{5} $.}
    \KwOut{A candidate block $ \tp $.}
    \SetKwProg{procedure}{procedure}{}{}
    \SetKwFunction{BlockReconstruction}{BlockReconstruction}
    \procedure{\BlockReconstruction{$ \pi_{1}, \pi_{2}, \pi_{3}, \pi_{4}, \pi_{5} $}}{
        $ G \gets (V, E) $ where\\
        \hspace*{2.5em}$ V = \{ a \in \Sigma \mid a \text{ appears in at least } 4 \text{ strings} \}, $ \nllabel{line.majority.graph}
        \\
        \hspace*{2.5em}$ E = \{ (a, b) \mid \text{count}(a \text{ before } b) > \text{count}(b \text{ before } a) \} $

    \nllabel{line.starts.removing}
    \While{there exists a pair of vertices $ \{a,b\} \subseteq V $ with no edge between them}{
        $ V\gets V\setminus\{ a,b \} $
    }
    \While{there exists a triangle $ C $ in $ G $}{
        $ V\gets V\setminus V(C) $ \nllabel{line.ends.removing} 
    }
        $ \tp\gets $ topological sort of $ V $ \nllabel{line.topological.sort}

    \uIf{$ \card{\tp} < \blSize $}{
        \KwRet{ $ \tp\specialChar^{\blSize-\card{\tp}} $ } \tcp{pad with $ \specialChar $ to get a string of length $ \blSize $} \nllabel{line.return.padded.string}
    }

    \KwRet{$ \tp $} \nllabel{line.already.long.string}
}
\caption{\texttt{\nameref*{sub:local_aggregating_algorithm}} algorithm}
\label{alg.local.cycle.removal}
\end{algorithm}

\newcommand{\aggSet}{\mathcal{D}} 
\newcommand{\blEd}{\mathrm{BlockED}} 
\subsection{Block Composition via Dynamic Programming}\label{sub:combining_blocks} 
We construct an intermediate string $ \offlineIntS $ by combining substrings constructed from the step \texttt{\nameref*{sub:local_aggregating_algorithm}}. The string $ \offlineIntS $ consists of $ n^{\mpcSpaceRegime} $ blocks, where each $ j $-th block is either a string from a tuple $ \tuple{a} \in C_{j} $, or a block consisting of $ \blSize $ dummy elements. We determine the specific composition of $ \offlineIntS $ by solving a minimization problem over the search space induced by the sets $ C_{1}, C_{2}, \dots, C_{ n^{\mpcSpaceRegime} } $.
To formalize this optimization process, we first specify the domain of valid solutions.
\paragraph{Search space.}\label{par:search_space} 
We define the \emph{search space} as the set of all strings induced by a \emph{valid} sequence of tuples.
Formally, for all $ 1 \leq k \leq n^{\mpcSpaceRegime} $ and indices $ 1 \leq j_{1} < j_{2} < \dots < j_{k} \leq n^{\mpcSpaceRegime} $, consider a sequence of tuples $ (\tuple{a_{1}}, \tuple{a_{2}}, \dots, \tuple{a_{k}}) $, where $ \tuple{a_{t}} \in C_{j_{t}} $. We call this sequence \emph{valid} if for all $ 1 \leq t < k $, denoting $ \tuple{a_{t}} = ( \offlineLocalOut{t}, \{ \pair{ \startPoint{i,t}, \endPoint{i,t} } \}_{i=1}^{5} ) $ and $ \tuple{a_{t+1}} = ( \offlineLocalOut{t+1}, \{ \pair{ \startPoint{i,t+1}, \endPoint{i,t+1} } \}_{i=1}^{5} ) $, we have $ \endPoint{i,t} \leq \startPoint{i,t+1} $ for all $ i=1, \dots, 5 $.

Any such valid sequence uniquely defines a string of $ n^{\mpcSpaceRegime} $ blocks, denoted as $ \offlineIntS{\tuple{a_{1}}, \dots, \tuple{a_{k}}} = \blIntS{1}\blIntS{2}\dots\blIntS{n^{\mpcSpaceRegime}} $. The block $ \blIntS{j_{t}} $ is the string component of the tuple $ \tuple{a_{t}} = (\offlineLocalOut{t}, \{ \pair{ \startPoint{i,t}, \endPoint{i,t} } \}_{i=1}^{5}) $, i.e., $ \blIntS{j_{t}} = \offlineLocalOut{t} $. Any block $ \blIntS{h} $ where $ h \notin \{j_{1}, \dots, j_{k}\} $ is filled with $ \blSize $ $ \specialChar $ elements. This amounts to inserting $ (j_{1}-1)\blSize $ elements before block $ \blIntS{j_{1}} $, and $ (j_{t+1}-j_{t}-1)\blSize $ elements between blocks $ \blIntS{j_{t}} $ and $ \blIntS{j_{t+1}} $.

Furthermore, this valid sequence defines a specific decomposition for each permutation $ \pi_{i} $. The indices of $ \pi_{i} $ are partitioned into:
\begin{align}
    [1, \startPoint{i,1}),\ [\startPoint{i,1}, \endPoint{i,1}),\ [\endPoint{i,1}, \startPoint{i,2}),\ [\startPoint{i,2}, \endPoint{i,2}), \dots,\ [\endPoint{i,k}, n+1). \nonumber
\end{align}
This structure imposes an alignment between the constructed string $ \offlineIntS $ and each $ \pi_{i} $: the explicit blocks $ \blIntS{j_{t}} $ align with the intervals $ \pi_{i}[\startPoint{i,t}, \endPoint{i,t}) $, while the segments of $ \specialChar $ elements align with the prefix $ \pi_{i}[1,\startPoint{i,1}) $, the substrings $ \pi_{i}[\endPoint{i,t}, \startPoint{i,t+1}) $, and the suffix $ \pi_{i}[\endPoint{i,k}, n+1) $. We illustrate this alignment in~\cref{fig:induced.alignment}.

\begin{figure}
    \centering
    \includegraphics[width=\textwidth]{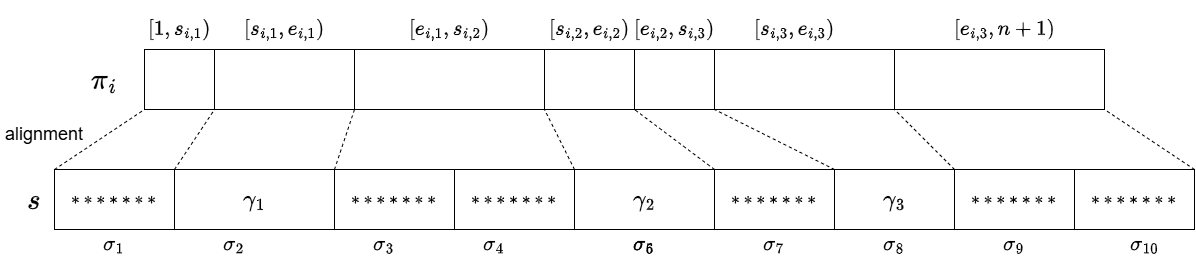}
    \caption{A valid sequence of tuples $ \tuple{a_{1}} \in C_{2} $, $ \tuple{a_{2}} \in C_{6} $, and $ \tuple{a_{3}} \in C_{8} $ induces a string $ \concateStr = \blIntS{1}\blIntS{2}\dots\blIntS{10} $. This structure defines a decomposition of each input permutation $ \pi_{i} $ and establishes the alignment between $ \concateStr $ and $ \pi_{i} $.}
    \label{fig:induced.alignment}
\end{figure}

\paragraph{Objective Function.}\label{par:objective_function}

Let $ \blEd(\concateStr) $ be the sum of distances between the blocks of $ \concateStr $ and the corresponding intervals of each $ \pi_{i} $ defined by this decomposition. Since this represents the cost of one specific alignment, $ \blEd(\concateStr) $ serves as an upper bound on the distance $ \sum_{i=1}^{5}\ed{ \concateStr, \pi_{i} } $. Formally,

\begin{align}
    \blEd(\concateStr) = &\>\>\>\>\>\sum_{t=1}^{k} \sum_{i=1}^{5} \ed{ \blIntS{j_t}, \pi_{i}[\startPoint{i,t}, \endPoint{i,t} ) } \label{eq.ed.mapped.constructed.interval} \\
                          &+ \sum_{i=1}^{5}\left((j_{1}-1)\blSize + \startPoint{i,1}- 1\right) \label{eq.ed.before.first.constructed.interval} \\ 
                          &+ \sum_{t=1}^{k-1} \left( 5(j_{t+1}-j_{t}-1)\blSize + \sum_{i=1}^{5} (\startPoint{i,t+1} - \endPoint{i,t}) \right) \label{eq.ed.inbetween.constructed.interval} \\ 
                          &+ \sum_{i=1}^{5} \biggl( (n^{\mpcSpaceRegime} - j_{k})\blSize + n + 1 - \endPoint{i,k} \biggr). \label{eq.ed.after.last.constructed.interval}
\end{align}
The components of this equation correspond to the alignment segments as follows:
\begin{itemize}
    \item The term~\eqref{eq.ed.mapped.constructed.interval} is the sum of local edit distances between each selected block $ \blIntS{j_{t}} $ and the input interval $ \pi_{i}[\startPoint{i,t}, \endPoint{i,t}) $.
    \item The term~\eqref{eq.ed.before.first.constructed.interval} is the distance between the prefix $ \blIntS{1}\dots \blIntS{j_{1}-1} $ of $ \concateStr $ (consisting of $ \specialChar $ elements) and the prefix $ \pi_{i}[1, \startPoint{i,1}) $.
    \item The term~\eqref{eq.ed.inbetween.constructed.interval} is the distance between the gap substrings $ \blIntS{j_{t}+1}\dots \blIntS{j_{t+1}-1} $ of $ \concateStr $ and the intervals $ \pi_{i}[\endPoint{i,t}, \startPoint{i,t+1}) $.
    \item The term~\eqref{eq.ed.after.last.constructed.interval} is the distance between the suffix $ \blIntS{j_{k}+1}\dots\blIntS{n^{\mpcSpaceRegime}} $ of $ \concateStr $ and the suffix $ \pi_{i}[\endPoint{i,k}, n+1) $.
\end{itemize}

Finally, the string we construct is the solution to the following optimization problem over the defined search space:
\begin{align}
    \min_{\concateStr} \blEd(\concateStr), \label{eq.ed.roundtwo.objective}
\end{align}
where the minimization is performed over all strings $\concateStr$ induced by valid sequences of tuples.

\paragraph{Dynamic Programming Solution.}\label{par:dynamic_programming_solution} 
The minimization problem~\eqref{eq.ed.roundtwo.objective} can be solved using dynamic programming as follows.

For convenience, for a tuple $ \tuple{ a } = (\offlineLocalOut, \{ \pair{ \startPoint{i}, \endPoint{i} } \}_{i=1}^{5}) $, we denote
\[
    \ed_{\tuple{ a }} = \sum_{i=1}^{5}\ed{\offlineLocalOut, \pi_{i}[\startPoint{i}, \endPoint{i})}.
\]

Let $ D[\tuple{ a }] $ be the minimum of $ \blEd( \concateStr ) $, taking over all $ \concateStr $ induced from sequence of tuples in which $ \tuple{ a } $ is the last tuple. The update rule for $ D[\tuple{ a }] $ is given by
\begin{align}
    D[\tuple{ a }] = \min\biggl\{ &\sum_{i=1}^{5}(\startPoint{i}-1) + 5(j-1)\blSize + \ed_{\tuple{ a }}, \nonumber \\
                        &\min_{\substack{\tuple{ b }\in C_{ h }, h < j\\ \tuple{ b } = \left( \offlineLocalOut', \{ \pair{ \startPoint{i}', \endPoint{i}' } \}_{i=1}^{5} \right) \\ \endPoint{i}' \leq \startPoint{i}, \forall i=1, \dots , 5}} \Bigl\{ D[\tuple{ b }] + \sum_{i=1}^{5}(\startPoint{i} - \endPoint{i}') + 5(j-h-1)\blSize + \ed_{\tuple{ a }}  \Bigr\}\biggr\} \nonumber\\
                        & + \sum_{i=1}^{5}(n + 1 -\endPoint{i}^{\tuple{ a }}) + 5 (n^{\mpcSpaceRegime} - j)\blSize. \label{eq.mpc.ulam.dp.update}
\end{align}

Finally, it is clear that the solution to~\eqref{eq.ed.roundtwo.objective} is given by
\[
        \min_{\concateStr} \blEd(\concateStr)= \min_{\substack{1\leq j \leq n^{\mpcSpaceRegime}, \\ \tuple{ a }\in C_{ j }}} D[\tuple{ a }].
\]

In order to trace back the solution, we employ an array $ P[\tuple{ a }] $ to keep track of the previous solution, that is, $ P[\tuple{ a }] = \tuple{ b } $ if $ D[\tuple{ a }] $ is computed from $ D[\tuple{ b }] $. Let $ \offlineIntS $ be the string recovered by backtracking through these pointers. It is evident from the construction of valid sequences of tuples that $ \offlineIntS $ has length at least $ n $ and contains no repetitions.
We provide a formal description for this phase in~\cref{alg.combining.blocks}.
\newcommand{\answer}{\mathtt{answer}}
\begin{algorithm}[htbp]
    \SetKwInOut{KwIn}{Input}
    \KwIn{$ C_{ 1 }, C_{ 2 }, \dots, C_{n^{\mpcSpaceRegime}} $.}
    \For{$ j=1 \to n^{\mpcSpaceRegime} $}{ \nllabel{line.dp.start.first.loop}
            \For{$ \tuple{ a }\in C_{ j } $, $ \tuple{a} = (\offlineLocalOut, \{ \pair{ \startPoint{i}, \endPoint{i} } \}_{i=1}^{5} ) $ }{
                $ \ed_{\tuple{a}} \gets \sum_{i=1}^{5}\ed{\offlineLocalOut, \pi_{i}[\startPoint{i}, \endPoint{i})} $

                $ D[\tuple{ a }] \gets \sum_{i=1}^{5} (\startPoint{i} - 1) + 5(j-1)\blSize + \ed_{\tuple{ a }}  $

                $ P[\tuple{a}] \gets \emptyset $

                \For{$ h = 1\to j-1 $}{
                    \For{$ \tuple{b}\in C_{ h } $, $ \tuple{b} = (\offlineLocalOut' , \{ \pair{ \startPoint{i}', \endPoint{i}' } \}_{i=1}^{5} $}{
                        \uIf{$ \endPoint{i}' \leq \startPoint{i} $, for all $ i=1,2, \dots 5 $, and $ D[\tuple{ a }] > D[\tuple{ b }] + \sum_{i=1}^{5} (\startPoint{i} - \endPoint{i}') + 5(j - h - 1)\blSize  + \ed_{\tuple{ a }} $ \nllabel{line.dp.inner.first.loop}
                        }{
                            $ D[\tuple{ a }] \gets D[\tuple{ b }] + \sum_{i=1}^{5} (\startPoint{i} - \endPoint{i}') + 5(j - h - 1)\blSize + \ed_{\tuple{ a }}  $   

                            $ P[\tuple{ a }] \gets \tuple{ b } $
                        }
                    }
                }
            }
        } \nllabel{line.dp.end.first.loop}

        $ \answer.D \gets \infty $

        \For{$ j=1 \to n^{\mpcSpaceRegime} $}{
            \For{$ \tuple{ a } \in C_{ j } $}{
                \uIf{$ \answer.D < D[\tuple{ a }] + \sum_{i=1}^{5} (n + 1- \endPoint{i}^{\tuple{ a }}) + 5 (n^{\mpcSpaceRegime} - j)\blSize $}{
                    $ \answer.D \gets D[\tuple{ a }] + \sum_{i=1}^{5} (n + 1- \endPoint{i}^{\tuple{ a }}) + 5 (n^{\mpcSpaceRegime} - j)\blSize $

                    $ \answer.P \gets \tuple{ a } $
                }
            }
        }

    \KwRet{$ \answer $}
    \caption{Block Composition via Dynamic Programming}
    \label{alg.combining.blocks}
\end{algorithm}

\subsection{PostProcessing}\label{sub:postprocessing} 
This step transforms the intermediate string $ \offlineIntS $, obtained from \texttt{\nameref*{sub:combining_blocks}} step, into the final permutation $ \offlineOut $.

Interestingly, a naive strategy, simply truncating $ \offlineIntS $ to length $ n $ and replacing remaining $ \specialChar $ elements with unused characters, suffices to preserve the desired approximation ratio.
However, to ensure scalability in the MPC model, we follow a different strategy. First, we scan $ \offlineIntS $ from left to right, removing $ \specialChar $ elements until the length reduces to $ n $. Let $ \offlineIntS' $ be the resulting string. Next, we replace each remaining $ \specialChar $ element in $ \offlineIntS' $ with distinct unused elements from $ \Sigma $ in increasing order. This yields the final output permutation $ \offlineOut $.

\paragraph{Runtime Analysis.}
We finish this section by arguing that $ \scalableCycleRemoval $ algorithm runs in time polynomial in $ n $. In step~\texttt{\nameref*{sub:windows_decomposition}}, constructing the set $ \windTupSet $ of size $ \bigO{n^{7\mpcSpaceRegime}\log^{8}n} $ requires $ \bigO{n^{7\mpcSpaceRegime}\log^{8}n} $ time. The subsequent steps, \texttt{\nameref*{sub:local_aggregating_algorithm}} and~\texttt{\nameref*{sub:combining_blocks}}, run in time polynomial in the size of $ \windTupSet $; since $ |\windTupSet| $ is polynomial in $ n $, these steps are also polynomial in $ n $. Finally, step~\texttt{\nameref*{sub:postprocessing}} requires only $ \bigO{n} $ time. Consequently, the overall runtime of the algorithm is polynomial in $ n $.

\fi

%% file: Ulam/analysis.tex
\section{Analysis for ScalableMedianReconstruct algorithm}\label{sec:analysis_for_the_algorithm_for_Ulam_distance} 

In this section we prove~\cref{lem:bound.gmedian.to.offlineOut}.
\gmedianOfflineOut*

Let $ \lmedian = \argmin_{y \in \permutations}\sum_{i=1}^{5}\ed{y, \pi_{i}} $ be an optimal median of $ Q $. The corner stone in our proof is to bound $\cost{\offlineOut}{Q}$ in terms of $\cost{\lmedian}{Q}$ and the total slack. 

\begin{restatable}{lemma}{estimateLocalMedian}\label{lem:mpc.estimate.local.median}
    Let $ Q = \{ \pi_{1}, \pi_{2}, \dots, \pi_{5} \} $ be a set of any five permutations from $ \inpset $. The output $ \offlineOut $ of $ \scalableCycleRemoval $ with input $ Q $ satisfies
    \begin{align*}
        \sum_{i=1}^{5} \ed{\offlineOut, \pi_{i}} \leq 1.000225\sum_{i=1}^{5}\ed{\lmedian, \pi_{i}} + 66 \totalslack{\gmedian}. \nonumber
    \end{align*}
\end{restatable}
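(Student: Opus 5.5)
Throughout, \(\ed{\cdot,\cdot}\) denotes Indel distance. The plan is to exhibit one specific valid sequence of tuples whose string \(\concateStr\) has \(\blEd(\concateStr)\) close to \(\sum_i \ed{\lmedian,\pi_i}\). Since the DP minimizes \(\blEd\), and \(\blEd\) upper bounds \(\sum_i \ed{\concateStr,\pi_i}\), it then remains to control the post-processing loss.

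\textbf{Good and bad elements.} Fix an optimal alignment of \(\lmedian\) with each \(\pi_i\). Call an element \emph{bad} if it is unaligned in at least two of the five alignments between \(\gmedian\) and \(\pi_i\). A counting argument in the style of the slack analysis bounds the number of bad elements by \(\bigO{\totalslack{\gmedian}}\). For each bad element \(e\), the pairs \((i,j)\) with \(e\) unaligned in both contribute to \(\slack{i}{j}{\gmedian}\), because \(e\) cannot lie in a common subsequence routed through \(\gmedian\). Symmetrically, good elements not aligned in \(\lmedian\) against many \(\pi_i\) are paid for by \(\sum_i\ed{\lmedian,\pi_i}\).

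\textbf{Block-by-block accounting.} For each block \(j\) of \(\lmedian\), write \(\mathrm{obj}(j)\) for the contribution of block \(j\) to \(\sum_i \ed{\lmedian,\pi_i}\), and let \(b_j\) be the number of bad elements in it. There are three cases.
\begin{itemize}
  \item Some constructive block \(\pi_i[\optStartPoint{i,j},\optEndPoint{i,j})\) is huge, or at least two are tiny. Then either \(\mathrm{obj}(j)\gtrsim\blSize\) or \(b_j\gtrsim\blSize\). Using a dummy block costs at most \(5\blSize\) plus the gap terms, which is \(\bigO{\mathrm{obj}(j)+b_j}\). We take this bound to also hold when \(\mathrm{obj}(j)\ge \blSize/\edErr\).
  \item Otherwise, the window construction (grid spacing \(\gap\), geometric end offsets, and degenerate windows in \(\sWindowSet\) for a single tiny block) yields a tuple in \(\windTupSet{j}\) whose endpoints are within \(\edErr\cdot\mathrm{obj}(j)\) of the true \(\optStartPoint{i,j},\optEndPoint{i,j}\). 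Shifting the boundaries therefore costs at most \(\bigO{\edErr}\,\mathrm{obj}(j)\). Consecutive chosen windows must satisfy \(\endPoint{i,t}\le\startPoint{i,t+1}\); I would enforce this by rounding start points up and end points down, which is valid in the error budget.
  \item For the chosen group, I then bound \(\sum_i \ed{\tp,\pi_i[\startPoint{i},\endPoint{i})}\) against \(\sum_i\ed{\lmedian[\ell_j,r_j),\pi_i[\cdot)}\). Every good element of the block appears in at least four windows and is kept with the majority orientation, except for two kinds of deletions. First, deletions from non-tournament pairs; each such pair contains a bad element. Second, deletions from triangles; each triangle contains a bad element, and the triangles are vertex-disjoint. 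Hence at most \(\bigO{b_j}\) good elements are lost.
\end{itemize}
The resulting topological order agrees with \(\lmedian\) on the good surviving elements up to \(\bigO{\mathrm{obj}(j)}\), giving a block cost of \(\mathrm{obj}(j)(1+\bigO{\edErr})+\bigO{b_j}\). Padding with \(\specialChar\) adds only what the deleted elements already cost.

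\textbf{Post-processing and the main obstacle.} Summing over blocks gives
\[
\blEd(\concateStr)\le(1+c\edErr)\sum_i\ed{\lmedian,\pi_i}+\bigO{\totalslack{\gmedian}}.
\]
Post-processing first removes \(\specialChar\) elements until the length is \(n\); removing a dummy changes each Indel distance by at most one. It then replaces the remaining dummies by unused elements; each such dummy was already unaligned, so distances do not increase. Choosing \(\edErr\) so that \(c\edErr\le 0.000225\), and tracking the constants to reach \(66\), completes the proof. The main obstacle is the third case: showing that the tournament-plus-triangle-removal output is within \(\mathrm{obj}(j)+\bigO{b_j}\) of each window. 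This needs a careful charging of elements whose majority order disagrees with \(\lmedian\). Every such inversion involves either an element misaligned in at least three windows (charged to obj) or a bad element. I would first try to do this charging pairwise, via an LCS-witness argument per \(\pi_i\).
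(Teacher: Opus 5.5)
Your skeleton matches the paper: a witness string in the DP search space built from dummy blocks and \texttt{Block Reconstruction} outputs, DP optimality, and post-processing that does not increase cost. There is a genuine gap, though. The bad set is chosen relative to the wrong median, and the step that brings in $\totalslack{\gmedian}$ is missing.

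You call an element bad relative to the alignments of $\gmedian$. But every structural claim concerns the constructive blocks $\pi_{i,j}=\pi_i[\optStartPoint{i,j},\optEndPoint{i,j})$, and these are cut out by the alignments of $\lmedian$. This breaks the argument in three places.
\begin{itemize}
\item An element good for $\gmedian$ can be unaligned against $\lmedian$ in two of the $\pi_i$. It can then be absent from two constructive blocks of its block, and so is not even a vertex of the majority graph.
\item Two such elements have majority order following $\gmedian$, not $\lmedian$.
\item In a block with two small intervals, the conclusion $b_j\gtrsim\blSize$ only holds for goodness relative to $\lmedian$. There a good element is aligned in four of five alignments, hence with one of the two small intervals, so $\card{G_j}\le 6\edErr\blSize$.
\end{itemize}
Your patch charges elements that are good for $\gmedian$ but poorly aligned against $\lmedian$ to $\sum_i\ed{\lmedian,\pi_i}$. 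That cannot give the lemma. Such elements cost a constant each in the block cost, and for an arbitrary $Q$ they can number a constant fraction of $\sum_i\ed{\lmedian,\pi_i}$. So you only get a leading coefficient $1+\Theta(1)$, not $1.000225$.

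The missing idea is to use $B_{\lmedian}$ throughout and pay for it with slack. \cref{prop:identity} applied to $\lmedian$ gives $\card{B_{\lmedian}}\le\totalslack{\lmedian}$. Since $\lmedian$ is an optimal median of $Q$, it minimizes total slack over $Q$, so $\totalslack{\lmedian}\le\totalslack{\gmedian}$ by \eqref{eq:localleqglobal}. After that, the only elements charged to $\sum_i\ed{\lmedian,\pi_i}$ are those in $\errGood{j}$ lost to window misalignment, and they carry coefficient $\bigO{\edErr}$.

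Relatedly, the obstacle you single out is not actually there. Once goodness is measured against $\lmedian$, any two good elements outside $\errGood{j}$ are each aligned in at least four of the five windows, hence jointly in at least three. Aligned elements appear in $\lmedian$'s order, so their majority edge always agrees with $\lmedian$; there are no inversions among good elements to charge.

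The paper instead proceeds in three steps.
\begin{itemize}
\item Every non-edge and every triangle contains an element of $B^H_j\cup\errGood{j}$, and the removed triangles are vertex-disjoint. This gives $\lcs(\blIntS{j},\lmedian[\ell_j,r_j))\ge\card{G_j}-3\card{B^H_j}-4\card{\errGood{j}}$.
\item A separate count bounds the number of padded $\specialChar$'s.
\item The triangle inequality through $\lmedian[\ell_j,r_j)$ bounds the block cost by $\sum_i\ed{\lmedian[\ell_j,r_j),\pi_{i,j}}+5\,\ed{\blIntS{j},\lmedian[\ell_j,r_j)}$.
\end{itemize}

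Finally, every term you charge to $\sum_i\ed{\lmedian,\pi_i}$ must carry an $\bigO{\edErr}$ coefficient, and several of your bounds do not say this.
\begin{itemize}
\item Your agreement ``up to $\bigO{\mathrm{obj}(j)}$'' and dummy-block cost ``$\bigO{\mathrm{obj}(j)+b_j}$'' need that coefficient. For large intervals it follows from the threshold $\blSize/\edErr$, which gives $\frac{5}{2}\edErr\,\mathrm{obj}(j)$.
\item The window error is not at most $\edErr\,\mathrm{obj}(j)$ per block, because the start grid has spacing $\edErr n^{\upperBoundED{i}-\mpcSpaceRegime}$ independent of the block. Only after summing over the $n^{\mpcSpaceRegime}$ blocks does it become $\bigO{\edErr}\sum_i\ed{\lmedian,\pi_i}$.
\end{itemize}
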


In~\cref{sub:notation_and_terminoloy}, we introduce the notations and terminologies used in the analysis. We provide an overview of the proof of~\cref{lem:mpc.estimate.local.median} in~\cref{sec:overview.of.the.analysis}. The formal proof is detailed in~\cref{sub:an_approximate_interval_for_the_medium_interval},~\cref{sub:property_of_the_cycle_removal_algorithm},~\cref{sub:a_good_approximate_string}, and~\cref{sub:proof_of_cref_lem_bound_gmedian_to_offlineout_}. Finally,~\cref{sec:from_an_optimal_median_of_five_permutations_to_an_optimal_median_of_the_input_set} completes the analysis by proving~\cref{lem:bound.gmedian.to.offlineOut}, using~\cref{lem:mpc.estimate.local.median}.

\subsection{Notations and Terminologies Used in the Analysis}\label{sub:notation_and_terminoloy}

For any permutation $ \perm \in \permutations $, fix an optimal alignment between $ \perm $ and each $ \pi_{i} $ in $ Q $. For each $ i=1,2, \dots , 5 $, let $ \unali{\pi_{i}}^{\perm} $ be the set of elements that are unaligned when comparing $ \perm $ with $ \pi_{i} $. Then $ \card{\unali{\pi_{i}}^{\perm}} = \ed{\perm, \pi_{i}} $.

We define a bad set $ B_{\perm} $ as the set of elements that are unaligned when comparing $ \perm $ with at least two permutations from $ Q $,
\begin{align}
    B_{\perm} = \bigcup_{1 \leq i < j \leq 5} (\unali{\pi_{i}}^{ \perm } \cap \unali{\pi_{j}}^{\perm}). \nonumber
\end{align}

Fix an optimal alignment between $ \lmedian $ and each $ \pi_{i} $ in $ Q $.
Fix a partitioning of $ \lmedian $ into $ n^{\mpcSpaceRegime} $ disjoint contiguous blocks of size $ \blSize=n^{1-\mpcSpaceRegime} $, denoted by $ \lmedian[\ell_{j}, r_{j}) $. Here $ r_{j} - \ell_{j} = n^{1-\mpcSpaceRegime} $, and $ \ell_{j+1}=r_{j} $. This alignment decomposes each $ \pi_{i} $ into $ n^{\mpcSpaceRegime} $ substrings, denoted by $ \pi_{i}[\optStartPoint{i,j}, \optEndPoint{i,j}) $, such that the block $ \lmedian[\ell_{j}, r_{j}) $ is aligned with $ \pi_{i}[\optStartPoint{i,j}, \optEndPoint{i,j}) $.

Recall that $ \edErr>0 $ is a small constant we used in the algorithm $ \scalableCycleRemoval $.

For each $ i=1,2, \dots , 5 $, let $ n^{\upperBoundED{i}} $ be a $ ( 1+\edErr ) $-approximation of $ \ed{\lmedian, \pi_{i}} $, that is, $ \ed{\lmedian, \pi_{i}} \leq n^{\upperBoundED{i}} \leq (1+\edErr) \ed{ \lmedian, \pi_{i} } $.  

Define $ \gap{i} = \edErr n^{\upperBoundED{i} - \mpcSpaceRegime} $, for simplicity, we assume that $ \gap{i} $ is an integer.

We classify the intervals $ [\optStartPoint{i,j}, \optEndPoint{i,j}) $ based on their length relative to the block size $ \blSize $. An interval is called \emph{large} if $ \optEndPoint{i,j} - \optStartPoint{i,j} > \blSize/\edErr $,  \emph{small} if $ \optEndPoint{i,j} - \optStartPoint{i,j} < \gap{i} + \edErr\blSize $ and \emph{medium} otherwise.

An interval $ [ \optStartPoint{i,j}', \optEndPoint{i,j}' ) $ is called an approximation of $ [\optStartPoint{i,j}, \optEndPoint{i,j}) $ if
\begin{equation}\label{eq:approximate.pairs}
    \begin{gathered}
        \optStartPoint{i,j} \leq \optStartPoint{i,j}' \leq \optStartPoint{i,j} + \edErr n^{\upperBoundED{i} - \mpcSpaceRegime} \\
        \optEndPoint{i,j} - \edErr n^{\upperBoundED{i} - \mpcSpaceRegime} - \edErr \ed{\lmedian[\ell_{j}, r_{j}), \pi_{i}[\optStartPoint{i,j}, \optEndPoint{i,j})} \leq \optEndPoint{i,j}' \leq \optEndPoint{i,j}.
    \end{gathered}
\end{equation}

\subsection{Overview of the analysis.} \label{sec:overview.of.the.analysis}
We analyze each step of $ \scalableCycleRemoval $ algorithm.

\paragraph{Analysis of~\nameref*{sub:windows_decomposition}}\label{par:analysis_of_nameref_sub_windows_decomposition_} 
We show that in this step, for any block $ \lmedian[\ell_{j}, r_{j}) $, there is a constructive group $ \{ \pi_{i}[\startPoint{i}, \endPoint{i}) \}_{i=1}^{5} $ such that, for each $ i=1,2, \dots ,5 $, if the interval $ [\optStartPoint{i,j}, \optEndPoint{i,j}) $ is medium, then the interval $ [\startPoint{i}, \endPoint{i}) $ satisfies the approximation condition in~\cref{eq:approximate.pairs}. The details are provided in~\cref{sub:an_approximate_interval_for_the_medium_interval}.

\paragraph{Analysis of~\nameref*{sub:local_aggregating_algorithm}}\label{par:analysis_of_protect_nameref_localaggregation_} 

Consider a target block $ \lmedian[\ell_{j}, r_{j}) $ for which all five aligned intervals $ [\optStartPoint{i,j}, \optEndPoint{i,j}) $ are medium. Recall that the previous step established the existence of a constructive group $ \{ \pi_{i}[\startPoint{i}, \endPoint{i}) \}_{i=1}^{5} $ in which each $ [\startPoint{i}, \endPoint{i}) $ approximates $ [\optStartPoint{i,j}, \optEndPoint{i,j}) $. In this step, we demonstrate that the output $ \mpcLocalOut $ generated by \texttt{\nameref*{sub:local_aggregating_algorithm}} on input $ \{ \pi_{i}[\startPoint{i}, \endPoint{i}) \}_{i=1}^{5} $ shares a long common subsequence with the target segment $ \lmedian[\ell_{j}, r_{j}) $.

A similar guarantee holds when $ [\ell_{j}, r_{j}) $ is aligned with exactly four medium aligned intervals. In this scenario, there are also constructive groups $ \{ \pi_{i}[\startPoint{i}, \endPoint{i}) \}_{i=1}^{5} $ where four of them has $ [\startPoint{i}, \endPoint{i}) $ approximates the corresponding medium aligned intervals, and the remaining one is an empty string of the form $ \pi_{i}[\startPoint{i}, \endPoint{i}) $ with $ \startPoint{i} = \endPoint{i} $. We show that the outputs of \texttt{\nameref*{sub:local_aggregating_algorithm}} on such inputs also share a long common subsequence with $ \lmedian[\ell_{j}, r_{j}) $. We refer to this specific invocation as running \texttt{\nameref*{sub:local_aggregating_algorithm}} on four medium intervals.
The detailed analysis is presented in~\cref{sub:property_of_the_cycle_removal_algorithm}.

\paragraph{Analysis of~\nameref*{sub:combining_blocks}}\label{par:analysis_of_nameref_sub_combining_blocks_} 

We identify a string $ \offlineGoodString = \blIntS{1}\blIntS{2}\dots \blIntS{n^{\mpcSpaceRegime}} $ in the search space of \texttt{\nameref*{sub:combining_blocks}}, such that $ \blEd(\offlineGoodString) $ can be bounded from above.
In this string $ \offlineGoodString $, a block $ \blIntS{j} $ is a $ \blSize $-length string of $ \specialChar $ elements if either there is at least one permutation $ \pi_{i} $ with large aligned interval $ [\optStartPoint{i,j}, \optEndPoint{i,j}) $, or there are at least two permutations $ \pi_{i}, \pi_{i'} $ with small aligned intervals $ [\optStartPoint{i,j}, \optEndPoint{i,j}) $, $ [\optStartPoint{i',j}, \optEndPoint{i',j}) $.
In the remaining case, where for the $ j $-th block there are at least $ 4 $ permutations $ \pi_{i} $ with medium aligned intervals $ [\optStartPoint{i,j}, \optEndPoint{i,j}) $, recall that we have established the existence of a constructive group $ \{ \pi_{i}[\startPoint{i}, \endPoint{i}) \}_{i=1}^{5} $ in which $ [ \startPoint{i}, \endPoint{i} ) $ approximate these $ [\optStartPoint{i,j}, \optEndPoint{i,j}) $. Specifically, $ [\startPoint{i}, \endPoint{i}) $ approximates the medium interval $ [\optStartPoint{i,j}, \optEndPoint{i,j}) $, while for any $ i $ with a small aligned interval, we have $ \startPoint{i} = \endPoint{i} $. We define $ \blIntS{j} $ as the output of \texttt{\nameref*{sub:local_aggregating_algorithm}} on input $ \{ \pi_{i}[\startPoint{i}, \endPoint{i}) \}_{i=1}^{5} $. By construction, the resulting string $ \offlineGoodString $ lies within the search space of \texttt{\nameref*{sub:combining_blocks}}.

We argue that $ \offlineGoodString $ is close to the median $ \lmedian $ of $ Q $. This is a consequence obtained from an upper bound on the sum of distance between each block $ \blIntS{j} $ and its corresponding block $ \lmedian[\ell_{j}, r_{j}) $. The argument proceeds differently for different types of blocks:
\begin{itemize}
    \item From the analysis for \texttt{\nameref*{sub:local_aggregating_algorithm}} algorithm, each block $ \blIntS{j} $ obtained from \texttt{\nameref*{sub:local_aggregating_algorithm}} algorithm shares a long common subsequence with the corresponding block $ \lmedian[\ell_{j}, r_{j}) $.
    \item For the blocks $ \lmedian[\ell_{j}, r_{j}) $ such that there is at least one permutations with the large aligned interval $ [\optStartPoint{i,j}, \optEndPoint{i,j}) $, then the distance between $ \lmedian[\ell_{j}, r_{j}) $ and any string of length $ \blSize $ is dominated by $ 2\blSize \leq \sum_{i=1}^{5}\ed{\lmedian[\ell_{j}, r_{j}), \pi_{i}[\optStartPoint{i,j}, \optEndPoint{i,j})} $.
    \item For the remaining blocks, there are at least two permutations with small aligned intervals. We show that this case does not happen too often.
\end{itemize}
The closeness between $ \offlineGoodString $ and $ \lmedian $ allows us to upper bound the sum of the distances between $ \offlineGoodString $ and each permutation $ \pi_{i} $. More precisely, we derive an upper bound for $ \blEd(\offlineGoodString) $.

The output $ \offlineIntS $ of~\cref{alg.combining.blocks} is a string achieving the minimum $ \blEd $ in the search space, and therefore, $ \blEd(\offlineIntS) \leq \blEd(\offlineGoodString) $. 

We present the detailed analysis in~\cref{sub:a_good_approximate_string}.

\paragraph{Analysis of~\nameref*{sub:postprocessing}}\label{par:analysis_of_nameref_sub_postprocessing_} 

The output in~\texttt{\nameref*{sub:postprocessing}} is obtained from $ \offlineIntS $ by removing dummy elements so that the remaining string is of length $ n $; the remaining dummy elements are replaced by unused elements. These operations do not increase the distance. Let $ \offlineOut $ be the output permutation in \texttt{\nameref*{sub:combining_blocks}}, then, $ \blEd(\offlineOut) \leq \blEd(\offlineGoodString) $. This inequality leads to the upper bound on the sum of distance $ \sum_{i=1}^{5}\ed{\offlineOut, \pi_{i}} $, completing the analysis. We analyze~\texttt{\nameref*{sub:postprocessing}} in~\cref{sub:proof_of_cref_lem_bound_gmedian_to_offlineout_}, which then conclude the proof of~\cref{lem:mpc.estimate.local.median}. 

\subsection{Analysis of~\nameref*{sub:windows_decomposition} Step}\label{sub:an_approximate_interval_for_the_medium_interval} 
\texttt{\nameref*{sub:windows_decomposition}} constructs a set $ \windTupSet := \cup_{j=1}^{n^{\mpcSpaceRegime}}\windTupSet{j} $, where $ \windTupSet{j} $ is defined as
\begin{align}
    \windTupSet{j} := \windowSet{j}^{5} \cup \bigcup_{k=1}^{5} \left( \windowSet{j}^{k-1} \times \sWindowSet \times \windowSet{j}^{5-k} \right). \tag{\ref{eq.def.wind.tup.set.j}}
\end{align}
See~\cref{sub:windows_decomposition} for the full definitions of these sets.

We begin by proving the existence of a pair $ \pair{ \startPoint, \endPoint } \in \windowSet{j} $ that approximates the interval $ [\optStartPoint{i,j}, \optEndPoint{i,j}) $ whenever the interval $ [\optStartPoint{i,j}, \optEndPoint{i,j}) $ is medium.
\begin{lemma}\label{lem:good.candidate.exists}
    Fix a permutation $ \pi_{i} $ and an index $ j $. If the interval $ [\optStartPoint{i,j}, \optEndPoint{i,j}) $ is medium, that is, $ \optStartPoint{i,j} + \gap{i} + \edErr \blSize < \optEndPoint{i,j} \leq \optStartPoint{i,j} + \blSize/\edErr  $, then $ \windowSet{j} $ contains a pair $ \pair{ \startPoint, \endPoint } = \pair{ \optStartPoint{i,j}', \optEndPoint{i,j}' } $ satisfying
    \begin{itemize}
       \item $ \optStartPoint{i,j} \leq \optStartPoint{i,j}' \leq \optStartPoint{i,j} + \edErr n^{\upperBoundED{i} - \mpcSpaceRegime} $,
        \item $ \optEndPoint{i,j} - \edErr n^{\upperBoundED{i} - \mpcSpaceRegime} - \edErr \ed{\lmedian[\ell_{j}, r_{j}), \pi_{i}[\optStartPoint{i,j}, \optEndPoint{i,j})} \leq \optEndPoint{i,j}' \leq \beta _{i,j} $.
    \end{itemize}
\end{lemma}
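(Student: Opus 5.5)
Two choices have to be made: a start position $\startPoint$ on the grid of multiples of $\gap{i}$, and an offset for the end position from the list in~\eqref{eq:mpc_ulam_end_point_choices}. Throughout, fix the scale $\upperBoundED = \upperBoundED{i}$, so that $\gap = \gap{i} = \edErr n^{\upperBoundED{i}-\mpcSpaceRegime}$; this scale is one of those enumerated, since $n^{\upperBoundED{i}} \le (1+\edErr)\ed{\lmedian,\pi_i} \le 2n$.

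\textbf{Start point.} The first thing to check is that $\optStartPoint{i,j}$ lies in the admissible range $[\ell_j - n^{\upperBoundED{i}}, \ell_j + n^{\upperBoundED{i}}]$. This holds because the alignment shift $|\optStartPoint{i,j} - \ell_j|$ is bounded by the number of unaligned symbols on one side of the boundary, which is at most $\ed{\lmedian,\pi_i} \le n^{\upperBoundED{i}}$. Then take $\optStartPoint{i,j}'$ to be the smallest multiple of $\gap{i}$ that is at least $\optStartPoint{i,j}$. This gives $\optStartPoint{i,j} \le \optStartPoint{i,j}' \le \optStartPoint{i,j} + \gap{i}$, which is the first bullet. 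There is an edge case where the rounding pushes $\optStartPoint{i,j}'$ slightly past $\ell_j + n^{\upperBoundED{i}}$. I would handle it either by noting that the range endpoints can be taken on the grid, or by using the slack in the range.

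\textbf{End point.} Set $L = \optEndPoint{i,j} - \optStartPoint{i,j}'$. Since the interval is medium, $L > \edErr\blSize > 0$, and $L \le \blSize/\edErr$. Let $d = \ed{\lmedian[\ell_j,r_j), \pi_i[\optStartPoint{i,j},\optEndPoint{i,j})}$. Because this edit distance is at least the difference in lengths, $|L - \blSize| \le |\optEndPoint{i,j}-\optStartPoint{i,j} - \blSize| + \gap{i} \le d + \gap{i}$. I would split into three cases.
\begin{enumerate}
\item If $L = \blSize$, choose $\optEndPoint{i,j}' = \optStartPoint{i,j}' + \blSize$, which is exact.
\item If $L > \blSize$, choose the largest $a$ with $(1+\edErr)^a \le L - \blSize$ and set $\optEndPoint{i,j}' = \optStartPoint{i,j}' + \blSize + (1+\edErr)^a$. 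Then $\optEndPoint{i,j}' \le \optEndPoint{i,j}$. The loss is $(L-\blSize) - (1+\edErr)^a \le \edErr (1+\edErr)^a \le \edErr(L-\blSize) \le \edErr(d + \gap{i})$. This $a$ is in the allowed range because $L - \blSize \le \blSize/\edErr$ and $L - \blSize \le d + \gap{i} \lesssim n^{\upperBoundED{i}}$.
\item If $L < \blSize$, choose the smallest $a$ with $(1+\edErr)^a \ge \blSize - L$ and set $\optEndPoint{i,j}' = \optStartPoint{i,j}' + \blSize - (1+\edErr)^a \le \optEndPoint{i,j}$. The loss is again at most $\edErr(\blSize - L) \le \edErr(d+\gap{i})$. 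The range is fine because $\blSize - L < \blSize - \edErr\blSize$. Note that $\optEndPoint{i,j}' \ge \optStartPoint{i,j}'$ can be arranged since $\blSize - (1+\edErr)^a \ge 0$.
\end{enumerate}

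\textbf{Main obstacle.} The hard step is the constant bookkeeping. The crude loss bound is $\edErr d + \edErr\gap{i}$, added on top of the $\gap{i}$ already lost in the start point. Together these give $\optEndPoint{i,j} - \optEndPoint{i,j}' \le \edErr d + \edErr \gap{i}$, since the end loss is measured from $\optEndPoint{i,j}$, not from the start. That is within the required $\gap{i} + \edErr d$, because $\edErr\gap{i} \le \gap{i}$. I expect the genuinely delicate points to be two things: verifying the upper limits on $a$ near the boundary cases, and justifying the bound $|\optStartPoint{i,j}-\ell_j| \le n^{\upperBoundED{i}}$ from the alignment. If this limit on $a$ fails near the edge, I would use the medium-interval hypothesis $L > \edErr\blSize$ to absorb the discrepancy.
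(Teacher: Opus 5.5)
Your proposal is correct and follows the paper's proof essentially step for step: round the start up to the next multiple of $g_i$ (using $|\alpha_{i,j}-\ell_j|\le \mathrm{ID}(y^*,\pi_i)\le n^{\phi_i}$), then geometrically round $|\beta_{i,j}-\alpha'_{i,j}-\kappa|\le d+g_i$ in the direction that keeps $\beta'_{i,j}\le\beta_{i,j}$, losing at most $\rho(d+g_i)\le \rho d+g_i$. The one loose spot is the upper limit on $a$ when $L>\kappa$, and it closes exactly as in the paper without the medium lower bound: since $\alpha'_{i,j}\ge\alpha_{i,j}$, you have $L-\kappa\le \beta_{i,j}-\alpha_{i,j}-\kappa\le d\le n^{\phi_i}$, which is sharper than $d+g_i\lesssim n^{\phi_i}$.
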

\newcommand{\edij}{d}
\begin{proof}
    For notational convenience, we omit the subscript $ i,j $ in the proof.

    Under the assumption that $ \ed{\lmedian, \pi} \leq n^{\upperBoundED} $, we have $ \card{\ell - \optStartPoint}\leq n^{\upperBoundED} $. Since $ \windowSet{j} $ contains pairs with start positions from $ \ell - n^{\upperBoundED} $ to $ \ell + n^{\upperBoundED} $ with gap $ \gap = \edErr n^{\upperBoundED - \mpcSpaceRegime} $, one of the start position $ \startPoint = \optStartPoint' $ satisfies $\optStartPoint \leq \optStartPoint' \leq \optStartPoint + \edErr n^{\upperBoundED - \mpcSpaceRegime} $.

    In pairs with start position $ \optStartPoint' $, their ending positions are of the form in~\cref{eq:mpc_ulam_end_point_choices}. We have the following observations:
    \begin{itemize}
        \item There exists an integer $ A_{1}\in [0, \log_{1+\edErr }\min(\frac{\blSize}{\edErr }, n^{\upperBoundED})] $ such that $ \optEndPoint\leq \optStartPoint' + \blSize + (1+\edErr )^{A_{1}} $.

            Indeed, with $ \optEndPoint $ in the assumption of the lemma, we have $ \optEndPoint \leq \optStartPoint + \frac{\blSize}{\edErr } \leq \optStartPoint' + \frac{\blSize}{\edErr } $. Additionally, as $ \ed{\lmedian, \pi} \leq n^{\upperBoundED} $, and $ \blSize = r - \ell $, it follows that $ \optEndPoint \leq \optStartPoint + n^{\upperBoundED} + \blSize \leq \optStartPoint' + \blSize + n^{\upperBoundED } $. Overall, $ \optEndPoint \leq \optStartPoint' + \blSize + \min(\frac{\blSize}{\edErr }, n^{\upperBoundED}) $, and therefore,  if we take $ A_{1} = \log_{ 1+\edErr  }\min(\frac{\blSize}{\edErr }, n^{\upperBoundED}) $, then, $ \optEndPoint \leq \optStartPoint' + \blSize + (1+\edErr )^{A_{1}} $.
        \item There exists an integer $ A_{2}\in [0, \log_{1+\edErr }(\blSize(1-\edErr ))] $ such that $ \optEndPoint \geq \optStartPoint' + \blSize - (1+\edErr )^{A_{2}} $.

            Indeed, with $ \optEndPoint $ in the assumption of the lemma, we have $ \optEndPoint \geq \optStartPoint + \gap + \edErr \blSize \geq \optStartPoint' + \edErr \blSize $. Therefore, if we take $ A_{2} = \log_{1+\edErr }(\blSize(1-\edErr )) $, then, $ \optEndPoint \geq \optStartPoint' + \blSize - (1+\edErr )^{A_{2}} $.
    \end{itemize}
    Set $ A=A_{1} $ if $ \optEndPoint \geq \optStartPoint' + \blSize $, and $ A=A_{2} $ if $ \optEndPoint \leq \optStartPoint' + \blSize $, then $ \card{\optEndPoint - \optStartPoint' - \blSize} \leq (1+\edErr )^{A} $. We can choose an integer $ a\in [0,A) $ such that $ (1+\edErr )^{a} \leq \card{\optEndPoint - \optStartPoint' - \blSize} \leq (1+\edErr )^{a+1} $. 

    Let $ \edij = \ed{\lmedian[\ell, r), \pi[\optStartPoint, \optEndPoint)} $.

    Since $ \card{\optEndPoint - \optStartPoint' - \blSize} \leq \card{\optEndPoint - \optStartPoint - \blSize} + \card{\optStartPoint - \optStartPoint'} \leq \edij + \gap $, it follows that $ (1+\edErr )^{a} \leq \edij + \gap $.

    We show a desired ending position $ \optEndPoint' $ by cases.
    \begin{itemize}
        \item $ \optEndPoint' = \optStartPoint' + \blSize $ if $ \optEndPoint = \optStartPoint' + \blSize $. Obviously, $ \optEndPoint' = \optEndPoint $.
        \item $ \optEndPoint' = \optStartPoint' + \blSize + (1+\edErr )^{a} \leq \optEndPoint $, if $ \optStartPoint'+\blSize \leq \optEndPoint $. Observe that $ \optEndPoint - \optEndPoint' = (\optEndPoint - \optStartPoint' -\blSize) - (1+\edErr )^{a} \leq (1+\edErr )^{a+1} - (1+\edErr )^{a} = (1+\edErr )^{a}\edErr  \leq \edErr \edij + \edErr \gap \leq \edErr \edij + \gap $. Hence, $ \optEndPoint - \edErr \edij - \gap \leq \optEndPoint' \leq \optEndPoint $.
        \item $ \optEndPoint' = \optStartPoint' + \blSize - (1+\edErr )^{a+1} \leq \optEndPoint $, if $ \optStartPoint'+\blSize \geq \optEndPoint $. Observe that $ \optEndPoint - \optEndPoint' = (\optEndPoint - \optStartPoint' -\blSize) + (1+\edErr )^{a+1} \leq -(1+\edErr )^{a} + (1+\edErr )^{a+1} = (1+\edErr )^{a}\edErr  \leq \edErr \edij + \edErr \gap \leq \edErr \edij + \gap $. Hence, $ \optEndPoint - \edErr \edij - \gap \leq \optEndPoint' \leq \optEndPoint $.
    \end{itemize}
    We can conclude that the pair $ \pair{ \startPoint, \endPoint } = \pair{ \optStartPoint', \optEndPoint' } $ satisfies the conditions in the lemma.
\end{proof}

Fix an interval $ [\ell_{j}, r_{j}) $ of $ \lmedian $, we partition the set of aligned intervals $ \{ [\optStartPoint{i,j}, \optEndPoint{i,j}) \}_{i=1}^{5} $ into 
\begin{itemize}
    \item $ \lInt{j} := \{[\optStartPoint{i,j}, \optEndPoint{i,j}): \optEndPoint{i,j}-\optStartPoint{i,j} > \frac{\blSize}{\edErr }  \} $ the set of intervals which are large, for each $ i=1,2, \dots, 5 $, 
    \item $ \nInt{j} := \{[\optStartPoint{i,j}, \optEndPoint{i,j}):\ \gap{i} + \edErr \blSize \leq \optEndPoint{i,j}-\optStartPoint{i,j} \leq \frac{\blSize}{\edErr }  \} $ the set of intervals which are medium, for each $ i=1,2, \dots, 5 $,
    \item $ \sInt{j} := \{[\optStartPoint{i,j}, \optEndPoint{i,j}): \optEndPoint{i,j}-\optStartPoint{i,j} < \gap{i} + \edErr \blSize  \} $ the set of intervals which are small, for each $ i=1,2, \dots, 5 $.
\end{itemize}

By~\cref{lem:good.candidate.exists} and the definition of $ \windTupSet{j} $ (\cref{eq.def.wind.tup.set.j}), the following result is immediate.
\begin{corollary}\label{cor:good.tuple.exists}
    For an interval $ [\ell_{j}, r_{j}) $ in $ \lmedian $ and the aligned intervals $ [\optStartPoint{i,j}, \optEndPoint{i,j}) $ in $ \pi_{i} $, there is a constructive group $  \{ \pi_{1}[\startPoint{1}, \endPoint{1}), \pi_{2}[\startPoint{2}, \endPoint{2}), \dots, \pi_{5}[\startPoint{5}, \endPoint{5}) \} $ such that for each $ i=1,2, \dots ,5 $: 
\begin{itemize}
    \item if $ [\optStartPoint{i,j}, \optEndPoint{i,j}) $ is medium, then $ [\startPoint{i}, \endPoint{i}) $ satisfies the approximation condition in~\cref{eq:approximate.pairs},
    \item if $ [\optStartPoint{i,j}, \optEndPoint{i,j}) $ is small, then $ \pair{ \startPoint{i}, \endPoint{i} }\in \sWindowSet $.
\end{itemize}
\end{corollary}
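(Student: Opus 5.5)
The corollary is obtained by choosing, independently for each coordinate $i$, a window according to whether $[\optStartPoint{i,j}, \optEndPoint{i,j})$ is large, medium, or small. We then check that the resulting $5$-tuple lies in $\windTupSet{j}$ as defined in~\eqref{eq.def.wind.tup.set.j}.

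For each $i$ with $[\optStartPoint{i,j}, \optEndPoint{i,j})$ medium, we invoke \cref{lem:good.candidate.exists}. It supplies a pair $\pair{\optStartPoint{i,j}', \optEndPoint{i,j}'} \in \windowSet{j}$ satisfying exactly the two inequalities of~\eqref{eq:approximate.pairs}; we set $\pair{\startPoint{i},\endPoint{i}}$ to this pair. For each $i$ with a small interval, the statement only requires some degenerate window. We pick $\startPoint{i}=\endPoint{i}$ to be any start position used in the construction of $\sWindowSet{j}$, so that $\pair{\startPoint{i},\endPoint{i}} \in \sWindowSet{j} \subseteq \sWindowSet$. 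For each $i$ with a large interval, the statement imposes no condition, so we take an arbitrary element of $\windowSet{j}$, which is nonempty because it contains the window $\pair{\startPoint, \startPoint+\blSize}$ for every admissible start $\startPoint$.

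The only real point is membership in $\windTupSet{j}$, which allows at most one coordinate from $\sWindowSet$. If at most one interval is small, the tuple lies either in $\windowSet{j}^{5}$ or in $\windowSet{j}^{k-1}\times\sWindowSet\times\windowSet{j}^{5-k}$, where $k$ is the small index, and we are done. If two or more intervals are small, the tuple as constructed is not in $\windTupSet{j}$. The fix is to use the same observation as for large intervals: any element of $\windowSet{j}$ is acceptable where no condition is imposed. However, the corollary as stated does impose a condition on small coordinates. I would therefore read it, consistently with how it is used in the overview, as concerning blocks with at most one small interval. Equivalently, I would remark that when two or more intervals are small, the block $\blIntS{j}$ is later set to dummy elements, so the corollary is only needed in the $\le 1$ small case. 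I expect making this restriction explicit to be the only obstacle; the rest is immediate from \cref{lem:good.candidate.exists} and the definition~\eqref{eq.def.wind.tup.set.j}.
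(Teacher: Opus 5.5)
Your argument is the paper's own: the paper simply declares the corollary immediate from \cref{lem:good.candidate.exists} and \eqref{eq.def.wind.tup.set.j}. Your caveat is correct and in fact sharper than you state it. By \eqref{eq:mpc_ulam_end_point_choices}, every window in $\windowSet{j}$ has length more than $\edErr^{2}\blSize>0$, so no tuple of $\windTupSet$ has two coordinates in $\sWindowSet$, and the corollary as written is false whenever $\card{\sInt{j}}\ge 2$. The case $\card{\sInt{j}}\le 1$ that you prove is the only one the paper ever uses: it invokes the corollary only to define $\anInt{j}$ when $\lInt{j}=\emptyset$ and $\card{\nInt{j}}\ge 4$.

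One remark concerns that later use rather than the corollary itself. Place each degenerate window at the least multiple of $\gap{i}$ that is at least $\optStartPoint{i,j}$, the same start the lemma's proof produces, rather than at an arbitrary start. With an arbitrary start, the validity condition $\endPoint{i,t}\le\startPoint{i,t+1}$, which is needed to put $\offlineGoodString$ in the search space, can fail.
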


\subsection{Analysis of~\nameref*{sub:local_aggregating_algorithm} Step}\label{sub:property_of_the_cycle_removal_algorithm} 

For any interval $ [\ell_{j}, r_{j}) $ satisfying $ \lInt{j} = \emptyset $ and $ \card{\nInt{j}}\geq 4 $, let $ \anInt{j} $ be the constructive group guaranteed by~\cref{cor:good.tuple.exists}. We denote the strings in $ \anInt{j} $ as $ \{ \pi_{i}[\optStartPoint{i,j}', \optEndPoint{i,j}') \}_{i=1}^{5} $.

If $ \lInt{j} = \emptyset $ and  $ \card{\nInt{j}}\geq 4 $, we show that the output of \texttt{\nameref*{sub:local_aggregating_algorithm}} algorithm (\cref{alg.local.cycle.removal}) with input $ \anInt{j} $ gives a good approximation to $ \lmedian[\ell_{j}, r_{j}) $.

We introduce the following notations. 

Denote $ \pi_{i,j} = \pi_{i}[\optStartPoint{i,j}, \optEndPoint{i,j}) $, $ \pi_{i,j}' = \pi_{i}[\optStartPoint{i,j}', \optEndPoint{i,j}') $, and $ T_{j} = \{\pi_{i,j}\}_{i=1}^{5} $, $ H_{j} = \{ \pi'_{j,i} \}_{i=1}^{5} $. 

By definition, either $ [\optStartPoint{i,j}', \optEndPoint{i,j}') $ approximates $ [\optStartPoint{i,j}, \optEndPoint{i,j}) $ if $ [\optStartPoint{i,j}, \optEndPoint{i,j}) $ is medium, or $ \optStartPoint{i,j}' = \optEndPoint{i,j}' $ if $ [\optStartPoint{i,j}, \optEndPoint{i,j}) $ is small. In the latter case, $ \pi_{i}[\optStartPoint{i,j}', \optEndPoint{i,j}') $ is an empty string.

By abuse of notation, $ \pi_{i,j}\setminus \pi_{i,j}' $ denotes the set of elements that are in $ \pi_{i,j} $ but not in $ \pi_{i,j}' $.

Let $ G_{j} = G \cap \lmedian[\ell_{j}, r_{j})$ and $ B_{j} = B_{\lmedian} \cap \lmedian[\ell_{j}, r_{j})  $ be the set of \emph{good} elements and the set of \emph{bad} elements in $ \lmedian[\ell_{j}, r_{j}) $, respectively. Also let $ \errGood{j} = G_{j}\cap \bigcup_{i=1}^{5}(\pi_{i,j}\setminus \pi'_{i,j} )$. Finally, let $ B^{H}_{j} $ be the set of bad elements in $ B_{\lmedian} $ that appear in at least four strings in $ H_{j} $.

\begin{lemma}\label{lem:local.cycle.removal}
    Consider an interval $ [\ell_{j}, r_{j}) $ where $ \lInt{j} = \emptyset $ and $ \card{ \nInt{j} } \geq 4 $. Let $ \mpcLocalOut $ be the string returned by~\cref{alg.local.cycle.removal} with input $ \anInt{j} $. Then
    \begin{align}
        \lcs(\mpcLocalOut, \lmedian[\ell_{j}, r_{j})) &\geq \card{G_{j}} - 3\card{B^{H}_{j}} - 4\card{\bar{G_{j}}}. \nonumber
    \end{align}
\end{lemma}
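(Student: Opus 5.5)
Let $S := G_{j}\setminus \errGood{j}$. I will show that the elements of $S$ removed by \cref{alg.local.cycle.removal} number at most $3\card{B^{H}_{j}}+3\card{\errGood{j}}$ (plus possibly one more $\card{\errGood{j}}$ of slack). The surviving elements of $S$ will then appear in $\mpcLocalOut$ in the same relative order as in $\lmedian[\ell_{j},r_{j})$. This gives $\lcs(\mpcLocalOut,\lmedian[\ell_{j},r_{j}))\ge \card{S} - (\text{removed})\ge \card{G_{j}}-3\card{B^H_{j}}-4\card{\errGood{j}}$.

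\textbf{Step 1 (good elements behave like the median inside $H_j$).} Take $e\in S$. Because $e$ is good, it is unaligned with respect to $\lmedian$ in at most one $\pi_i$. Since $e\in\lmedian[\ell_j,r_j)$, in every other $\pi_i$ it is aligned into the block $\pi_{i,j}$. Because $e\notin\errGood{j}$, it also lies in $\pi'_{i,j}$ for those $i$.

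If the single $i$ with small interval has the empty window, then $e$ is aligned in all the remaining indices. This forces $e$ to appear in at least four strings of $H_j$, so $e\in V$; I will check this count carefully, since it uses $\card{\nInt{j}}\ge4$.

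For two elements $e,f\in S$ with $e$ before $f$ in $\lmedian$, each of them is unaligned in at most one $\pi_i$. So in at least three of the five windows both are aligned, and the alignments are monotone, which puts $e$ before $f$. Since $e$ and $f$ both lie in $V$, each appears in at least four windows, so at most five windows contain both. A majority of at least three "e before f" against at most two "f before e" therefore gives the edge $(e,f)$. Hence $G[S]$ is a transitive tournament consistent with $\lmedian$: no pair in $S$ is ever removed as an edgeless pair, and no triangle lies entirely inside $S$.

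\textbf{Step 2 (charging removals).} The vertices of $V$ outside $S$ are either bad elements appearing in at least four strings of $H_j$, i.e.\ elements of $B^H_j$, or good elements lying in $\errGood{j}$. The remaining possibility is elements that are not in $\lmedian[\ell_j,r_j)$ at all. Such an element must be unaligned in at least two of the $\pi_i$ with respect to the block, hence bad, so it is counted in $B^H_j$ as well (bad status is global). I expect to verify this with the alignment structure.

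Each edgeless-pair removal or triangle removal deletes vertex-disjoint sets. By Step 1, each such set contains at least one vertex outside $S$. An edgeless pair removes at most $1$ element of $S$, and a triangle removes at most $2$, charged to a distinct non-$S$ vertex. So the removed elements of $S$ number at most $2(\card{B^H_j}+\card{\errGood{j}})$, comfortably within the stated bound.

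Finally, the topological order restricted to the surviving $S$ respects the transitive edges, hence the $\lmedian$ order. Padding with $\specialChar$ does not hurt the LCS.

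\textbf{Main obstacle.} The delicate step is Step 1's majority claim for the index $i$ whose window is empty or only approximates $\pi_{i,j}$. I also need to rule out order inversions among good elements lying outside their aligned positions, and to justify that foreign elements in $V$ are bad. If the charging overshoots, the extra $\card{\errGood{j}}$ and the $3\card{B^H_j}$ terms absorb the discrepancy.
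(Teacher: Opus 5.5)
Your proposal is correct and follows essentially the same route as the paper: show that $G_j\setminus\errGood{j}$ lies in $V$ and induces a transitive tournament consistent with $\lmedian$ (at least three windows contain both elements aligned in median order, versus at most two the other way), then charge each removed edgeless pair or triangle to a vertex of $B^H_j\cup\errGood{j}$. The differences are minor. You charge both pruning phases together, using that removed vertex sets are disjoint, and so get the slightly stronger loss $2(\card{B^H_j}+\card{\errGood{j}})$ instead of the paper's $3(\card{B^H_j}+\card{\errGood{j}})$. You also explicitly handle elements from other blocks via $\pi'_{i,j}\subseteq\pi_{i,j}$, which the paper leaves implicit.
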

\begin{proof}

    Consider the subset of good elements $ \appGood{j} = G_{j} \setminus \errGood{j} $. Following from the definition of $ G_{j} $ and $ \errGood{j} $, every element $ a \in \appGood{j} $ is aligned with at least $ 4 $ strings in $ H_{j} $. Thus, for two distinct elements $ a,b \in \appGood{j} $, for at least three strings $  \pi_{i,j}'\in H_{i} $, $ a $ and $ b $ are aligned between $ \lmedian[\ell_{j}, r_{j}) $ and $ \pi_{i,j}' $. Hence, by the construction, there is a directed edge from $ a $ to $ b $ if $ a $ appears before $ b $ in $ \lmedian $, for every $ a,b \in \appGood{j} $.

    Observe that, for every pair of vertices $ a,b $ in the majority graph, if there is no edge between $ a $ and $ b $, then at least one the two vertices must be either a bad element in $ B^{H}_{j} $ or a good element in $ \errGood{j} $.

    After removing each pair of vertices with no edge, the number of vertices in $ \appGood{j} $ that is removed is most $ \card{B^{H}_{j}} + \card{\errGood{j}} $. At this point, the remaining graph is a tournament. Note that, the shortest cycle in a tournament is of length $ 3 $. Moreover, in every such cycle, there is at least one vertex in $ B^{H}_{j} $ or a vertex in $ \errGood{j} $. Thus, after iteratively removing all cycles of shortest length, the number of vertices removed in $ \appGood{j} $ is at most $ 2( \card{B^{H}_{j}} + \card{\bar{G_{j}}} )$.

    The final graph is a directed acyclic graph with the number of veritces in $ \appGood{j} $ least $ \card{\appGood{j}} - 3\card{B^{H}_{j}} - 3\card{\errGood{j}} $. As $ \mpcLocalOut $ is a topological ordering of vertices in the graph, it follows that longest common subsequence between $ \mpcLocalOut $ and $ \lmedian[\ell_{j}, r_{j}) $ is greater than $\card{\appGood{j}} - 3\card{B^{H}_{j}} - 3\card{\errGood{j}} \nonumber= \card{G_{j}} - 3\card{B^{H}_{j}} - 4\card{\errGood{j}}$,
where the equality is due to $ \card{\appGood{j}} = \card{G_{j}} - \card{\errGood{j}} $.

\end{proof}

We count the number of elements $ \specialChar $'s padded in~\cref{line.return.padded.string} of~\cref{alg.local.cycle.removal}.

\begin{lemma}\label{cor:ed.padded.topo}
    Let $ \mpcLocalOut $ be obtained from~\cref{line.return.padded.string} of~\cref{alg.local.cycle.removal} over the input $ \anInt{j} $. Then, the number of dummy element $ \specialChar $'s in $ \mpcLocalOut $ is at most $ 3\card{B^{H}_{j}} + 4\card{\errGood{j}} + \card{B_{j}} $.
\end{lemma}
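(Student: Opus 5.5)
The number of padded $\specialChar$'s is exactly $\blSize - \card{\tp}$ when $\card{\tp} < \blSize$, and zero otherwise. So it suffices to show
\[
\card{\tp} \geq \blSize - 3\card{B^{H}_{j}} - 4\card{\errGood{j}} - \card{B_{j}}.
\]
I will get this by lower-bounding $\card{\tp}$ by a longest common subsequence, which is already controlled by \cref{lem:local.cycle.removal}.

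First, $\tp$ is a string of distinct elements, so $\card{\tp} \geq \lcs(\tp, \lmedian[\ell_{j}, r_{j}))$. Padding with $\specialChar$ adds no common elements, since $\specialChar$ never occurs in $\lmedian$. Hence
\[
\lcs(\tp, \lmedian[\ell_{j}, r_{j})) = \lcs(\mpcLocalOut, \lmedian[\ell_{j}, r_{j})).
\]
Applying \cref{lem:local.cycle.removal}, which holds under the standing hypotheses $\lInt{j} = \emptyset$ and $\card{\nInt{j}} \geq 4$, gives
\[
\card{\tp} \geq \card{G_{j}} - 3\card{B^{H}_{j}} - 4\card{\errGood{j}}.
\]

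Second, I relate $\card{G_{j}}$ to $\blSize$. The block $\lmedian[\ell_{j}, r_{j})$ has exactly $\blSize$ elements, and each is either good or bad, so $\card{G_{j}} + \card{B_{j}} = \blSize$. This uses that $G_{j}$ and $B_{j}$ are defined as $G \cap \lmedian[\ell_{j}, r_{j})$ and $B_{\lmedian} \cap \lmedian[\ell_{j}, r_{j})$, with $G$ the complement of $B_{\lmedian}$. Substituting $\card{G_{j}} = \blSize - \card{B_{j}}$ into the previous bound yields
\[
\blSize - \card{\tp} \leq 3\card{B^{H}_{j}} + 4\card{\errGood{j}} + \card{B_{j}},
\]
which is the claim. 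In the case $\card{\tp} \geq \blSize$ no padding occurs, so the bound holds trivially.

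The only step needing care is the first: that $\card{\tp}$ dominates the LCS and that padding cannot change it. This is immediate because $\tp$ is a topological order of a vertex set, so it has no repeated elements. I do not expect any real obstacle beyond correctly invoking \cref{lem:local.cycle.removal}.
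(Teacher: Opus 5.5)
Your proof is correct, but it takes a different route from the paper.

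\textbf{Your route.} You lower-bound $|\mathtt{tp}|$ by $\mathrm{LCS}(\mathtt{tp}, y^{*}[\ell_j,r_j))$. You then use the statement of \cref{lem:local.cycle.removal} as a black box, so the only new ingredient is $|G_j|+|B_j|=\kappa$.

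\textbf{The paper's route.} The paper counts vertices directly. The initial vertex set contains $\widetilde{G}_j\cup B^{H}_{j}$, so it has at least $|\widetilde{G}_j|+|B^{H}_{j}|\ge \kappa-|B_j|-|\bar{G}_j|$ vertices. The paper then subtracts at most $3|B^{H}_{j}|+3|\bar{G}_j|$ removed vertices, citing \cref{lem:local.cycle.removal}.

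Strictly speaking, that lemma only bounds how many vertices of $\widetilde{G}_j$ are removed, not the total number of removed vertices. Bounding the total needs one more observation. Every deleted non-adjacent pair or triangle contains a vertex outside $\widetilde{G}_j$, hence a vertex of $B^{H}_{j}\cup\bar{G}_j$. These vertices are distinct across deletions, so there are at most $|B^{H}_{j}|+|\bar{G}_j|$ deletion events, each removing at most three vertices.

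\textbf{Comparison.} Your argument sidesteps that step entirely, because the surviving good vertices alone already lower-bound $|\mathtt{tp}|$. The paper's counting, in exchange, also tracks surviving bad vertices (the extra $+|B^{H}_{j}|$ in the initial count). That could give a slightly sharper bound, but the paper discards it in the end.

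A minor point: $|\mathtt{tp}|\ge\mathrm{LCS}(\mathtt{tp},\cdot)$ holds for any string, so distinctness of the elements of $\mathtt{tp}$ is not needed there.
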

\begin{proof}
    Observe that the vertex set of the majority graph constructed in~\cref{line.majority.graph} of~\cref{alg.local.cycle.removal} contains $ \appGood{j}\cup B^{T}_{j} $. Consequently, the algorithm initializes with a graph of at least $ \card{\appGood{j}} + \card{B^{H}_{j}} $ vertices. Using the identities $ \appGood{j} = G_{j}\setminus \errGood{j} $ and $ \blSize = \card{G_{j}} + \card{B_{j}} $, it follows that the initial size is at least $ \blSize - \card{B_{j}} - \card{\errGood{j}} $.
    As in~\cref{lem:local.cycle.removal}, at most $ 3\card{B^{H}_{j}} + 3\card{\errGood{j}} $ vertices are removed by $\cref{alg.local.cycle.removal}$. 
    Thus, the number of dummy elements $ \specialChar $ in the final output $ \mpcLocalOut $ is at most
    \begin{align}
        ( 3\card{B^{H}_{j}} + 3 \card{\errGood{j}} ) + \card{\errGood{j}} + \card{B_{j}} = 3\card{B^{H}_{j}} + 4\card{\errGood{j}} + \card{B_{j}}. \nonumber
    \end{align}

\end{proof}

\subsection{Analysis of~\nameref*{sub:combining_blocks} Step}\label{sub:a_good_approximate_string}

Let $ \offlineIntS $ be the string obtained in~\texttt{\nameref*{sub:combining_blocks}} step. In this section, we give an upper bound on $ \blEd(\offlineIntS) $.

Our strategy is to first construct a string $ \offlineGoodString $ such that $ \offlineGoodString $ is in the search space of~\texttt{\nameref*{sub:combining_blocks}} and is close to $\lmedian$. This structural similarity enables us to derive an upper bound on $\blEd(\offlineGoodString)$. Consequently, this yields an upper bound on the optimal cost $\blEd(\offlineIntS)$ as well, since $\blEd(\offlineIntS) \leq \blEd(\offlineGoodString)$ by definition of $ \offlineIntS $.

Prior to that, we show some lemmas concerning the intervals $ [\ell_{j}, r_{j}) $ with $ \lInt{j} \neq \emptyset $ or $ \lInt{j} = \emptyset $ and $ \card{\sInt{j}} \geq 2 $.

\begin{lemma}\label{lem:approx.large.strings}
    For each block $ [\ell_{j}, r_{j}) $, if there exists $ i\in [5] $ such that $ \optEndPoint{i,j} - \optStartPoint{i,j} > \blSize/\edErr  $, then setting $ \mpcLocalOut $ to be an arbitrary $ \blSize $-length string satisfies that
    \begin{align}
        \ed{\mpcLocalOut, \lmedian[\ell_{j}, r_{j})} & \leq \dfrac{5}{2}\edErr \sum_{i=1}^{5}\ed{\lmedian[\ell_{j}, r_{j}), \pi_{i}[\optStartPoint{i,j}, \optEndPoint{i,j})}, \nonumber
    \end{align}
    for any $ \edErr  \leq \frac{1}{5} $.
\end{lemma}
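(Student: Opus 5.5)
The plan is to compare a trivial upper bound on the left-hand side with a lower bound on the single large term of the sum. Fix $j$ and an index $i_0$ with $\optEndPoint{i_0,j}-\optStartPoint{i_0,j} > \blSize/\edErr$. Write $d_{i} = \ed{\lmedian[\ell_{j}, r_{j}), \pi_{i}[\optStartPoint{i,j}, \optEndPoint{i,j})}$.

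\textbf{Upper bound.} Both $\mpcLocalOut$ and $\lmedian[\ell_j,r_j)$ have length $\blSize$. Deleting all of one string and inserting all of the other therefore gives
\[
\ed{\mpcLocalOut, \lmedian[\ell_{j}, r_{j})} \le 2\blSize .
\]
This holds for an arbitrary string, including one consisting of dummy elements.

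\textbf{Lower bound on $d_{i_0}$.} The indel distance between two strings is at least the difference of their lengths. Here $\pi_{i_0,j}$ has length $L > \blSize/\edErr$ and the block has length $\blSize$, so
\[
d_{i_0} \ge L-\blSize > \blSize\left(\tfrac{1}{\edErr}-1\right).
\]
All other $d_i$ are nonnegative, so $\sum_{i=1}^5 d_i \ge d_{i_0} > \blSize(1-\edErr)/\edErr$.

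\textbf{Combining.} We need $2\blSize \le \tfrac{5}{2}\edErr \cdot \blSize(1-\edErr)/\edErr = \tfrac52\blSize(1-\edErr)$. This is equivalent to $1-\edErr \ge 4/5$, that is, $\edErr\le 1/5$, which is exactly the hypothesis. This explains the constant $5/2$ and the restriction on $\edErr$.

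There is no real obstacle. The only point needing care is that the length-difference lower bound uses the unweighted indel distance. If a weighted version were intended, one would instead argue with element counts. As stated, however, $\ed{\cdot}$ is the indel distance on strings, so the argument goes through directly.
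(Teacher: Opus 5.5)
Your proposal is correct and matches the paper's proof: bound the left side by $2\blSize$, lower-bound the large term by the length difference $(1/\edErr - 1)\blSize$, and note that $2 \le \tfrac{5}{2}(1-\edErr)$ exactly when $\edErr \le 1/5$. Your remark about a weighted variant is unnecessary here, since the lemma concerns the unweighted indel distance.
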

\begin{proof}
    Let $ i $ be an index such that $ \optEndPoint{i,j} - \optStartPoint{i,j} > \blSize/\edErr  $. Observe that
    \begin{align}
        \ed{\lmedian[\ell_{j}, r_{j}), \pi_{i}[\optStartPoint{i,j}, \optEndPoint{i,j})} \geq \card{(\optEndPoint{i,j} - \optStartPoint{i,j}) - \blSize} > \dfrac{\blSize}{\edErr } - \blSize = \left(\dfrac{1}{\edErr } - 1\right)\blSize. \nonumber
    \end{align}
    As $ \mpcLocalOut $ is an arbitrary $ \blSize $-length string, it follows that, for any $ \edErr  \leq \frac{1}{5} $,
    \begin{align}
        \ed{\mpcLocalOut, \lmedian[\ell_{j}, r_{j})} \leq 2\blSize \leq \dfrac{5\edErr }{2} \left(\dfrac{1}{\edErr }-1\right)\blSize < \dfrac{5\edErr }{2}\ed{\lmedian[\ell_{j}, r_{j}), \pi_{i}[\optStartPoint{i,j}, \optEndPoint{i,j})}. \nonumber
    \end{align}
    Therefore, $ \ed{\mpcLocalOut, \lmedian[\ell_{j}, r_{j})} \leq \dfrac{5}{2}\edErr \sum_{i=1}^{5}\ed{\lmedian[\ell_{j}, r_{j}), \pi_{i}[\optStartPoint{i,j}, \optEndPoint{i,j})} $.
\end{proof}

\begin{lemma}\label{lem:at.least.two.small.strings}
    The number of blocks $ [\ell_{j}, r_{j}) $ for which $ \card{\sInt{j}}\geq 2 $ is at most $ \frac{\card{B_{\lmedian}}}{(1-6\edErr )\blSize} $.
\end{lemma}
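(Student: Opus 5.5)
We prove this by a per-block counting argument: we show that every block with $\card{\sInt{j}}\geq 2$ must contain at least $(1-6\edErr)\blSize$ elements of $B_{\lmedian}$. Since the blocks $\lmedian[\ell_j,r_j)$ are disjoint, summing over such blocks gives $(\#\text{blocks})\cdot(1-6\edErr)\blSize \leq \card{B_{\lmedian}}$, which is the claim.

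Fix such a block $j$, and let $i\neq i'$ be two indices whose aligned intervals $[\optStartPoint{i,j},\optEndPoint{i,j})$ and $[\optStartPoint{i',j},\optEndPoint{i',j})$ are small. The key observation concerns elements $a\in\lmedian[\ell_j,r_j)$ that are aligned in the fixed optimal alignment between $\lmedian$ and $\pi_i$. Such an $a$ is matched into the interval $\pi_i[\optStartPoint{i,j},\optEndPoint{i,j})$, because this interval is defined as the image of the block under the alignment. Hence the number of elements of the block aligned with $\pi_i$ is at most $\optEndPoint{i,j}-\optStartPoint{i,j} < \gap{i}+\edErr\blSize$. So at least $\blSize-\gap{i}-\edErr\blSize$ elements of the block lie in $\unali{\pi_i}^{\lmedian}$, and the same holds for $i'$. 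Inclusion–exclusion within the block then gives
\[
\card{\unali{\pi_i}^{\lmedian}\cap\unali{\pi_{i'}}^{\lmedian}\cap \lmedian[\ell_j,r_j)} \geq \blSize-2\edErr\blSize-\gap{i}-\gap{i'},
\]
and every element of this intersection lies in $B_{\lmedian}$ by definition.

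It remains to show $\gap{i}\leq 2\edErr\blSize$, which would give the per-block bound $(1-6\edErr)\blSize$. We have $\gap{i}=\edErr n^{\upperBoundED{i}-\mpcSpaceRegime}\leq \edErr(1+\edErr)\ed{\lmedian,\pi_i}/n^{\mpcSpaceRegime}$, so this reduces to bounding $\ed{\lmedian,\pi_i}$ by roughly $n=\blSize n^{\mpcSpaceRegime}$. The trivial bound on the indel distance is only $2n$, giving $\gap{i}\leq 2\edErr(1+\edErr)\blSize$; this slightly exceeds $2\edErr\blSize$ and is the main obstacle for the exact constant $6$.

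I would handle it in one of two ways. The first is to use the total of the two gaps rather than each separately: note that the elements unaligned with $\pi_i$ number $\ed{\lmedian,\pi_i}/2$ in each of $\lmedian$ and $\pi_i$, so $\ed{\lmedian,\pi_i}\leq 2n$, with the factor $(1+\edErr)$ absorbed by the $\ceil{\cdot}$ rounding conventions. The second is to sharpen the count of aligned elements: the number of block elements aligned with $\pi_i$ is at most $\min(\blSize,\optEndPoint{i,j}-\optStartPoint{i,j})$. If the constant still falls short, I would accept $(1-c\edErr)$ for a slightly larger constant $c$, since only the form of the bound matters downstream. Either way, summing the per-block bound over disjoint blocks finishes the proof.
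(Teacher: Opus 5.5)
Your argument is the paper's argument read from the complementary side. The paper notes that each good element of the block is unaligned with at most one $\pi_i$, hence is aligned into one of the two small intervals, so $\card{G_j}<\gap{i}+\gap{i'}+2\edErr\blSize$; this is exactly your inclusion--exclusion count of the elements unaligned with both, which are bad. Fixing exactly two small indices, as you do, is the clean way to obtain the constant $6$; the paper's sum over all of $\sInt{j}$, read literally, would give a larger constant.

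The step $\gap{i}\le 2\edErr\blSize$ is simply asserted in the paper, which tacitly takes $n^{\upperBoundED{i}}\le 2n$, so your worry is legitimate. With the literal grid you only get $\gap{i}\le 2\edErr(1+\edErr)\blSize$. Appealing to the ceiling does not remove this factor, since the ceiling is what lets the top grid value exceed $2n$. Either of the following closes the step with no material effect downstream:
\begin{itemize}
\item Adopt the convention $n^{\upperBoundED{i}}\le 2n$, for instance by adding $2n$ to the grid. This is harmless, because the window analysis only needs $\ed{\lmedian,\pi_i}\le n^{\upperBoundED{i}}\le(1+\edErr)\ed{\lmedian,\pi_i}$.
\item Settle for the per-block bound $(1-6\edErr-4\edErr^{2})\blSize$.
\end{itemize}
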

\begin{proof}
    By definition of $ G_{j} $, every element $ a\in G_{j} $ is aligned between $ \lmedian[\ell_{j}, r_{j}) $ with at least $ 4 $ substrings among $ \pi_{i,j} $. Since $ \card{\sInt{j}}\geq 2 $ and hence $ \card{\nInt{j}}\leq 3 $, every such $ a $ is aligned with at least one substring in $ \sInt{j} $. Therefore, each element in $ G_{j} $ corresponds to at least one element in the substrings in $ \sInt{j} $. It follows that $ \card{G_{j}}\leq \sum_{[\optStartPoint{i,j},\optEndPoint{i,j}) \in \sInt{j}}(\optEndPoint{i,j}-\optStartPoint{i,j}) $. For each interval $ [\optStartPoint{i,j}, \optEndPoint{i,j}) \in \sInt{j} $, by definition, $ \optEndPoint{i,j}-\optStartPoint{i,j} < \gap{i} + \edErr \blSize \leq 3\edErr \blSize $, as $ \gap{i} = \edErr n^{\upperBoundED{i}-\mpcSpaceRegime} \leq 2\edErr \blSize $. Thus, $ \card{G_{j}}\leq 6\edErr \blSize $, and
    \begin{align}
        \card{\lmedian[\ell_{j}, r_{j})\setminus G_{j}} \geq \blSize - 6\edErr \blSize. \nonumber
    \end{align}

    Observe that $ \lmedian[\ell_{j}, r_{j})\setminus G_{j} \subseteq B_{\lmedian} $, and $ \cup_{j}\lmedian[\ell_{j}, r_{j})\setminus G_{j} = B_{\lmedian} $, hence, the number of blocks $ [\ell_{j}, r_{j}) $ with $ \card{\sInt{j}}\geq 2 $ is at most $\dfrac{\card{B_{\lmedian}}}{(1-6\edErr )\blSize} $.
\end{proof}

We partition the set of block indexes $ \{ 1,2, \dots, n^{\mpcSpaceRegime} \} $ into four subsets:
    \begin{itemize}
        \item $ \mathcal{L} = \{ j: \lInt{j} \neq \emptyset \} $,
        \item $ \mathcal{N} = \{ j: \lInt{j} = \emptyset,\ \card{\nInt{j}} = 5 \} $,
        \item $ \mathcal{N}' = \{ j: \lInt{j} = \emptyset,\ \card{\nInt{j}} = 4 \} $,
        \item $ \mathcal{S} = \{ j: \lInt{j} = \emptyset,\ \card{\sInt{j}}\geq 2 \} $ (note that $ \mathcal{S} = \{ j: \lInt{j} = \emptyset,\ \card{\lInt{j}}\leq 3 \} $).
    \end{itemize}
    
    \paragraph{A Good Estimate String.} We define a string $ \offlineGoodString = \blIntS{1}\blIntS{2}\dots \blIntS{n^{\mpcSpaceRegime}} $ consisting of $ n^{\mpcSpaceRegime} $ blocks constructed as follows. If $ j \in \mathcal{L} \cup \mathcal{S} $, then $ \blIntS{j} $ is a string of $ \specialChar $ elements of length $ \blSize $. If $ j \in \mathcal{N} \cup \mathcal{N}' $, then $ \blIntS{j} $ is the output of the \texttt{\nameref*{sub:local_aggregating_algorithm}} algorithm (\cref{alg.local.cycle.removal}) on input $ \anInt{j} $. \Cref{lem:bounding.offlineGoodString.blocks} below establishes a bound on the indel distance between $ \offlineGoodString $ and $ \lmedian $. Prior to proving~\cref{lem:bounding.offlineGoodString.blocks}, we demonstrate the following auxiliary lemma.

\begin{lemma}\label{lem:diff.offlineGoodString.medium.goodSymbols}
    \begin{align}
        \sum_{j\in \mathcal{N}\cup \mathcal{N'}} ( \card{\blIntS{j}} - \card{G_{j}} ) \leq 5\card{B_{\lmedian}} + 4 \sum_{j\in \mathcal{{N} \cup \mathcal{N'}}} \card{\errGood{j}}. \nonumber
    \end{align}
\end{lemma}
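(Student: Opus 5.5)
We bound each summand $\card{\blIntS{j}} - \card{G_{j}}$ for $j \in \mathcal{N}\cup\mathcal{N}'$ by a local quantity and then sum over $j$. Two cases arise, according to whether $\blIntS{j}$ was returned padded in~\cref{line.return.padded.string} or unpadded in~\cref{line.already.long.string} of~\cref{alg.local.cycle.removal}.

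\textbf{Padded case.} Here $\card{\blIntS{j}} = \blSize = \card{G_{j}} + \card{B_{j}}$, so $\card{\blIntS{j}} - \card{G_{j}} = \card{B_{j}}$, and the bound is immediate.

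\textbf{Unpadded case.} Here $\blIntS{j} = \tp$ is a topological order of the surviving vertices. Each such vertex appears in at least four strings of $H_{j} = \anInt{j}$, so it is either an element of $G_{j}$ or an element outside $G_{j}$. We have
\[
\card{\blIntS{j}} - \card{G_{j}} \le \card{V(\tp)\setminus G_{j}} .
\]
A vertex outside $G_{j}$ comes from one of two sources. It may be a bad element of $\lmedian[\ell_{j},r_{j})$, which is counted in $\card{B_{j}}$. Otherwise it is an element lying in some other block of $\lmedian$ (or absent from the aligned part) but appearing in at least four windows $\pi'_{i,j}$. The task is to charge these foreign elements. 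If $a$ appears in a window $\pi'_{i,j} \subseteq \pi_{i,j}$ (the approximation condition gives $[\optStartPoint{i,j}',\optEndPoint{i,j}') \subseteq [\optStartPoint{i,j},\optEndPoint{i,j})$, and small windows are empty), then $a$ lies in $\pi_{i,j}$. Either $a$ is unaligned in $\pi_{i}$, or it is aligned to a position of $\lmedian[\ell_{j},r_{j})$, because alignments respect the block decomposition. If $a \notin \lmedian[\ell_{j},r_{j})$, then $a$ is unaligned between $\lmedian$ and each of the at least four $\pi_{i}$ in which it appears in $\pi'_{i,j}$. Hence $a \in B_{\lmedian}$, and indeed $a \in B^{H}_{j}$.

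\textbf{Disjointness and summation.} Each foreign element $a$ lies in $\pi_{i,j}$ for at least four indices $i$. The intervals $\pi_{i,j}$ over $j$ are disjoint for a fixed $i$. Consequently, $a$ can be counted as foreign in at most $5$ blocks $j$ in total: summing over $i$, the total number of occurrences of $a$ is at most $5$, and each foreign count uses at least four of them, so in fact at most one block. Summing over $j$ therefore gives
\[
\sum_{j}\card{V(\tp_j)\setminus G_{j}} \le \sum_j \card{B_{j}} + (\text{foreign bad elements}) \le 2\card{B_{\lmedian}}.
\]
This is already within the claimed $5\card{B_{\lmedian}}$. The slack $4\sum\card{\errGood{j}}$ is kept to absorb any looser accounting, for instance if I instead bound the unpadded case through~\cref{cor:ed.padded.topo}-style vertex counts.

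\textbf{Main obstacle.} The step I expect to be delicate is the disjointness argument. It must show rigorously that an element of $B_{\lmedian}$ is charged in only a bounded number of blocks across both the $\card{B_{j}}$ terms and the foreign terms. This requires using that the windows $\pi'_{i,j}$ lie inside $\pi_{i,j}$, which holds by the approximation condition~\eqref{eq:approximate.pairs}. If that containment fails for the empty windows drawn from $\sWindowSet$, those windows are empty and contribute nothing. I would state this bookkeeping carefully and settle for the constant $5$, charging each bad element at most once per permutation $\pi_{i}$.
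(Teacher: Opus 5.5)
Your proof is correct, and it takes a different, more economical route than the paper. The paper writes $\card{\blIntS{j}} = \card{\tp_{j}} + (\blSize-\card{\tp_{j}})$ for padded blocks and bounds the two parts separately:
\begin{itemize}
\item the padding via \cref{cor:ed.padded.topo}, which rests on the cycle-removal counting of \cref{lem:local.cycle.removal} and is where the terms $3\card{B^{H}_{j}}+4\card{\errGood{j}}$ enter;
\item $\sum_{j}\card{\tp_{j}}$ via $V(\tp_{j})\subseteq G_{j}\cup B^{H}_{j}$, together with disjointness of the sets $B^{H}_{j}$.
\end{itemize}
You instead note that a padded block has length exactly $\blSize=\card{G_{j}}+\card{B_{j}}$, so it contributes exactly $\card{B_{j}}$. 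The paper's split is lossy for exactly this reason: $(\card{\tp_{j}}-\card{G_{j}})+(\blSize-\card{\tp_{j}})$ collapses to $\card{B_{j}}$ anyway.

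You then need the containment and disjointness argument only for unpadded blocks, and there you justify two facts the paper only asserts:
\begin{itemize}
\item A vertex not in $G_{j}$ is bad. In particular, a vertex outside block $j$ is unaligned in each of the at least four $\pi_{i}$ whose window contains it. This uses $\pi'_{i,j}\subseteq\pi_{i,j}$ and the fact that aligned elements of $\pi_{i,j}$ map into block $j$.
\item A bad element is counted as foreign in at most one block, which is the disjointness of the $B^{H}_{j}$. Each element occurs at most once in the union of the $\pi_{i,j}$ for fixed $i$, and $4+4>5$.
\end{itemize}

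The outcome is the stronger bound $2\card{B_{\lmedian}}$, with no $\errGood{j}$ term and no dependence on the cycle-removal analysis. So the hedging in your last paragraph about settling for the constant $5$ is unnecessary; the disjointness sentence could also be stated more cleanly. If fed into \cref{lem:bounding.offlineGoodString.blocks}, your bound would improve the intermediate estimate $12\card{B_{\lmedian}}+12\sum_{j}\card{\errGood{j}}$ there to $9\card{B_{\lmedian}}+8\sum_{j}\card{\errGood{j}}$.
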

\begin{proof}
    For each $ j\in \mathcal{N}\cup \mathcal{N'} $, $ \blIntS{j} $ is obtained from either~\cref{line.return.padded.string} or~\cref{line.already.long.string} of~\cref{alg.local.cycle.removal}. Let $ \tp_{j} $ be the string obtained by the topological sorting performed at~\cref{line.topological.sort} of~\cref{alg.local.cycle.removal}. Then either $ \blIntS{j} = \tp_{j}\specialChar^{\blSize-\card{\tp_{j}}} $ (if $ \card{\tp_{j}} < \blSize $), or $ \blIntS{j} = \tp_{j} $. Hence,
    \begin{align}
        \sum_{j\in \mathcal{N} \cup \mathcal{N}'} \card{\blIntS{j}} &= \sum_{j\in \mathcal{N} \cup \mathcal{N'}} \card{\tp_{j}} + \sum_{j \in \mathcal{N} \cup \mathcal{N'}: \card{\tp_{j}}< \blSize} \blSize - \card{\tp_{j}}. \nonumber
    \end{align}
    For each $ j\in \mathcal{N} \cup \mathcal{N'} $ such that $ \card{\tp_{j}} < \blSize $, $ \blSize - \card{\tp_{j}} $ is the number of padded element $ \specialChar $'s in $ \blIntS{j} $. By~\cref{cor:ed.padded.topo}, we have
    \begin{align}
        \blSize - \card{\tp_{j}} &\leq 3\card{B^{H}_{j}} + 4\card{\errGood{j}} + \card{B_{j}}. \nonumber
    \end{align}
    Following the definition of $ B^{H}_{j} $ and $ B_{j} $, we have that $ B^{H}_{j} $'s are mutually disjoint, and $ B_{j} $'s are mutually disjoint, for different $ j $'s. Thus,
    \begin{align}
        \sum_{j\in \mathcal{N} \cup \mathcal{N'}} \blSize - \card{\tp_{j}} &\leq 4\card{B_{\lmedian}} + 4 \sum_{j\in \mathcal{N} \cup \mathcal{N'}} \card{\errGood{j}}. \nonumber
    \end{align}
    Since we construct $ \tp_{j} $ using only the elements in $ G_{j} $ and $ B^{H}_{j} $, it follows that
    \begin{align}
        \sum_{j\in \mathcal{N} \cup \mathcal{N'}} \card{\tp_{j}} &\leq \sum_{j\in \mathcal{N} \cup \mathcal{N'}} \left( \card{G_{j}} + \card{B^{H}_{j}} \right) \leq \sum_{j\in \mathcal{N} \cup \mathcal{N'}} \card{G_{j}} + \card{B_{\lmedian}}, \nonumber
    \end{align}
    Therefore,
    \begin{align}
        \sum_{j\in \mathcal{N} \cup \mathcal{N'}} ( \card{\blIntS{j}} - \card{G_{j}} ) &\leq 5\card{B_{\lmedian}} + 4 \sum_{j\in \mathcal{N} \cup \mathcal{N'}} \card{\errGood{j}}. \nonumber
    \end{align}
\end{proof}

\begin{lemma}\label{lem:bounding.offlineGoodString.blocks}
    Let $ \offlineGoodString $ be a string with $ n^{\mpcSpaceRegime} $ blocks in which its $ j $'th block $ \blIntS{j} $ is, if $ j\in \mathcal{L} \cup \mathcal{S} $, a string of all $ \specialChar $ elements of length $ \blSize $, and if $ j \in \mathcal{N} \cup \mathcal{N}' $, $ \blIntS{j} $ is obtained from \texttt{\nameref*{sub:local_aggregating_algorithm}} algorithm (\cref{alg.local.cycle.removal}) with input $ \anInt{j} $, then
    \begin{align}
        \sum_{j=1}^{n^{\mpcSpaceRegime}} \ed{\blIntS{j}, \lmedian[\ell_{j}, r_{j})} \leq (43.5\edErr  + 24\edErr ^{2})\sum_{i=1}^{5}\ed{\lmedian, \pi_{i}} + \left( \dfrac{1}{1-6\edErr } + 12 \right)\card{B_{\lmedian}}. \nonumber
    \end{align}
\end{lemma}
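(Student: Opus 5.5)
The proof splits the sum over blocks according to the partition $\{1,\dots,n^{\mpcSpaceRegime}\} = \mathcal{L}\cup\mathcal{S}\cup\mathcal{N}\cup\mathcal{N}'$ and bounds each part separately. Throughout, write $d_{i,j} = \ed{\lmedian[\ell_{j}, r_{j}), \pi_{i,j}}$. Because the fixed optimal alignments decompose along the blocks, we have $\sum_{j} d_{i,j} = \ed{\lmedian, \pi_{i}}$.

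\emph{Blocks in $\mathcal{L}$.} Here $\blIntS{j}$ is an arbitrary $\blSize$-length string of $\specialChar$'s, so \cref{lem:approx.large.strings} applies directly. It gives $\ed{\blIntS{j}, \lmedian[\ell_j,r_j)} \le \frac52\edErr\sum_i d_{i,j}$. Summing over $j\in\mathcal{L}$ contributes at most $2.5\edErr\sum_i\ed{\lmedian,\pi_i}$.

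\emph{Blocks in $\mathcal{S}$.} Each such block costs at most $2\blSize$. By \cref{lem:at.least.two.small.strings} there are at most $|B_{\lmedian}|/((1-6\edErr)\blSize)$ of them. This gives a term of the form $c\,|B_{\lmedian}|/(1-6\edErr)$. The stated constant $\frac{1}{1-6\edErr}$ suggests the sharper bound $\blSize$ per block, since $\blIntS{j}$ consists only of $\specialChar$ symbols and the indel distance is measured relative to the length.

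\emph{Blocks in $\mathcal{N}\cup\mathcal{N}'$.} Use the identity $\ed{u,v} = |u|+|v|-2\lcs(u,v)$ with $|\lmedian[\ell_j,r_j)| = \blSize = |G_j|+|B_j|$. Combining it with \cref{lem:local.cycle.removal} gives
\[
\ed{\blIntS{j},\lmedian[\ell_j,r_j)} \le \bigl(|\blIntS{j}| - |G_j|\bigr) + |B_j| + 6|B^H_j| + 8|\errGood{j}|.
\]
Summing over $j$, apply \cref{lem:diff.offlineGoodString.medium.goodSymbols} to the first term and use the disjointness of the $B_j$ and of the $B^H_j$ across $j$. This yields at most $12|B_{\lmedian}| + 12\sum_j|\errGood{j}|$.

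It remains to bound $\sum_j|\errGood{j}|$ by $O(\edErr)\sum_i\ed{\lmedian,\pi_i}$, which is the crux. Every element of $\pi_{i,j}\setminus\pi'_{i,j}$ lies in the prefix gap $[\optStartPoint{i,j},\optStartPoint{i,j}')$ or the suffix gap $[\optEndPoint{i,j}',\optEndPoint{i,j})$. By \eqref{eq:approximate.pairs}, these gaps have total length at most $2\gap{i} + \edErr d_{i,j}$ for medium intervals. For a small interval, $\pi'_{i,j}$ is empty, so the loss is the whole of $\pi_{i,j}$, of length below $\gap{i}+\edErr\blSize$. The catch is that $\edErr\blSize$ is not charged to any edit distance, but in $\mathcal{N}'$ only one interval is small. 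Its good elements are aligned in at least four strings, hence in all four medium ones, so I expect them to be absorbed by charging against the length discrepancy $d_{i,j}\ge \blSize-|\pi_{i,j}|$.

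Summing over $j$, we have $n^{\mpcSpaceRegime}\gap{i} = \edErr n^{\upperBoundED{i}} \le \edErr(1+\edErr)\ed{\lmedian,\pi_i}$. Hence $\sum_j|\errGood{j}| \le \sum_i\bigl(2\edErr(1+\edErr)+\edErr\bigr)\ed{\lmedian,\pi_i}$ up to the small-interval term. Multiplying by $12$ gives roughly $36\edErr + 24\edErr^2$, and adding the $\mathcal{L}$ and small-interval contributions accounts for the total $43.5\edErr + 24\edErr^2$. The main obstacle is this accounting of $\errGood{j}$ for the small interval in $\mathcal{N}'$, together with matching constants exactly. If the clean charging to $d_{i,j}$ fails, I would instead bound those lost elements by $\gap{i}$ plus the excess $\blSize - |\pi_{i,j}| \le d_{i,j}$.
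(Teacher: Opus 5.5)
Your plan follows the paper's proof essentially step for step. It uses the same split into $\mathcal{L}$, $\mathcal{S}$, $\mathcal{N}\cup\mathcal{N}'$. It uses the same per-block LCS bound from \cref{lem:local.cycle.removal}, combined with \cref{lem:diff.offlineGoodString.medium.goodSymbols}, to reach $12|B_{y^*}|+12\sum_j|\bar G_j|$; your $6|B^H_j|$ with disjointness of the $B^H_j$ is more accurate than the paper's $6|B_j|$. And it bounds $|\bar G_j|$ by the lengths of the two gaps, as the paper does.

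The small interval in $\mathcal{N}'$ is handled in the paper by exactly the charging you anticipate, and you should write it out. Since $|\pi_{\hat i,j}|<g_{\hat i}+\rho\kappa\le 3\rho\kappa$ (using $g_{\hat i}\le 2\rho\kappa$), we get $d_{\hat i,j}\ge\kappa-|\pi_{\hat i,j}|\ge(1-3\rho)\kappa$. Hence $\rho\kappa\le\frac32\rho\, d_{\hat i,j}$ for $\rho\le\frac19$, while $g_{\hat i}$ stays inside the $2g_i$ budget. The remark about good elements being aligned with the medium strings is not needed.

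Your fallback would fail. Bounding the loss by $g_i+(\kappa-|\pi_{i,j}|)$ puts coefficient $1$ on $d_{i,j}$, and after multiplying by $12$ the $O(\rho)$ factor is destroyed.

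For the constants, $\mathcal{L}$ and $\mathcal{N}\cup\mathcal{N}'$ are disjoint sets of blocks, so the coefficients of $d_{i,j}$ combine by a maximum: $\max(2.5,18)\rho+24\rho(1+\rho)=42\rho+24\rho^2$. Adding $2.5\rho$ and the extra $12\cdot\frac12\rho$ to your $36\rho$ instead gives $44.5\rho$, which overshoots the stated $43.5\rho$.

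The one step that is actually false is the $\mathcal{S}$ bound. The strings $*^{\kappa}$ and $y^*[\ell_j,r_j)$ share no symbol, so their indel distance is exactly $2\kappa$, and there is no ``sharper bound $\kappa$''. The paper's proof writes the same incorrect $\sum_{j\in\mathcal{S}}\kappa$. Done honestly, your argument gives $\bigl(\frac{2}{1-6\rho}+12\bigr)|B_{y^*}|$.

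The stated constant can be recovered with one change. Charge each $\mathcal{S}$ block to its own bad elements, $2\kappa\le\frac{2}{1-6\rho}|B_j|$, and avoid reusing those elements in the $\mathcal{N}\cup\mathcal{N}'$ part:
\begin{itemize}
\item For $j\in\mathcal{N}\cup\mathcal{N}'$ we have $|\sigma_j|=\max(\kappa,|\mathtt{tp}_j|)$ and $\kappa=|G_j|+|B_j|$. Together with $|\mathtt{tp}_j|\le|G_j|+|B^H_j|$ this gives $|\sigma_j|-|G_j|\le\max(|B_j|,|B^H_j|)$.
\item Your per-block inequality then bounds these blocks by $7|B_{y^*}|+2\sum_{j\in\mathcal{N}\cup\mathcal{N}'}|B_j|+8\sum_j|\bar G_j|$.
\item Since the $B_j$ are disjoint, $2\kappa|\mathcal{S}|+2\sum_{j\in\mathcal{N}\cup\mathcal{N}'}|B_j|\le\frac{2}{1-6\rho}|B_{y^*}|$.
\item Finally, $7+\frac{2}{1-6\rho}\le 12+\frac{1}{1-6\rho}$ for $\rho\le\frac{2}{15}$, and the smaller coefficient $8$ on $\sum_j|\bar G_j|$ only improves the $\rho$-terms.
\end{itemize}
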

\begin{proof}
    We decompose the left sum.
    \begin{align}
        &\sum_{j=1}^{n^{\mpcSpaceRegime}} \ed{\blIntS{j}, \lmedian[\ell_{j}, r_{j})} \nonumber \\
        = &\sum_{j\in \mathcal{L}} \ed{\blIntS{j}, \lmedian[\ell_{j}, r_{j})} + \sum_{j\in \mathcal{S}} \ed{\blIntS{j}, \lmedian[\ell_{j}, r_{j})} + \sum_{j\in \mathcal{N}\cup \mathcal{N'}} \ed{\blIntS{j}, \lmedian[\ell_{j}, r_{j})}. \nonumber
    \end{align}
    For each $ j\in \mathcal{L} $, by~\cref{lem:approx.large.strings}, we have
    \begin{align}
        \ed{\offlineGoodString_{j}, \lmedian[\ell_{j}, r_{j})} &\leq \dfrac{5}{2}\edErr \sum_{i=1}^{5}\ed{\lmedian[\ell_{j}, r_{j}), \pi_{i,j}}. \nonumber
    \end{align}
    For $ j\in \mathcal{S} $, by~\cref{lem:at.least.two.small.strings}, $ \card{\mathcal{S}}\leq \frac{\card{B_{\lmedian}}}{( 1-6\edErr  )\blSize} $. Hence
    \begin{align}
        \sum_{j\in \mathcal{S}}\ed{\offlineGoodString_{j}, \lmedian[\ell_{j}, r_{j})} &\leq \sum_{j\in \mathcal{S}}\blSize \leq \dfrac{\card{B_{\lmedian}}}{(1-6\edErr )\blSize}\blSize = \dfrac{\card{B_{\lmedian}}}{(1-6\edErr )}. \nonumber
    \end{align}
    For each $ j\in \mathcal{N} \cup \mathcal{N'} $, by~\cref{lem:local.cycle.removal}, we have
    \begin{align}
        \ed{\blIntS{j},\lmedian[\ell_{j}, r_{j})} & = \card{\blIntS{j}} + \card{\lmedian[\ell_{j}, r_{j})} - 2\lcs(\blIntS{j}, \lmedian[\ell_{j}, r_{j})) \nonumber \\
                                            & \leq \card{\blIntS{j}} + \blSize - 2\card{G_{j}} + 6\card{B_{j}} + 8\card{\errGood{j}}. \nonumber
    \end{align}
    Hence,
    \begin{align}
        \sum_{j\in \mathcal{N}\cup \mathcal{N'}} \ed{\blIntS{j}, \lmedian[\ell_{j}, r_{j})} & \leq \sum_{j\in \mathcal{N}\cup \mathcal{N'}} \left( (\card{\blIntS{j}} - \card{G_{j}}) + (\blSize - \card{G_{j}}) + 6\card{B_{j}} + 8 \card{\errGood{j}} \right) \nonumber \\
                                                                                                            &\leq 5 \card{B_{\lmedian}} + 4 \sum_{j\in \mathcal{N} \cup \mathcal{N'}} \card{\errGood{j}} + 7\card{B_{\lmedian}} + 8 \sum_{j\in \mathcal{N}\cup \mathcal{N'}} \card{\errGood{j}} \nonumber \\ 
                                                                                                            & = 12\card{B_{\lmedian}} + 12 \sum_{j\in \mathcal{N}\cup \mathcal{N'}} \card{\errGood{j}}. \nonumber
    \end{align}
    The second inequality is due to $ \sum_{j\in \mathcal{N}\cup \mathcal{N'}}(\card{\blIntS{j}} - \card{G_{j}}) \leq 5\card{B_{\lmedian}} + 4 \sum_{j\in \mathcal{N} \cup \mathcal{N'}}\card{\errGood{j}} $ (\cref{lem:diff.offlineGoodString.medium.goodSymbols}), $ \sum_{j=1}^{n^{\mpcSpaceRegime}} (\blSize - \card{G_{j}}) = \card{B_{\lmedian}} $, and $ \sum_{j\in \mathcal{N}\cup \mathcal{N'}}\card{B_{j}}\leq \card{B_{\lmedian}} $.

    For $ j\in \mathcal{N}$, by definition of $ \errGood{j} $, we have
    \begin{align}
        \sum_{j\in \mathcal{N}} \card{\errGood{j}} &\leq \sum_{j\in \mathcal{N}} \sum_{i=1}^{5} \left( (\optStartPoint{i,j}' - \optStartPoint{i,j}) + (\optEndPoint{i,j} - \optEndPoint{i,j}') \right) \nonumber \\
                                                        &\leq \sum_{j\in \mathcal{N}} \sum_{i=1}^{5} \left( 2\edErr n^{\upperBoundED{i} - \mpcSpaceRegime} + \edErr \ed{\lmedian[\ell_{j}, r_{j}), \pi_{i,j}} \right). \nonumber \\
    \end{align}
    \newcommand{\smallI}{\hat{i}}
    For each $ j \in \mathcal{N'} $, let $ \smallI $ denote the unique index such that the corresponding interval in $ \anInt{j} $ is empty (i.e., $ \optStartPoint{\smallI,j}' = \optEndPoint{\smallI,j}' $). Then, by the definition of $ \errGood{j} $, we have
    \begin{align}
        \sum_{j\in \mathcal{N'}} \card{\errGood{j}} &\leq \sum_{j\in \mathcal{N'}} \left( (\optEndPoint{\smallI,j} - \optStartPoint{\smallI,j}) + \sum_{i \neq \smallI} \left( (\optStartPoint{i,j}' - \optStartPoint{i,j}) + (\optEndPoint{i,j} -\optEndPoint{i,j}') \right) \right) \nonumber \\
                                                    &\leq \sum_{j\in \mathcal{N'}} \left( \gap{\smallI} + \edErr \blSize + \sum_{i \neq \smallI} \left( 2\edErr n^{\upperBoundED{i} - \mpcSpaceRegime} + \edErr \ed{\lmedian[\ell_{j}, r_{j}), \pi_{i,j}} \right) \right) \nonumber \\
                                                    &\leq \sum_{j\in \mathcal{N}'} \left( \edErr n^{\upperBoundED{\smallI} - \mpcSpaceRegime} + \dfrac{3}{2}\edErr \ed{\lmedian[\ell_{j}, r_{j}), \pi_{\smallI,j}} + \sum_{i \neq \smallI}\left(2\edErr n^{\upperBoundED{i}-\mpcSpaceRegime} + \edErr \ed{\lmedian[\ell_{j}, r_{j}), \pi_{i,j}}\right) \right) \nonumber \\
                                                    & \leq \sum_{j\in \mathcal{N'}} \sum_{i=1}^{5} \left( 2\edErr n^{\upperBoundED{i} - \mpcSpaceRegime} + \dfrac{3}{2}\edErr \ed{\lmedian[\ell_{j}, r_{j}), \pi_{i,j}} \right). \nonumber
    \end{align}
    In the second inequality, we use the assumption that $ \optEndPoint{\smallI,j} - \optStartPoint{\smallI,j} \leq \gap{\smallI} + \edErr \blSize $. The third inequality is due to $ \gap{\smallI}\leq 2\edErr \blSize $, and therefore $ \ed{\lmedian[\ell_{j}, r_{j}), \pi_{\smallI,j}}\geq (r_{j} - \ell_{j}) - (\optEndPoint{\smallI, j} - \optStartPoint{\smallI, j}) \geq \blSize - (\gap{\smallI} + \edErr \blSize) \geq \blSize - 3\edErr \blSize $, which leads to $ \edErr \blSize \leq \frac{3}{2} \edErr \ed{\lmedian[\ell_{j}, r_{j}), \pi_{\smallI,j}} $, for any $ \edErr  \leq \frac{1}{9} $.

    Combining all together, we have
    \begin{align}
        \ed{\offlineGoodString, \lmedian} &\leq \dfrac{5}{2}\edErr \sum_{j\in \mathcal{L}} \sum_{i=1}^{5}\ed{\lmedian[\ell_{j}, r_{j}), \pi_{i,j}} + \dfrac{\card{B_{\lmedian}}}{(1-6\edErr )} + 12\card{B_{\lmedian}} \nonumber \\
                                     &\quad + 12 \sum_{j\in \mathcal{N}\cup \mathcal{N'}} \sum_{i=1}^{5} \left( 2\edErr n^{\upperBoundED{i} - \mpcSpaceRegime} + \dfrac{3}{2}\edErr \ed{\lmedian[\ell_{j}, r_{j}), \pi_{i,j}} \right) \nonumber \\
                                     &\leq \dfrac{39}{2}\edErr \sum_{i=1}^{5}\ed{\lmedian, \pi_{i}} + 24\edErr  \sum_{i=1}^{5} n^{\upperBoundED{i}} + \left( \dfrac{1}{1-6\edErr } + 12 \right)\card{B_{\lmedian}} \nonumber \\
                                     &\leq (43.5\edErr  + 24\edErr ^{2})\sum_{i=1}^{5}\ed{\lmedian, \pi_{i}} + \left( \dfrac{1}{1-6\edErr } + 12 \right)\card{B_{\lmedian}}. \nonumber
    \end{align}
    The last inequality is due to $ n^{\upperBoundED{i}} \leq (1+\edErr )\ed{\lmedian, \pi_{i}} $. This completes the proof.
\end{proof}

As $ \ed{\offlineGoodString, \lmedian} \leq \sum_{j=1}^{n^{\mpcSpaceRegime}} \ed{\blIntS{j}, \lmedian[\ell_{j}, r_{j})} $, the following corollary is immediate from~\cref{lem:bounding.offlineGoodString.blocks}.
\begin{corollary}\label{cor:approximation.local.solution}
    \begin{align}
        \sum_{i=1}^{5} \sum_{j=1}^{n^{\mpcSpaceRegime}}\ed{\blIntS{j}, \pi_{i}[\optStartPoint{i,j}, \optEndPoint{i,j})} &\leq (1 + 217.5\edErr  + 120\edErr ^{2})\sum_{i=1}^{5}\ed{\lmedian, \pi_{i}} + \left( \dfrac{5}{1-6\edErr } + 60 \right)\card{B_{\lmedian}}. \nonumber
    \end{align}
\end{corollary}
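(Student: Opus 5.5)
We derive \cref{cor:approximation.local.solution} from \cref{lem:bounding.offlineGoodString.blocks} by applying the triangle inequality for the indel distance blockwise, and then summing over blocks and over the five permutations.

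\textbf{Step 1: per-block triangle inequality.} Fix $i\in[5]$ and a block index $j$. Since $\ed{\cdot,\cdot}$ is a metric on strings, we have
\begin{align*}
\ed{\blIntS{j}, \pi_{i}[\optStartPoint{i,j}, \optEndPoint{i,j})} \leq \ed{\blIntS{j}, \lmedian[\ell_{j}, r_{j})} + \ed{\lmedian[\ell_{j}, r_{j}), \pi_{i}[\optStartPoint{i,j}, \optEndPoint{i,j})}.
\end{align*}

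\textbf{Step 2: sum over $j$ and $i$.} Summing the inequality over $j$ and then over $i$ splits the left-hand side into two pieces.
\begin{itemize}
\item The first piece does not depend on $i$. It equals $5\sum_{j}\ed{\blIntS{j}, \lmedian[\ell_{j}, r_{j})}$.
\item For the second piece, recall that the fixed optimal alignment between $\lmedian$ and $\pi_{i}$ decomposes into the blockwise alignments of $\lmedian[\ell_{j}, r_{j})$ with $\pi_{i}[\optStartPoint{i,j}, \optEndPoint{i,j})$. These substrings partition $\pi_i$, with possibly some unaligned leftover elements between consecutive intervals; we assign such leftovers to adjacent intervals so that the partition is exact.
\item Because the alignment restricted to each block is an alignment of the two substrings, $\sum_j \ed{\lmedian[\ell_{j}, r_{j}), \pi_{i,j}} \leq \ed{\lmedian,\pi_i}$. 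In fact equality holds, since a concatenation of alignments is an alignment of the concatenations.
\end{itemize}

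\textbf{Step 3: plug in \cref{lem:bounding.offlineGoodString.blocks}.} Inserting the lemma's bound into the first piece gives
\begin{align*}
5(43.5\edErr+24\edErr^2)\sum_i\ed{\lmedian,\pi_i} + 5\left(\tfrac{1}{1-6\edErr}+12\right)\card{B_{\lmedian}}.
\end{align*}
Adding the second piece, $\sum_i \ed{\lmedian,\pi_i}$, yields exactly the stated constants. Indeed, $1+217.5\edErr+120\edErr^2$ comes from $1 + 5\cdot 43.5\edErr + 5\cdot 24\edErr^2$, and $\frac{5}{1-6\edErr}+60$ comes from $5\left(\frac{1}{1-6\edErr}+12\right)$.

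The only point needing care is the claim in Step 2 that the blockwise distances sum to at most $\ed{\lmedian,\pi_i}$. This requires the intervals $[\optStartPoint{i,j},\optEndPoint{i,j})$ to be disjoint, to cover $\pi_i$, and to be induced by the single fixed optimal alignment. This holds by the way the decomposition was defined in \cref{sub:notation_and_terminoloy}, so everything else is arithmetic.
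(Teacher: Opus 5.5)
Your proposal is correct and follows the paper's proof essentially line for line. The paper applies the triangle inequality blockwise through $\lmedian[\ell_j,r_j)$, bounds the first resulting sum by $5$ times \cref{lem:bounding.offlineGoodString.blocks}, and uses that $\sum_j \ed{\lmedian[\ell_j,r_j), \pi_i[\optStartPoint{i,j},\optEndPoint{i,j})} = \ed{\lmedian,\pi_i}$ by the definition of the alignment-induced decomposition, which gives the same constants $1+5\cdot 43.5\edErr+5\cdot 24\edErr^2$ and $5\bigl(\frac{1}{1-6\edErr}+12\bigr)$ that you derive.
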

\begin{proof}
    By the triangle inequalities, we have
    \begin{align}
        &\>\>\>\>\>\sum_{i=1}^{5} \sum_{j=1}^{n^{\mpcSpaceRegime}}\ed{\blIntS{j}, \pi_{i}[\optStartPoint{i,j}, \optEndPoint{i,j})} \nonumber \\
        &\leq 5 \sum_{j=1}^{n^{\mpcSpaceRegime}} \ed{\blIntS{j}, \lmedian[\ell_{j}, r_{j})} + \sum_{i=1}^{5} \sum_{j=1}^{n^{\mpcSpaceRegime}}\ed{\lmedian[\ell_{j}, r_{j}), \pi_{i}[\optStartPoint{i,j}, \optEndPoint{i,j})}  \nonumber \\
        &\leq \left( 1 +  217.5\edErr  + 120\edErr ^{2}) \right)\sum_{i=1}^{5}\ed{\lmedian, \pi_{i}} + 5\left( \dfrac{1}{1-6\edErr } + 12 \right)\card{B_{\lmedian}}. \nonumber
    \end{align}
    The last inequality is due to~\cref{lem:bounding.offlineGoodString.blocks} and by definition, $ \sum_{j=1}^{n^{\mpcSpaceRegime}}\ed{\lmedian[\ell_{j}, r_{j}), \pi_{i}[\optStartPoint{i,j}, \optEndPoint{i,j})} = \ed{\lmedian, \pi_{i}} $.
\end{proof}

\paragraph{Bounding $ \blEd(\offlineGoodString) $.}\label{par:bounding_bled_offlinegoodstring_} 

The string $ \offlineGoodString = \blIntS{1}\blIntS{2}\dots \blIntS{n^{\mpcSpaceRegime}} $ belongs to the search space defined in the step \texttt{\nameref*{sub:combining_blocks}}, as it is induced by the valid sequence of tuples $ (\tuple{a_{1}}, \tuple{a_{2}}, \dots, \tuple{a_{k}}) $, with $ \tuple{a_{t}} \in C_{j_{t}} $ and indices $ \{ j_{1}, j_{2}, \dots, j_{k} \} = \mathcal{N} \cup \mathcal{N'} $. Denoting the components of the $t$-th tuple as $ \tuple{a_{t}} = \{ \offlineLocalOut, (\startPoint{i,t}, \endPoint{i,t}) \}_{i=1}^{5} $, we observe that for each $ i $: if the interval $ [\optStartPoint{i,j_{t}}, \optEndPoint{i,j_{t}}) $ is medium, then $ [\startPoint{i,t}, \endPoint{i,t}) = [\optStartPoint{i,j_{t}}', \optEndPoint{i,j_{t}}') $ approximates $ [\optStartPoint{i,t}, \optEndPoint{i,t}) $. Otherwise, if the interval is small, we have $ \startPoint{i,t}= \endPoint{i,t} $.

The value $ \blEd(\offlineGoodString) $ provides an upper bound on $ \ed{\offlineGoodString,\pi_{i}} $ corresponding to the following alignment: for every medium interval $ [\optStartPoint{i, j_{t}}, \optEndPoint{i,j_{t}}) $, we align the block $ \blIntS{j} $ with the approximate interval $ \pi_{i}[\optStartPoint{i,j_{t}}', \optEndPoint{i,j_{t}}') $ (incurring cost $ \ed{\blIntS{j}, \pi_{i}[\optStartPoint{i, j_{t}}', \optEndPoint{i,j_{t}}')} $). For all other intervals, the alignment consists of deleting the block from $ \offlineGoodString $ and inserting the corresponding intervals from $ \pi_{i} $.

On the other hand, the sum $ \sum_{j=1}^{n^{\mpcSpaceRegime}} \ed{\blIntS{j}, \pi_{i}[\optStartPoint{i,j}, \optEndPoint{i,j})} $ bounds $ \ed{\offlineGoodString, \pi_{i}} $ via a slightly different alignment: for every medium interval, we align $ \blIntS{j} $ with the optimal interval $ \pi_{i}[\optStartPoint{i,j}, \optEndPoint{i,j}) $ (incurring cost $ \ed{\blIntS{j}, \pi_{i}[\optStartPoint{i,j}, \optEndPoint{i,j})} $), handling other intervals similarly as above.

By triangle inequalities, $ \blEd(\offlineGoodString) $ approximates the sum $ \sum_{i=1}^{5}\sum_{j=1}^{n^{\mpcSpaceRegime}}\ed{\blIntS{j}, \pi_{i}[\optStartPoint{i,j}, \optEndPoint{i,j})} $ with an error of at most $ 2 \sum_{i=1}^{5}\sum_{[\optStartPoint{i,j}, \optEndPoint{i,j}) \text{ is medium}} \left( (\optStartPoint{i,j}'-\optStartPoint{i,j}) + (\optEndPoint{i,j} - \optEndPoint{i,j}') \right) $. As $ \optStartPoint{i,j}' - \optStartPoint{i,j} \leq \edErr n^{\upperBoundED{i}-\mpcSpaceRegime} $ and $ \optEndPoint{i,j} - \optEndPoint{i,j}' \leq \edErr n^{\upperBoundED{i}-\mpcSpaceRegime} + \edErr \ed{ \lmedian[\ell_{j}, r_{j}) , \pi_{i}[\optStartPoint{i,j}, \optEndPoint{i,j})} $, we bound $ \blEd(\offlineGoodString) $ as:
\begin{align}
    \blEd(\offlineGoodString) &\leq \sum_{i=1}^{5} \sum_{j=1}^{n^{\mpcSpaceRegime}}\ed{\blIntS{j}, \pi_{i}[\optStartPoint{i,j}, \optEndPoint{i,j})} + 4 \edErr  \sum_{i=1}^{5} n^{\upperBoundED{i}} + 2\edErr \sum_{i=1}^{5}\ed{\lmedian, \pi_{i}}. \nonumber
\end{align}
Finally, by applying~\cref{cor:approximation.local.solution} and the fact that $ n^{\upperBoundED{i}} \leq (1+\edErr )\ed{\lmedian, \pi_{i}} $, we derive the following upper bound:
\begin{align}
    \blEd(\offlineGoodString) &\leq (1 + 223.5\edErr  + 124\edErr ^{2})\sum_{i=1}^{5}\ed{\lmedian, \pi_{i}} + \left( \dfrac{5}{1-6\edErr } + 60 \right)\card{B_{\lmedian}}. \nonumber
\end{align}

\paragraph{Bounding $ \blEd $ of the Output of~\nameref*{sub:combining_blocks}}\label{par:the_output_of_nameref_sub_combining_blocks_} 
Let $ \offlineIntS $ be the solution string obtained in step~\texttt{\nameref*{sub:combining_blocks}}. Since $ \offlineIntS $ minimizes the objective function over the search space:
\begin{align}
    \offlineIntS = \argmin_{\concateStr{a_{j_{1}}, \dots , a_{j_{k}}}} \blEd( \concateStr{a_{j_{1}}, \dots, a_{j_{k}}} ) , \nonumber
\end{align}
it must satisfy $ \blEd(\offlineIntS) \leq \blEd(\offlineGoodString) $. Therefore,
\begin{align}
    \blEd(\offlineIntS) &\leq (1 + 223.5\edErr  + 124\edErr ^{2})\sum_{i=1}^{5}\ed{\lmedian, \pi_{i}} + \left( \dfrac{5}{1-6\edErr } + 60 \right)\card{B_{\lmedian}}. \label{eq.ed.bound.blEd.offlineIntS}
\end{align}

\subsection{Analysis of~\nameref*{sub:postprocessing} Step and Proof of~\cref*{lem:mpc.estimate.local.median}}\label{sub:proof_of_cref_lem_bound_gmedian_to_offlineout_} 
Let $ \offlineOut $ be the output permutation in the step~\texttt{\nameref*{sub:postprocessing}}. We give an upper bound for $ \sum_{i=1}^{5}\ed{\offlineOut, \pi_{i}} $, finalizing the proof of~\cref{lem:mpc.estimate.local.median}.

Note that, removing element $ \specialChar $'s from $ \offlineIntS $ and replacing remaining $ \specialChar $ by unused elements in $ \Sigma $ to obtain $ \offlineOut $ does not increase the sum $ \blEd $. Then,
\begin{align}
    \sum_{i=1}^{5} \ed{\offlineOut, \pi_{i}} &\leq \blEd(\offlineIntS) \nonumber \\
                                       &\leq (1 + 223.5\edErr  + 124\edErr ^{2})\sum_{i=1}^{5}\ed{\lmedian, \pi_{i}} + \left( \dfrac{5}{1-6\edErr } + 60 \right)\card{B_{\lmedian}}. \nonumber
\end{align}
The last inequality is due to inequality~\eqref{eq.ed.bound.blEd.offlineIntS}.

By choosing $ \edErr = 0.000001 $, and the fact that $ \card{B_{\lmedian}} \leq \totalslack{\lmedian} \leq \totalslack{\gmedian} $, the above inequality becomes
\begin{align}
    \sum_{i=1}^{5} \ed{\offlineOut, \pi_{i}} \leq 1.000225 \sum_{i=1}^{5}\ed{\lmedian, \pi_{i}} + 66 \totalslack{\gmedian}, \nonumber
\end{align}
completing the proof of~\cref{lem:mpc.estimate.local.median}.

\subsection{From A Local Solution to A Global Median}\label{sec:from_an_optimal_median_of_five_permutations_to_an_optimal_median_of_the_input_set} 

In this section, we derive inequality~\eqref{eq:scalable.cycle.removal} of~\cref{lem:bound.gmedian.to.offlineOut} from~\cref{lem:mpc.estimate.local.median}.
Specifically, let $ \offlineOut $ be the output obtained from $ \scalableCycleRemoval $ algorithm with input $ \pi_{1}, \pi_{2}, \dots, \pi_{5} $, we show that
\begin{align}
    \ed{\offlineOut, \gmedian} \leq 0.0009 \sum_{i=1}^{5}\ed{\gmedian, \pi_{i}} + 266 \totalslack{\gmedian}. \tag{\ref{eq:scalable.cycle.removal}}
\end{align}

\begin{proof}[Proof of~\cref{lem:bound.gmedian.to.offlineOut}]
    Our starting point is a characterization of the distance between any two permutations $ x, y $ in terms of the quantities $ \totalslack{x} $ and $ \totalslack{y} $, specifically, $ \ed{x, y} \leq \card{B_{x}} + \card{B_{y}} \leq \totalslack{x} + \totalslack{y} $. 
    This inequality follows from the more general result for weighted Ulam metric, which we prove in~\cref{subsec:ulam} (\cref{lemma:ulam}). It follows,
    \begin{align}
        \ed{\offlineOut, \gmedian} &\leq \card{B_{\offlineOut}} + \totalslack{\gmedian}. \nonumber
    \end{align}
    It remains to bound $ \card{B_{\offlineOut}} $.

    By definition, $ B_{\offlineOut} = \bigcup_{1\leq i< j \leq 5}(\unali{\pi_{i}}^{\offlineOut} \cap \unali{\pi_{j}}^{\offlineOut}) $, thus
    \begin{align}
        \card{B_{\offlineOut}} &\leq \sum_{1\leq i< j \leq 5} \card{\unali{\pi_{i}}^{\offlineOut} \cap \unali{\pi_{j}}^{\offlineOut}} \leq \sum_{1\leq i <j \leq 5} \left( \card{\unali{\pi_{i}}^{\offlineOut}} + \card{\unali{\pi_{j}}^{\offlineOut}} - \ed{\pi_{i}, \pi_{j}} \right) \nonumber \\
                               &= \sum_{i=1}^{5} \Bigl(\ed{\offlineOut, \pi_{i}} + \ed{\offlineOut, \pi_{j}}- \ed{\pi_{i}, \pi_{j}}\Bigr) \nonumber \\ 
                               & (\text{applying~\cref{lem:mpc.estimate.local.median}, and for brevity, let $ \edErr' = 0.000225 $})\nonumber \\
                               &\leq \sum_{1\leq i< j \leq 5} \left( (1+\edErr')( \ed{\lmedian, \pi_{i}} + \ed{\lmedian, \pi_{j}} ) - \ed{\pi_{i}, \pi_{j}} \right) + 4 \cdot 66 \totalslack{\gmedian}    \nonumber \\
                               & = \totalslack{\lmedian} + 4\edErr' \sum_{i=1}^{5} \ed{\lmedian, \pi_{i}} + 264 \totalslack{\gmedian} \nonumber \\
                        & \leq  4 \edErr' \sum_{i=1}^{5} \ed{\gmedian, \pi_{i}} + 265 \totalslack{\gmedian}. \nonumber
    \end{align}
    The last inequality is due to $ \totalslack{\lmedian} \leq \totalslack{\gmedian} $ (see inequality~\eqref{eq:localleqglobal}), and by the fact that $ \sum_{i=1}^{5} \ed{\lmedian, \pi_{i}}\leq \sum_{i=1}^{5}\ed{\gmedian, \pi_{i}} $.

    It follows that
    \begin{align}
        \ed{\offlineOut, \gmedian} &\leq \totalslack{\gmedian} + \card{B_{\offlineOut}} \nonumber \\
                                   &\leq 4\edErr'  \sum_{i=1}^{5} \ed{\gmedian, \pi_{i}} + 266\totalslack{\gmedian}. \nonumber
    \end{align}
    As $ \edErr' = 0.000225 $, we obtain
    $  \ed{\offlineOut, \gmedian} \leq 0.0009 \sum_{i=1}^{5} \ed{\gmedian, \pi_{i}} + 266\totalslack{\gmedian}$.
\end{proof}

%% file: hammingSF.tex
\section{New Approximation Algorithms for
Weighted Rank Aggregation}\label{section:application}

In this section we prove~\cref{property:formal.metric.specific} for each metric space.
We show that for  Hamming, Kendall-tau and Spearman's footrule, $r=3$, and for the Ulam metric $r=5$.

Using~\cref{thm:generalframework} we then obtain the following result.

\begin{theorem}\label{thm:offline.weighted}
   ~\cref{alg.space.efficient.general.framework} provides a:
    \begin{itemize}
        \item $(1.75+O(\delta))$-approximation for the 1-median problem in $(\permutations, \dist^w_H)$ and $(\permutations, \dist_F)$ in linear time.
        \item $(1.9+O(\delta))$-approximation for the 1-median problem in $(\permutations, \dist^w_\tau)$ in $\bigO{n\log n}$ time.
        \item $(1.9677+O(\delta))$-approximation for the 1-median problem in $(\permutations, \dist^w_U)$ in $\bigO{n^{17}}$.
    \end{itemize}
\end{theorem}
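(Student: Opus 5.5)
Theorem~\ref{thm:offline.weighted} will follow from Theorem~\ref{thm:generalframework}, which turns Property~\ref{property:formal.metric.specific} with constants $(C,r)$ into a $(2-\frac{1}{C\binom{r}{2}+1}+O(\delta))$-approximation. So for each metric I will design a \texttt{localAlg}, prove $\dist(\gmedian,y)\le C\cdot\opttotalslack$ for the required constant, and bound its running time. The targets are as follows. Since $1.75=2-\frac14$ and $\binom32=3$, Hamming and footrule need $C=1$ with $r=3$. Since $1.9=2-\frac1{10}$, weighted Kendall-tau needs $C=3$. The Ulam case needs $r=5$ and $\binom52=10$, so $1.9677\approx 2-\frac1{31}$ means $C=3$. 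Throughout, the tool is to decompose $\dist$ and the slacks coordinatewise (positions, pairs, or elements) and compare per coordinate.

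\textbf{Hamming.} Slack decomposes per position $k$ as $\frac{w(x[k])+w(p_i[k])}{2}\indic+\dots$. At positions where two of $\pi_1,\pi_2,\pi_3$ agree on $a$, the algorithm places $a$. If $\gmedian[k]\neq a$, both supporting pairs pay slack at least $w(a)$ plus half of $w(\gmedian[k])$, which covers the disagreement cost $\frac{w(a)+w(\gmedian[k])}{2}$. If there is no majority, all three values are distinct. Then every pair contributes $\frac{w(\gmedian[k])+\dots}{2}$-type slack summing to at least the cost of any two distinct values there, so arbitrary filling is charged to $\opttotalslack$. One must check that majority elements are not reused (a majority element at $k$ appears in at most one other position's majority) and that the remaining elements fill the non-majority positions; this gives $C=1$.

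\textbf{Footrule.} Per position, slack on the pair $(\pi_2,\pi_3)$ with values $a\ge b\ge c$ is $2\,\mathrm{dist}(\gmedian[k],[c,a])$-like, while the pairs involving the extreme values give $2|\gmedian[k]-b|$ overall. So $\sum_k|\gmedian[k]-b_k|\le\frac12\opttotalslack$. Sorting the pseudo-permutation $b$ into a permutation $y$ is the $\ell_1$-closest permutation, hence $\|y-b\|_1\le\|\gmedian-b\|_1$, giving $\dist_F(\gmedian,y)\le\opttotalslack$.

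\textbf{Kendall-tau.} Build a weighted tournament by majority over the three inputs, with pair weights $\frac{w(e)+w(e')}{2}$. Any pair where $\gmedian$ disagrees with the majority contributes at least its weight to slack, so $\gmedian$ has feedback cost at most $\opttotalslack/2$ against the majority. The weights satisfy the triangle-inequality or probability constraints of~\cite{ACN08}, so KWIK-SORT gives a constant-factor (3) approximation in expectation. The distance $\dist^w_\tau(\gmedian,y)$ is at most $y$'s disagreements plus $\gmedian$'s disagreements, giving $C$ about $3$ after derandomizing by repetition. The $O(n\log n)$ bound follows from expected KWIK-SORT time with $O(1)$ pair queries.

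\textbf{Ulam.} Use the exact DP median of five (time $O(n^{5}\cdot\mathrm{poly})$, consistent with the $n^{17}$ budget). Then apply the weighted inequality $\dist(x,y)\le w(B_x)+w(B_y)\le\Delta(x)+\Delta(y)$, the general form of Lemma~\ref{lemma:ulam} cited above, together with~\eqref{eq:localleqglobal}. The refined constant $1.9677$ should come from re-running the Lemma~\ref{lemma:structural}-type averaging with the sharper slack-to-bad-set bound rather than $C=2$ naively.

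\textbf{Main obstacle.} I expect it to be the Kendall-tau weighted case: verifying that the majority tournament with averaged element weights meets the conditions under which KWIK-SORT's approximation holds. If it fails, I will fall back to the LP-rounding constant $2.5$ and adjust $C$.
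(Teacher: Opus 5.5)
Your reduction to \cref{thm:generalframework}, with $(C,r)=(1,3)$ for Hamming and footrule, $(3,3)$ for Kendall-tau and $(3,5)$ for Ulam, matches the paper. Your footrule argument is also the paper's. The other three cases have gaps.

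\textbf{Hamming.} Your per-position charge at unresolved positions fails once weights are present. Suppose $x^{*}[k]$ is one of the three distinct input values. Then only one pair has nonzero slack at $k$, and that slack equals $w(x^{*}[k])$. But $y[k]$ is an arbitrary leftover element, not one of the values ``there'', and it may be much heavier. So the cost $\frac{w(x^{*}[k])+w(y[k])}{2}$ is not covered locally. The missing step is global. On $D=\{k: x^{*}[k]\neq y[k]\}$ both permutations carry the same set of elements, hence $\dist^w_H(x^{*},y)=\sum_{k\in D}w(x^{*}[k])$. Every $k\in D$ has slack at least $w(x^{*}[k])$, and resolved positions even have $w(x^{*}[k])+w(y[k])$. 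This is what the paper's pairing argument is meant to supply.

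\textbf{Kendall-tau.} Your worry is justified: the cited guarantees do not apply, and KWIK-SORT has no constant ratio on this instance. Write $c_{uv}$ for the cost of placing $v$ before $u$, so $c_{uv}=\frac{w(u)+w(v)}{2}$ if $u\to v$ in $G_Q$ and $0$ otherwise.
\begin{itemize}
\item A cyclic triangle $a\to b\to c\to a$ has $c_{ca}>0=c_{cb}+c_{ba}$, so the directed triangle inequality of \cite{ACN08} fails.
\item $c_{uv}+c_{vu}$ varies with the pair, so the probability constraints of \cite{ACN08} fail too. The factor $3$ there is for unweighted tournaments.
\end{itemize}
Concretely, take $Q=\{abc,bca,cab\}$ with $w(a)=W$ and $w(b)=w(c)=1$. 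The best order $cab$ reverses only $\{b,c\}$, at cost $1$. Pivot $b$ or $c$ yields $abc$ or $bca$, which reverses a pair of weight $\frac{W+1}{2}$. The expected cost is therefore $\frac{W+2}{3}$.

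Now stack $k$ such blocks in all three inputs and take $P=Q$. Then $\Delta^*=2k$, while about $2k/3$ blocks come out wrong, each adding $\frac{W+3}{2}$ to $\dist^w_\tau(x^{*},y)$. Repetition does not help, and the LP-rounding fallback needs the same constraints.

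The paper's proof takes this very step, claiming a factor $2$ because the weights ``satisfy the triangle inequality''. What actually holds is only the undirected inequality $\frac{w(a)+w(c)}{2}\le\frac{w(a)+w(b)}{2}+\frac{w(b)+w(c)}{2}$. So following the paper does not close the gap: you need a feedback-arc-set subroutine with a proven guarantee for these vertex-induced weights. Your bound $\|B_{x^{*}}\|\le\frac12\Delta^*$ is correct, since each such pair contributes twice its weight to the slack.

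\textbf{Ulam.} The exact median of five permutations is asserted, not constructed. The natural DP over pointer tuples $(p_1,\dots,p_5)$ emits an element together with the inputs it is matched to. It can emit one element twice with disjoint match sets, for example matched to $\pi_1$ early and to $\pi_2$ later, so its optimum need not be a permutation.

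Your account of the constant is also off. $C=2$ would give $2-\frac{1}{21}\approx 1.952$, whereas $1.9677\approx 2-\frac{1}{31}$ is exactly $C=3$. The paper obtains $C=3$ from a different local algorithm:
\begin{itemize}
\item It runs the randomized $2$-approximate weighted feedback vertex set algorithm of \cite{DBLP:journals/talg/LokshtanovMMPPS21} on the majority tournament of the five inputs; this is the source of $O(n^{17})$.
\item It outputs a topological order of what remains, with the removed set $F$ appended.
\item Deleting $B_{x^{*}}$ leaves a DAG, so $\|F\|\le 2\|B_{x^{*}}\|\le 2\Delta^*$.
\item $G_{x^{*}}\setminus F$ is a common subsequence of $x^{*}$ and $y$.
\end{itemize}

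Your DP idea can be repaired by allowing only steps that match the emitted element with at least four inputs.
\begin{itemize}
\item Two such match sets share an input, whose pointer cannot pass the same position twice, so no element repeats.
\item Any two emitted elements are ordered by majority, so the emitted set $K$ meets $G_{x^{*}}$ in a common subsequence, as in \cref{lemma:ulam}.
\item $G_{x^{*}}$ taken in $x^{*}$-order is a feasible path, so $\|[n]\setminus K\|\le\|B_{x^{*}}\|$.
\end{itemize}
This gives $\dist^w_U(x^{*},y)\le 2\Delta^*$ in polynomial time, which implies the stated bound.
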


\paragraph{Ulam and Kendall-tau analysis.} The analysis for Ulam and Kendall-tau metrics requires additional notation to handle unaligned elements and pairs.
For any $x,p \in \permutations$, we denote by $I_{p}^x$ the set of unaligned characters of $x$ with $p$. For the Ulam metric, these are unaligned characters of $x$ with respect to some alignment between $p$ and $x$, while in the Kendall-tau case, $I_{p}^x$ represents the set of unaligned pairs.

We also establish a relationship between $I_{p}^x \cap I_{q}^x$ and the respective distances, in $\dist^w_\tau$ and $\dist^w_U$. For any set $S$ of elements (characters or pairs), we use $\wnorm{S}$ to denote the sum of weights of elements in $S$.

\begin{proposition}\label{prop:identity}
    For every $p,q,x \in \permutations$,
    \[
    \wnorm{I_{p}^x \cap I_{q}^x} \leq \dist (p,x) + \dist (q,x) - \dist (p,q),
    \]
Where $\dist$ is either $\dist^w_\tau$ or $\dist^w_U$.
\end{proposition}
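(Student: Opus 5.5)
The strategy is to exhibit a specific ``alignment'' between $p$ and $q$ whose cost is at most $\wnorm{I_p^x \cup I_q^x}$, and then use inclusion--exclusion on weights. Since $\dist(p,x) = \wnorm{I_p^x}$ and $\dist(q,x) = \wnorm{I_q^x}$, the claimed inequality is equivalent to
\[
\dist(p,q) \leq \wnorm{I_p^x} + \wnorm{I_q^x} - \wnorm{I_p^x \cap I_q^x} = \wnorm{I_p^x \cup I_q^x}.
\]
So it suffices to show that everything outside $I_p^x \cup I_q^x$ can be ``kept'' when transforming $p$ into $q$.

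\textbf{Kendall-tau.} Here $I_p^x$ is the set of pairs $\{e,e'\}$ ordered differently in $p$ and $x$, each weighted by $\frac{w(e)+w(e')}{2}$. Take any pair $\{e,e'\}$ not in $I_p^x \cup I_q^x$. Then $p$ agrees with $x$ on this pair, and so does $q$. Hence $p$ and $q$ agree on it, so the pair is not an inversion between $p$ and $q$.

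It follows that the set of inversions between $p$ and $q$ is contained in $I_p^x \cup I_q^x$. Summing the pair weights gives $\dist^w_\tau(p,q) \leq \wnorm{I_p^x \cup I_q^x}$, which is exactly the reformulated inequality.

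\textbf{Ulam.} Fix optimal alignments of $x$ with $p$ and with $q$, so that $I_p^x$ is the set of characters deleted or moved. In the weighted setting each such character $e$ costs $w(e)$, and $\wnorm{I_p^x} = \dist^w_U(p,x)$. The aligned characters $A_p = [n]\setminus I_p^x$ form a common subsequence of $x$ and $p$: their relative order in $p$ is the same as in $x$. Likewise, $A_q$ is a common subsequence of $x$ and $q$.

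Consider $A = A_p \cap A_q$. Its relative order in $p$ agrees with its order in $x$, which agrees with its order in $q$. So $A$ is a common subsequence of $p$ and $q$. The weighted Ulam distance is the minimum total weight of characters outside a common subsequence, so
\[
\dist^w_U(p,q) \leq \wnorm{[n]\setminus A} = \wnorm{I_p^x \cup I_q^x}.
\]
This again yields the claim.

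The main point needing care is a definitional one: confirming that the weighted Ulam distance equals the minimum weight of the complement of a common subsequence. This matches the paper's description, ``sum of weights of all elements that must be moved,'' up to the same normalization used for $\wnorm{I_p^x}$. Consistency of normalization here, including the factor $2$ relating indel distance to Ulam distance, is the only subtle step. Everything else is a direct inclusion argument.
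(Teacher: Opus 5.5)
Your proof is correct and follows the paper's argument. Inclusion--exclusion reduces the claim to $\dist(p,q)\le \wnorm{I_p^x\cup I_q^x}$. This holds because everything outside $I_p^x\cup I_q^x$ is a set of pairs (Kendall-tau) or characters (Ulam) on which $p$ and $q$ agree. You make this containment explicit for each metric, where the paper states it in one line, and you rightly require the alignments defining $I_p^x$ and $I_q^x$ to be optimal so that $\wnorm{I_p^x}=\dist(p,x)$.
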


\begin{proof}
We have $\wnorm{I_{p}^x \cap I_{q}^x} = \wnorm{I_{p}^x} + \wnorm{I_{q}^x} - \wnorm{I_{p}^x \cup I_{q}^x}$.
Whereas, $I_{p}^x \cup I_{q}^x$ is the set of unmatched elements in either $p$ or $q$, hence, $\wnorm{I_{p}^x \cup I_{q}^x} \geq \dist (p,q)$.
\end{proof}

\ifarxiv
\subsection{Hamming}

For the weighted Hamming distance, we design a majority-based algorithm for subsets $Q$ of size 3.~\cref{alg:local-hamming} constructs a consensus permutation by assigning the majority element to each position where one exists among the three input permutations. Positions without majority agreement are filled arbitrarily using the remaining unassigned elements.

\begin{algorithm}[htbp]
    \caption{Majority-based median for weighted Hamming}
    \label{alg:local-hamming}
    \SetKwInOut{KwIn}{Input}
    \SetKwInOut{KwOut}{Output}
    \KwIn{$Q$ of size 3}
    \KwOut{Local solution $y \in \permutations$}
    
    Initialize $U \gets [n]$ \tcp{Set of unassigned elements}
    
    \For{each position $k \in [n]$}{
        \If{there exists element $e$ appearing at position $k$ in at least 2 permutations in $Q$}{
            Set $y[k] \gets e$ \\
            Remove $e$ from $U$
        }
        \Else{
            Mark position $k$ as unresolved
        }
    }
    
    \For{each unresolved position $k$ (in arbitrary order)}{
        Assign $y[k]$ to any element from $U$ \\
        Remove the assigned element from $U$
    }
    
    \KwRet{$y$}
\end{algorithm}

~\cref{alg:local-hamming} operates in linear time and space complexity. 
To verify correctness, observe that its output $y$ constitutes a valid permutation since no element can simultaneously be a majority in two distinct positions. 
We now proceed to demonstrate that the weighted Hamming distance satisfies~\cref{property:formal.metric.specific}.

\begin{lemma}\label{lemma:property2hamming}
For the metric space $(\permutations, \dist^w_H)$, any subset $Q \subseteq \permutations$ of size $3$, and any local solution $y$ obtained using~\cref{alg:local-hamming}, we have $\dist^w_H(x^*, y) \leq \opttotalslack$.
\end{lemma}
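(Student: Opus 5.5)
The plan is to decompose everything position by position, handle majority positions locally, and handle the unresolved positions through a global weight-counting argument on the leftover set $U$.

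\textbf{Per-position decomposition.} For elements $a,b$ set $c(a,b)=\frac{w(a)+w(b)}{2}\indic_{a\neq b}$. This is a metric on $[n]$, and $\dist^w_H(p,q)=\sum_{k}c(p[k],q[k])$. Write $Q=\{p_1,p_2,p_3\}$. Then the total slack splits as $\opttotalslack=\sum_{k\in[n]}\Delta_k$, where
\[
\Delta_k=\sum_{1\le i<j\le 3}\bigl(c(x^*[k],p_i[k])+c(x^*[k],p_j[k])-c(p_i[k],p_j[k])\bigr)\ge 0 .
\]
Each $\Delta_k$ is nonnegative by the triangle inequality for $c$. Similarly, $\dist^w_H(x^*,y)=\sum_k c(x^*[k],y[k])$. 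So it suffices to charge each position's contribution to the $\Delta_k$ values.

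\textbf{Majority positions.} Let $k$ be a position with majority element $e$, so $y[k]=e$. If $x^*[k]=e$ the contribution is $0$. Otherwise let $f=x^*[k]\neq e$, and let $p_i,p_j$ be the two permutations that carry $e$ at position $k$. Pair $(i,j)$ alone contributes $2c(f,e)-0$ to $\Delta_k$, so $\Delta_k\ge w(f)+w(e)$. This covers the local cost $c(f,e)$ with spare room $\frac{w(f)+w(e)}{2}\ge \frac{w(f)}{2}$, which I reserve for the unresolved positions below.

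\textbf{Unresolved positions (the main obstacle).} Here the three entries $a,b,c$ at position $k$ are distinct. The difficulty is that $y[k]$ is an arbitrary leftover element whose weight is unrelated to this position, so a purely local bound fails. I handle it in two steps.

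First, a direct computation gives $\Delta_k\ge w(x^*[k])$ at every unresolved position:
\begin{itemize}
\item if $x^*[k]\notin\{a,b,c\}$, then $\Delta_k=3w(x^*[k])$;
\item if $x^*[k]=a$, then $\Delta_k=(2w(a)+w(b)+w(c))-(w(a)+w(b)+w(c))=w(a)$.
\end{itemize}

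Second, I charge the weights $w(y[k])$ globally. The elements placed at unresolved positions are exactly the set $U$ of elements that are never a majority element. Each $g\in U$ occupies some position of $x^*$. That position is either unresolved, or a majority position $k$ with $x^*[k]\neq e_k$, since $g$ is not the majority element $e_k$ there. Hence
\[
\sum_{k\text{ unresolved}}w(y[k])=\sum_{g\in U}w(g)\le \sum_{k\text{ unresolved}}w(x^*[k])+\sum_{k\text{ maj},\,x^*[k]\neq e_k}w(x^*[k]).
\]

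\textbf{Summing.} Combining the three pieces,
\[
\dist^w_H(x^*,y)\le \sum_{k\text{ maj, mismatch}}\Bigl(\tfrac{w(x^*[k])+w(e_k)}{2}+\tfrac{w(x^*[k])}{2}\Bigr)+\sum_{k\text{ unresolved}}w(x^*[k]).
\]
Each mismatched-majority term is at most $w(x^*[k])+w(e_k)\le\Delta_k$. Each unresolved term is at most $\Delta_k$. Summing over all positions gives $\dist^w_H(x^*,y)\le\opttotalslack$.
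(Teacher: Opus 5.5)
Your proof is correct. It follows the paper's skeleton: the position-wise decomposition of $\dist^w_H$ and of $\opttotalslack$, the slack bound $w(x^*[k])+w(e_k)$ at a majority position where $x^*$ deviates, and the bound $w(x^*[k])$ at a position with no majority. Where it differs is the unresolved positions, and there you are more careful than the paper.

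The paper argues that each $y[k]$ placed at an unresolved position occupies an unresolved position of $x^*$. From this it concludes the identity $\sum_{k \text{ unresolved},\, x^*[k]\neq y[k]} w(x^*[k]) = \frac12\sum_{k \text{ unresolved},\, x^*[k]\neq y[k]} \bigl(w(x^*[k])+w(y[k])\bigr)$, which would let each unresolved position pay for itself. That pairing claim fails in general, because an element of $U$ can sit, in $x^*$, at a majority position where $x^*$ disagrees with the majority.

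For example, take $\pi_1=(1,2,3)$, $\pi_2=(1,3,2)$, $\pi_3=(2,3,1)$; the algorithm outputs $y=(1,3,2)$. Now put $x=(2,3,1)$ in the role of $x^*$; the paper's argument never uses optimality of $x^*$, so this is a fair test. Then $y[3]=2$ sits at the resolved position $1$ of $x$, and the identity would force $w(1)=w(2)$.

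Your argument closes exactly this step. You inject $U$ into the unresolved positions together with the mismatched majority positions, and you pay for the latter with the half-slack $\frac{w(x^*[k])}{2}$ reserved at each mismatched majority position. You also use $\Delta_k\ge w(x^*[k])$ at every unresolved position, including those where $x^*[k]=y[k]$, which the injection may hit. So your route is a rigorous version of the paper's. It also makes explicit that the bound holds with any $x\in\permutations$ in place of $x^*$.
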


\begin{proof}
By definition of weighted Hamming distance:
\begin{align}
\dist^w_H(x^*,y) = 
\frac{1}{2}\sum_{\substack{k=1 \\ x^*[k] \neq y[k]}}^n w(x^*[k])+w(y[k])
\end{align}

Denote $Q = \{ \pi_1, \pi_2, \pi_3 \}$, then, the total slack is:
\begin{align}
\opttotalslack = \sum_{\substack{1 \leq k \leq n, \\ 1 \leq i < j \leq 3}} \left( \dist^w_H(x^*[k],\pi_i[k]) + \dist^w_H(x^*[k],\pi_j[k]) - \dist^w_H(\pi_i[k],\pi_j[k]) \right)
\end{align}

For the slack expression $\dist^w_H(x^*[k],\pi_i[k]) + \dist^w_H(x^*[k],\pi_j[k]) - \dist^w_H(\pi_i[k],\pi_j[k])$ at position $k$, we consider two cases. When $\pi_i[k] = \pi_j[k] = e$ and $x^*[k] \neq e$, this expression equals $w(x^*[k])+w(e)$. When $\pi_i[k], \pi_j[k], x^*[k]$ are three distinct elements, this expression equals $w(x^*[k])$.

By~\cref{alg:local-hamming}, position $k$ is either resolved (has majority) or unresolved (no majority). We can bound the slack as:
\begin{align*}
\opttotalslack \geq \sum_{\substack{k \text{ resolved} \\ x^*[k] \neq y[k]}} w(x^*[k]) + w(y[k]) + \sum_{\substack{k \text{ unresolved} \\ x^*[k] \neq y[k]}} w(x^*[k])
\end{align*}

For unresolved positions where $x^*[k] \neq y[k]$, note that $y[k]$ must appear at some unresolved position $k'$ in $x^*$. This means $y[k'] \neq x^*[k']$ for some unresolved position $k'$. Hence, 
\begin{align*}
\sum_{\substack{k \text{ unresolved} \\ x^*[k] \neq y[k]}} w(x^*[k]) =
\frac{1}{2}\sum_{\substack{k \text{ unresolved} \\ x^*[k] \neq y[k]}} w(x^*[k]) + w(y[k]) 
\end{align*}

Since resolved positions directly contribute $w(x^*[k]) + w(y[k])$ to the slack when $x^*[k] \neq y[k]$, and unresolved positions contribute sufficient slack through the pairing argument, we have:
\begin{align*}
\opttotalslack \geq \dist^w_H(x^*,y)
\end{align*}
\end{proof}

\subsection{Spearman's footrule}

For the Spearman's footrule distance, we design a position-wise median algorithm for subsets $Q$ of size 3.
~\cref{alg:local-spearman} operates in two phases: first, it computes a pseudo-permutation $z$ by taking the median element at each position independently among the three input permutations, second, since $z$ may not constitute a valid permutation due to potential duplicate elements, it converts $z$ to a valid permutation by sorting the element-position pairs and reassigning ranks according to the sorted order of elements.

\begin{algorithm}[htbp]
    \caption{Position-wise median for Spearman's footrule}
    \label{alg:local-spearman}
    \SetKwInOut{KwIn}{Input}
    \SetKwInOut{KwOut}{Output}
    \KwIn{$Q$ of size 3}
    \KwOut{Local solution $y \in \permutations$}
    
    Initialize pseudo-permutation $z$ of length $n$
    
    \For{each position $k \in [n]$}{
        $z[k] \gets \texttt{median}(Q,k)$ \tcp{median element at position $k$}
    }
    
    \tcp{Convert pseudo-permutation to valid permutation}
    Let $S = \{(z[k], k) : k \in [n]\}$ \tcp{Element-position pairs}
    
    Sort $S$ by element value to get $S' = \{(e_1, k_1), (e_2, k_2), \ldots, (e_n, k_n)\}$
    
    \For{$i = 1$ to $n$}{
        $y[k_i] \gets i$ \tcp{Assign rank $i$ to position $k_i$}
    }
    
    \KwRet{$y$}
\end{algorithm}

~\cref{alg:local-spearman} operates in linear time and space complexity as the sorting step can be implemented using counting sort.
Observe that the output $y$ is a valid permutation as the sorting step ensures each element from $[n]$ appears exactly once. The algorithm minimizes position-wise Spearman's footrule distances in the first phase, then finds the closest valid permutation in the second phase.

We use the following simple rearrangement inequality (a special case of more general results in Day~\cite{day1972rearrangement}) to establish our results.

\begin{lemma}[Day~\cite{day1972rearrangement}]\label{lemma:order}
If $a_1 \leq a_2 \leq \ldots \leq a_n$ and $b_1 \leq b_2 \leq \ldots \leq b_n$, then, 
\[
\sum_{k=1}^n |a_k-b_k| = \min_{\pi \in \permutations} \sum_{k=1}^n |a_k-b_{\pi(k)}|.
\]
\end{lemma}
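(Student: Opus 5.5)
We prove the statement of \cref{lemma:order} by an exchange (uncrossing) argument. Fix any $\pi \in \permutations$ and view the right-hand side as a cost of matching the $a_k$'s to the $b$'s. The plan is to show that whenever the matching contains a crossing, i.e.\ indices $k < k'$ with $\pi(k) > \pi(k')$, swapping the two images does not increase the sum. Repeated swapping then turns $\pi$ into the identity without the cost ever increasing, giving $\sum_k |a_k - b_k| \leq \sum_k |a_k - b_{\pi(k)}|$. The reverse inequality is trivial by taking $\pi$ the identity.

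The key step is the two-element inequality: for $a \leq a'$ and $b \leq b'$,
\[
|a-b| + |a'-b'| \leq |a-b'| + |a'-b|.
\]
I would prove it by viewing each side as the total length of two intervals on the real line. The left side covers every point at most as often as the right side does. Equivalently, use $|u-v| = \int_{\mathbb{R}} |\indic_{t<u} - \indic_{t<v}|\,dt$. For a fixed $t$, the four indicators $\indic_{t<a} \leq \indic_{t<a'}$ and $\indic_{t<b} \leq \indic_{t<b'}$ are each monotone, so the pointwise inequality reduces to a finite check over $0/1$ values. The only case needing attention is $\indic_{t<a}=\indic_{t<b}=0$ and $\indic_{t<a'}=\indic_{t<b'}=1$. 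There the left side is $0$ and the right side is $2$, so the inequality still holds, and all other cases are immediate. Alternatively, one can do a short case analysis on the relative order of the four numbers.

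To make the uncrossing terminate, swap $\pi(k)$ and $\pi(k')$ for a crossing pair. This strictly decreases the number of inversions of $\pi$, so the process ends at the identity after finitely many steps. Apply the two-element inequality with $a=a_k$, $a'=a_{k'}$, $b=b_{\pi(k')}$ and $b'=b_{\pi(k)}$; its hypotheses $a\leq a'$ and $b\leq b'$ hold because both sequences are sorted. The pointwise interval comparison is the only nontrivial part of the argument. I expect no real obstacle beyond stating it cleanly.
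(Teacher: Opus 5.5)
Your proof is correct. The two-point inequality $|a-b|+|a'-b'|\le |a-b'|+|a'-b|$ for $a\le a'$, $b\le b'$ holds: the pointwise indicator check gives equality in every case except the one you single out, where the left side is $0$ and the right side is $2$. You apply it with the right labelling, so each uncrossing does not increase the cost. Swapping an inverted pair strictly lowers the inversion count, so the process ends at the identity.

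The paper gives no proof of this lemma. It only quotes the statement as a special case of Day's rearrangement inequalities. So the comparison is between a citation and your self-contained exchange argument. The citation is shorter and places the fact in a general theory. Your argument makes the Spearman's footrule section independent of the reference, at the cost of a paragraph.

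Your uncrossing step uses only the two-point (Monge) inequality. That inequality remains true with $|\cdot|$ replaced by any convex $f$, because $a-b$ and $a'-b'$ lie between $a-b'$ and $a'-b$ and have the same sum. Hence the same argument already proves the sorted-matching statement for costs $\sum_k f(a_k-b_{\pi(k)})$, in the spirit of the more general results the paper alludes to. Your indicator/integral verification is specific to the absolute value; the convexity argument, or the four-number case analysis you mention, is the version that generalizes.

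Your argument also needs only non-strict sortedness. This matters for the paper's application, where the pseudo-permutation $z$ may contain repeated values that are matched to the ranks $1,\dots,n$.
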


From~\cref{lemma:order}, we obtain that $y$ is the closest permutation to the pseudo-permutation $z$.
\begin{equation}\label{eq:sf}
\dist_F(z,y) = \min_{x \in \permutations} \dist_F(z,x)
\end{equation}

To show that the Spearman's footrule distance satisfies~\cref{property:formal.metric.specific}, we bound the distance from any permutation to $z$.

\begin{lemma}\label{lemma:sf}
For every $x \in \permutations$, $\dist_F(x,z) \leq \frac{1}{2}\totalslack{x}$.
\end{lemma}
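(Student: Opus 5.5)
The statement is position-wise, so I would decompose both sides over positions. Recall $\dist_F(x,z)=\sum_{k=1}^n |x[k]-z[k]|$, and for $Q=\{\pi_1,\pi_2,\pi_3\}$,
\[
\totalslack{x}=\sum_{k=1}^n \sum_{1\le i<j\le 3}\Bigl(|x[k]-\pi_i[k]|+|x[k]-\pi_j[k]|-|\pi_i[k]-\pi_j[k]|\Bigr).
\]
It therefore suffices to prove, for each fixed position $k$, the scalar inequality
\[
|x[k]-z[k]|\le \tfrac12 \sum_{1\le i<j\le 3}\Bigl(|x[k]-\pi_i[k]|+|x[k]-\pi_j[k]|-|\pi_i[k]-\pi_j[k]|\Bigr).
\]
Summing over $k$ then gives the lemma. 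Note that permutation validity of $x$ plays no role here; the claim holds for any vector $x$.

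For a fixed $k$, write the three values as $a\ge b\ge c$, so that $z[k]=b$ is the median, and let $t=x[k]$. The right-hand side is half of $2(|t-a|+|t-b|+|t-c|)-(|a-b|+|b-c|+|a-c|)$. Since $|a-b|+|b-c|+|a-c|=2(a-c)$, the right-hand side simplifies to
\[
|t-a|+|t-b|+|t-c|-(a-c).
\]
Now use the fact that $|t-a|+|t-c|\ge a-c$, with equality exactly when $t\in[c,a]$. This gives a right-hand side of at least $|t-b|=|x[k]-z[k]|$, which is the required inequality. Equivalently, one can check the three cases $t\in[c,a]$, $t>a$ and $t<c$; in the last two the excess is $2(t-a)$ or $2(c-t)$.

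There is no real obstacle here. The only care needed is the bookkeeping: the $\tfrac12$ must exactly absorb the fact that each value appears in two of the three pairs, and ties among $a,b,c$ must be harmless. Ties do not cause trouble because the median is still well defined.
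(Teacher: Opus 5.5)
Your argument is correct and is essentially the paper's. Both proofs decompose $\dist_F$ and the total slack position by position, reducing the lemma to the scalar inequality $|x[k]-z[k]|\le\frac12\sum_{i<j}(\cdots)$ at each $k$. The only difference is how that scalar inequality is checked: the paper picks the single pair formed by the median value and the input value on the far side of it from $x[k]$, whose slack is exactly $2|x[k]-z[k]|$, and discards the other two nonnegative pair slacks, whereas you compute the whole per-position sum in closed form as $|t-b|+\bigl(|t-a|+|t-c|-(a-c)\bigr)$ and then use $|t-a|+|t-c|\ge a-c$.
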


\begin{proof}
Denote $Q = \{ \pi_1, \pi_2, \pi_3 \}$, without loss of generality let $\pi_i[k]$ be the median element at position $k$, that is $z[k]=\pi_i[k]$. Also, let $\pi_j[k]$, with $j \neq i$, be the element such that $x[k] \leq \pi_i[k] \leq \pi_j[k]$ or $x[k] \geq \pi_i[k] \geq \pi_j[k]$. Consider again the expression $\dist_F(x,\pi_i[k])+\dist_F(x,\pi_j[k])-\dist_F(\pi_i[k],\pi_j[k])$ at position $k$. By our selection of $i,j$ we get that this expression equals $2\dist_F(x[k],\pi_i[j])=2\dist_F(x[k],z[k])$. Therefore,
\begin{align*}
    \dist_F(x[k],z[k]) \leq \frac{1}{2}\sum_{1 \leq i < j \leq 3}\dist_F(x,\pi_i[k])+\dist_F(x,\pi_j[k])-\dist_F(\pi_i[k],\pi_j[k]).
\end{align*}
Summing over all positions, 
\begin{align*}
    \dist_F (z,x) \leq \frac{1}{2} \sum_{k=1}^n \sum_{1 \leq i < j \leq 3}\dist_F(x,\pi_i[k])+\dist_F(x,\pi_j[k])-\dist_F(\pi_i[k],\pi_j[k]) = \frac{1}{2}\totalslack{x}.
\end{align*}
\end{proof}

Combining~\cref{lemma:sf} and Equation~\ref{eq:sf}, with the triangle inequality, we have 
\begin{align*}
    \dist_F(x^*,y) \leq \dist_F(x^*,z) + \dist_F(z,y) \leq \frac{1}{2}\opttotalslack + \frac{1}{2}\opttotalslack = \opttotalslack.
\end{align*}

\subsection{Kendall-tau}\label{sec:the_median_problem_under_kendall_tau_distance}
For the weighted Kendall-tau distance, we propose an  algorithm that constructs a tournament graph and solves the resulting feedback arc set problem. The algorithm builds a majority graph $G_Q$ with vertex set $[n]$, where a directed edge $(a,b)$ exists if element $a$ precedes element $b$ in at least two of the three input permutations from $Q$. Each edge $(a,b)$ is assigned weight $\frac{1}{2}(w(a) + w(b))$. To solve the feedback arc set problem on $G_Q$, we employ the KWIK-SORT algorithm, which provides a 2-approximation guarantee when edge weights satisfy the triangle inequality~\cite{ACN08}. This algorithm runs in $\bigO{n \log n}$ time and uses linear space~\cite{im2020fast}.

For $x \in \permutations$, define $B_x$ as the set of element pairs that are unaligned when comparing $x$ with at least two permutations from $Q = \{ \pi_1, \pi_2, \pi_3 \}$,
\[
B_x = \bigcup_{1 \leq i < j \leq 3} (I_{\pi_{i}}^x \cap I_{\pi_{j}}^x).
\]

Note that for a pair of elements $\{a,b\}$ where $a$ precedes $b$ in permutation $x$, we have $\{a,b\} \notin B_x$ if and only if $a$ precedes $b$ in the majority of permutations from $Q$.
Also, using~\cref{prop:identity}, we have $\wnorm{B_{x^*}} \leq \opttotalslack$, where $\wnorm{B_{x^*}}$ denotes the sum of weights of elements in $B_{x^*}$.

Since $x$ is a valid permutation, we can transform $G_Q$ into an acyclic graph by redirecting edges of total weight at most $\wnorm{B_x}$. Moreover, every orientation of the edges that transforms $G_Q$ into an acyclic graph corresponds to a permutation.
This is precisely the feedback arc set problem, where the optimal solution has cost at most $\wnorm{B_{x^*}}$. 
As the weights in the graph satisfy the triangle inequality, the algorithm returns a permutation $y$ that is obtained by redirecting a set of edges $B_y$ with $\wnorm{B_y} \leq 2\wnorm{B_{x^*}} \leq 2\opttotalslack$.

\begin{lemma}
    For the metric space $(\permutations, \dist^w_\tau)$, any subset $Q \subseteq \permutations$ of size $3$, and any local solution $y$ obtained using the KWIK-SORT algorithm over $G_Q$, we have $\dist^w_\tau(x^*, y) \leq 3\opttotalslack$.
\end{lemma}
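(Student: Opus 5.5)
The plan is to bound $\dist^w_\tau(x^*,y)$ by splitting, pair by pair, according to agreement with the majority graph $G_Q$. Every unordered pair $\{a,b\}$ has exactly one majority orientation in $G_Q$, because $|Q|=3$ and a strict majority always exists. If $x^*$ and $y$ order $\{a,b\}$ differently, then exactly one of them disagrees with the majority orientation. In the first case $\{a,b\}\in B_{x^*}$; in the second, $\{a,b\}$ is one of the edges reversed by the KWIK-SORT output, so $\{a,b\}\in B_y$. Since the weight of the pair, $\frac{1}{2}(w(a)+w(b))$, is the same in $\dist^w_\tau$ and in $\wnorm{\cdot}$, summing over discordant pairs gives
\[
\dist^w_\tau(x^*,y)\le \wnorm{B_{x^*}}+\wnorm{B_y}.
\]

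The remaining work is to control the two terms. For the first, note that a pair lies in $B_{x^*}$ exactly when $x^*$ disagrees with at least two of $\pi_1,\pi_2,\pi_3$ on it, i.e.\ when it lies in some $I^{x^*}_{\pi_i}\cap I^{x^*}_{\pi_j}$. Applying \cref{prop:identity} to each of the three pairs $(i,j)$ and taking the union bound gives $\wnorm{B_{x^*}}\le\opttotalslack$. For the second term, I will use the discussion preceding the lemma: $\wnorm{B_{x^*}}$ is the cost of the feasible feedback arc set induced by $x^*$, so the optimum is at most $\wnorm{B_{x^*}}$. KWIK-SORT is a 2-approximation on $G_Q$, so the reversed set satisfies $\wnorm{B_y}\le 2\wnorm{B_{x^*}}\le 2\opttotalslack$. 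Adding the two bounds yields $3\opttotalslack$.

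The main obstacle is justifying that KWIK-SORT's 2-approximation guarantee from~\cite{ACN08} applies to this weighted tournament. The guarantee requires the pair weights to satisfy the triangle inequality, and the edges of the tournament here are all disjoint from one another in the relevant sense, with edge weights $\frac{1}{2}(w(a)+w(b))$. First I will check the triangle inequality directly: $\frac{1}{2}(w(a)+w(c))\le \frac{1}{2}(w(a)+w(b))+\frac{1}{2}(w(b)+w(c))$, which holds because weights are nonnegative. Then I will verify that the tournament (unweighted-orientation, weighted-cost) setting of~\cite{ACN08} covers edges whose reversal cost is a single weight $w_{ab}$ with the complementary orientation costing $0$, and that this cost structure satisfies their conditions. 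If it does not, the fallback is to cite the weighted FAS-tournament result with triangle-inequality weights directly, as the paper does when it asserts the bound $\wnorm{B_y}\le 2\wnorm{B_{x^*}}$.
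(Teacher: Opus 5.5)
Your outline is the paper's own argument. You split the pairs on which $x^*$ and $y$ disagree according to which of the two departs from the majority of $Q$, bound $\wnorm{B_{x^*}}\le\opttotalslack$ via \cref{prop:identity}, and bound $\wnorm{B_y}\le 2\wnorm{B_{x^*}}$ by the KWIK-SORT guarantee. The first two steps are correct. The third is a genuine gap, and it is the same unsupported step the paper relies on.

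The inequality you plan to check, $\frac12(w(a)+w(c))\le\frac12(w(a)+w(b))+\frac12(w(b)+w(c))$, is not the hypothesis of the 2-approximation in~\cite{ACN08}. There the triangle inequality is imposed on the \emph{directed} costs, $\gamma(u\prec v)\le\gamma(u\prec z)+\gamma(z\prec v)$, where $\gamma(u\prec v)$ is the cost of placing $u$ before $v$, and it is used together with probability constraints. In your instance, following a majority arc costs $0$ and reversing the arc on $\{u,v\}$ costs $\frac12(w(u)+w(v))$. So on any directed triangle $a\to b\to c\to a$ the condition would require $\frac12(w(a)+w(c))=\gamma(a\prec c)\le\gamma(a\prec b)+\gamma(b\prec c)=0$. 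It therefore fails exactly when $G_Q$ has a cycle, i.e.\ whenever there is anything to do.

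A better citation cannot rescue the step either, because the bound fails for KWIK-SORT, and so does the lemma itself. Take $n=3$, $w(a)=w(b)=\epsilon$, $w(c)=M$, and $P=Q=\{(a,b,c),(b,c,a),(c,a,b)\}$.
\begin{itemize}
\item The majority tournament is the cycle $a\to b\to c\to a$.
\item The median is $x^*=(b,c,a)$, which reverses only the cheap pair $\{a,b\}$. Hence $\wnorm{B_{x^*}}=\epsilon$ and $\opttotalslack=2\epsilon$.
\item If the pivot is $a$ or $b$ (probability $2/3$), KWIK-SORT returns $(c,a,b)$ or $(a,b,c)$. Each reverses an arc of weight $\frac12(\epsilon+M)$.
\item For either output, $\dist^w_\tau(x^*,y)=\epsilon+\frac12(\epsilon+M)$, which exceeds $3\opttotalslack=6\epsilon$ once $M>9\epsilon$.
\end{itemize}
Taking many disjoint copies, placed in the same block order in all three permutations, makes the failure occur with high probability, so rerunning does not help. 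What is missing is a local algorithm with a real guarantee $\wnorm{B_y}\le C'\min_{x}\wnorm{B_x}$ for feedback arc set on $G_Q$ with these vertex-induced weights. KWIK-SORT provides no such guarantee once the element weights are non-uniform.
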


\begin{proof}
If both $x^*$ and $y$ agree with the majority of permutations on pair $\{a,b\}$, then clearly $x^*$ and $y$ agree on pair $\{a,b\}$. Hence, if $x^*$ and $y$ disagree on pair $\{a,b\}$, then $\{a,b\} \in B_y$ or $\{a,b\} \in B_{x^*}$. Therefore,
\begin{align*}
\dist^w_\tau(x^*, y) \leq  \wnorm{B_{x^*} \cup B_y} \leq \wnorm{B_{x^*}} + \wnorm{B_y} \leq 3\wnorm{B_{x^*}} \leq 3\opttotalslack.   
\end{align*}
\end{proof}

\subsection{Ulam}\label{subsec:ulam}
Analogously to the Kendall-tau case, we construct a tournament graph $G_Q$, however in this case we use $Q$ of size $5$. For the weighted Ulam distance, similar to~\cite{DBLP:conf/innovations/Chakraborty0K23}, we solve the resulting feedback vertex set problem and obtain a directed acyclic graph (DAG). 
We return the topological order of the remaining DAG and append the removed feedback vertices arbitrarily to the end of the resulting permutation.
Note that the weights are now assigned to the vertices. To solve this we use the 2-approximating feedback vertex set algorithm of Lokshtanov et al.~\cite{DBLP:journals/talg/LokshtanovMMPPS21}. This is a randomized algorithm that runs in time $\bigO{n^{17}}$.

For $x \in \permutations$, define $B_x$ as the set of elements that are unaligned when comparing $x$ with at least two permutations from $Q = \{ \pi_1, \pi_2, \pi_3, \pi_4, \pi_5 \}$,
\[
B_x = \bigcup_{1 \leq i < j \leq 5} (I_{\pi_{i}}^x \cap I_{\pi_{j}}^x).
\]
Again, using~\cref{prop:identity}, 
\begin{equation}\label{eq:ulambad}
    \wnorm{B_{x}} \leq \sum_{1 \leq i < j \leq 5} \wnorm{I_i \cap I_j} \leq \totalslack{x}.
\end{equation}

If we remove the vertices in $B_x$ from the graph $G_Q$ we get a DAG. To see this, consider any two vertices in the remaining graph, say $a,b$. Since each of $a$ and $b$ is aligned with at least 4 permutations, both $a$ and $b$ are aligned in at least 3 permutations. Hence, the edge between $a$ and $b$ corresponds to the order of $a$ and $b$ in $x$. Since $x$ is a permutation, the remaining graph is a DAG.

We now show that the weighted Ulam distance satisfies~\cref{property:formal.metric.specific}. Similar ideas were used in~\cite{DBLP:conf/innovations/Chakraborty0K23}, however, we extend the proof for every two permutations in $\permutations$.

\begin{lemma}\label{lemma:ulam}
For every $x,y \in \permutations$, $\dist^w_U (x,y) \leq \totalslack{x} + \totalslack{y}$.
\end{lemma}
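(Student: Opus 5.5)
We will show the stronger chain $\dist^w_U(x,y) \leq \wnorm{B_x} + \wnorm{B_y}$. Combined with \eqref{eq:ulambad}, which gives $\wnorm{B_x}\le\totalslack{x}$ and $\wnorm{B_y}\le\totalslack{y}$, this yields the lemma. This is also the form used in the proof of \cref{lem:bound.gmedian.to.offlineOut}.

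The plan is to exhibit a common subsequence of $x$ and $y$ that contains every element outside $B_x\cup B_y$. In the weighted Ulam metric, the distance is at most the total weight of elements that are not kept in some common subsequence, since all other elements can be deleted and reinserted. So it suffices to show the following: if $a,b\notin B_x\cup B_y$, then $a$ and $b$ appear in the same relative order in $x$ and in $y$.

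Fix such $a,b$. Since $a\notin B_x$, $a$ is unaligned with at most one $\pi_i$ in the fixed optimal alignments between $x$ and $\pi_1,\dots,\pi_5$. The same holds for $b$. Hence both $a$ and $b$ are aligned with $x$ in at least $3$ of the five permutations. In each such $\pi_i$, the alignment is order-preserving, so the order of $a,b$ in $\pi_i$ equals their order in $x$. Thus the order of $a,b$ in $x$ agrees with their order in at least three of the $\pi_i$; call this the majority order. The same argument applied to $y$ (with its own alignments) shows that $y$ also orders $a,b$ according to the majority of $\pi_1,\dots,\pi_5$. Five is odd, so the majority is well defined and unique. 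Therefore $x$ and $y$ agree on the pair. This is the same reasoning used above to show that $G_Q$ minus $B_x$ is a DAG.

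Consequently, the elements of $[n]\setminus(B_x\cup B_y)$ appear in the same order in $x$ and $y$ and form a common subsequence. Every element outside it is moved, so its weight is counted once. This gives $\dist^w_U(x,y)\le \wnorm{B_x\cup B_y}\le\wnorm{B_x}+\wnorm{B_y}$. The only care needed is in the normalization: under the stated weighted Ulam convention (sum of weights of moved elements) this inequality holds directly. If indel distance is meant instead, each moved element costs twice its weight, and the argument must use the sharper bound $\wnorm{B_x}\le\frac12\totalslack{x}$ or otherwise keep track of the factor. I expect this bookkeeping, verifying that \cref{prop:identity} is summed with the right multiplicities over pairs $i<j$, to be the main point to check. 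The combinatorial core is straightforward.
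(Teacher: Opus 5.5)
Your proof is correct and is essentially the paper's own argument. Elements outside $B_x\cup B_y$ are aligned in at least three of the five permutations for both $x$ and $y$, so $x$ and $y$ both follow the majority order on them; this gives a common subsequence, hence $\dist^w_U(x,y)\le\wnorm{B_x\cup B_y}\le\wnorm{B_x}+\wnorm{B_y}$, and \eqref{eq:ulambad} finishes. Your normalization remark is also sound: the paper uses the sum-of-moved-weights convention, under which this bound holds directly, and under an indel convention the factor of $2$ cancels exactly as you describe.
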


\begin{proof}
Let $G_x = [n] \setminus B_x$, (resp. $G_y$), is the set of elements that are aligned in at least 4 permutations. For every pair of elements $a,b \in G_x$, in at least three permutations from $Q$ both $a,b$ are matched. Thus, the elements in $G_x$ form a subsequence in $x$ such that the relative order of every $a,b \in G_x$ must be consistent with their majority order across the permutations in $Q$ (resp. for $G_y$).
Consequently, the relative order of every two $a,b \in G_{x} \cap G_y$ must match the majority order in $Q$. It follows that $G_{x} \cap G_y$ corresponds to a common subsequence of $x$ and $y$.

Therefore,
\begin{align*}
    \dist^w_U (x,y) \leq \wnorm{B_x \cup B_y} \leq \wnorm{B_x} + \wnorm{B_y} \leq \totalslack{x} + \totalslack{y}
\end{align*}
The last inequality follows from Equation~\ref{eq:ulambad}.
\end{proof}

Since the local solution is a 2-approximation we get that its cost is at most twice that of an optimal solution for $Q$ which is at most $2\opttotalslack$. Together with~\cref{lemma:ulam}, we obtain~\cref{property:formal.metric.specific} with $r=5$ and $C=3$, namely,
$\dist^w_U(x^*,y) \leq 3\opttotalslack$.

\fi

%% file: MPCimplementation.tex
\section{MPC Implementation}\label{sec:mpc_implementation}
In this section, we present the implementation of our framework (\cref{alg.space.efficient.general.framework}) within the \gls{mpc} model. We work in the sublinear regime, where the local memory per machine is $ \otilda{ n^{1-\mpcSpaceRegime} } $ for a constant $ \mpcSpaceRegime \in (0,1) $. 
The set of elements and the set of positions, each represented by $\{1, 2, \dots, n\}$, is partitioned into $ n^{\mpcSpaceRegime} $ subsets $ \Sigma_{1} \cup \Sigma_{2} \cup \dots \cup \Sigma_{n^{\mpcSpaceRegime}} $, where $ \Sigma_{i} = \{(i-1)n^{\mpcSpaceRegime}+1, (i-1)n^{\mpcSpaceRegime}+2, \dots, in^{\mpcSpaceRegime}\} $. When sampling permutations, we assign $ n^{\mpcSpaceRegime} $ machines for each permutation. The positions of the elements in $ \Sigma_{i} $ and the elements located in the positions in $ \Sigma_{i} $ of each permutation are stored on their $ i $'th machine.

Implementing $ \scalableCycleRemoval $ in MPC yields a $ (2-\alpha) $-approximation algorithm for the Ulam $ 1 $-median problem with $ \bigO{1} $ rounds and $ \otilda{n^{1+6\mpcSpaceRegime}} $ total space.

\mpcUlamResult*

We also demonstrate the applicability of our framework to weighted settings. We implement the algorithms for computing local solutions under element-weighted Hamming, Spearman's footrule, and Kendall-tau distances.

\mpcresult*

We first present MPC implementations of algorithms that produce local solutions for Ulam, weighted Hamming, Spearman's footrule, and weighted Kendall-tau distances. We then implement the complete framework in the MPC model using these local solution algorithms as subroutines.

\subsection{Finding a local solution under Ulam distance in MPC}\label{sec:mpc.Ulam} 
We implement the offline $ \scalableCycleRemoval $ algorithm in the \gls{mpc} model. We proceed in four phases.

\paragraph{Phase 1}\label{par:mpc.local.round.one} 
(\texttt{\nameref*{sub:windows_decomposition}} and \texttt{\nameref*{sub:local_aggregating_algorithm}}) For each $ j\in \{ 1,2, \dots, n^{\mpcSpaceRegime} \} $, and for each tuple of five windows $ \left(\{ \startPoint{1}, \endPoint{1} \}, \{ \startPoint{2}, \endPoint{2} \}, \dots, \{ \startPoint{5}, \endPoint{5} \}\right) \in \windTupSet{j} $, we assign one machine with input consisting of: $ j $, $ \left(\pi_{1}[\startPoint{1}, \endPoint{1}), \pi_{2}[\startPoint{2}, \endPoint{2}), \dots, \pi_{5}[\startPoint{5}, \endPoint{5}) \right) $, and $ (\pair{ \startPoint{1}, \endPoint{1} }, \pair{ \startPoint{2}, \endPoint{2} }, \dots, \pair{ \startPoint{5}, \endPoint{5} }) $. This machine runs~\texttt{\nameref*{sub:local_aggregating_algorithm}} algorithm on its input to obtain a candidate block $ \mpcLocalOut = \localAggregation\left(\{ \pi_{i}[\startPoint{i}, \endPoint{i}\right) \}_{i=1}^{5}) $. The output of this machine is the tuple
\begin{align}
    \left(j, \mpcLocalOut, \pair{ \startPoint{1}, \endPoint{1} }, \pair{ \startPoint{2}, \endPoint{2} }, \dots, \pair{ \startPoint{5}, \endPoint{5} }, \sum_{i=1}^{5}\ed{\mpcLocalOut, \pi_{i}[\startPoint{i}, \endPoint{i})} \right). \nonumber
\end{align}

\paragraph{Phase 2}\label{par:mpc.local.round.two} 
(\texttt{\nameref*{sub:combining_blocks}}) the output of every machine in~\nameref*{par:mpc.local.round.one} is collected in one machine. We group these outputs based on the value of $ j $, forming sets $ C_{ 1 }, C_{ 2 }, \dots, C_{ n^{\mpcSpaceRegime} } $. These sets allows us to run~\cref{alg.combining.blocks}
on this machine exactly the same as in the offline setting. Let $ \answer $ be the output computed by this machine.

\paragraph{Phase 3}\label{par:mpc.local.round.three} 
(Traceback) The output $ \answer $ obtained in~\nameref*{par:mpc.local.round.two} implicitly defines a sequence of $ n^{\mpcSpaceRegime} $ blocks. Each block corresponds to either a candidate block generated by the~\texttt{\nameref*{sub:local_aggregating_algorithm}} algorithm in~\nameref*{par:mpc.local.round.one}, or a dummy block consisting of $ \blSize $ dummy elements. We reconstruct this sequence by backtracking via the array $ P $, starting from $ \answer.P $. This process identifies the machines from~\nameref*{par:mpc.local.round.one} responsible for the computed blocks. Additionally, we allocate new machines to generate the required blocks of dummy elements. The union of these machines, denoted by $ \machineSet{\offlineIntS} $, collectively stores the intermediate string $ \offlineIntS $.

\paragraph{Phase 4}\label{par:mpc.local.round.four} 
(\texttt{\nameref*{sub:postprocessing}}) In this phase, we transform $ \offlineIntS $ into the output permutation $ \mpcOut $ by simulating the \texttt{\nameref*{sub:postprocessing}} step of the offline algorithm. This involves two operations: removing the leftmost $ \specialChar $ elements until the length of $ \offlineIntS $ reduces to $ n $, and replacing the remaining $ \specialChar $ elements with distinct unused elements in increasing order.

We simulate the first operation in constant rounds. We first compute the total number of $ \specialChar $ elements to be removed using a constant-depth broadcast tree. Then, also using a broadcast tree of constant depth, we ensure that each machine in $ \machineSet{\offlineIntS} $ learns exactly how many of its own $ \specialChar $ elements must be removed. Let $ \offlineIntS' $ be the resulting string, distributed across the machine set $ \machineSet{\offlineIntS'} $.

It remains to replace the remaining $ \specialChar $ positions in $ \offlineIntS' $ with unused elements. We employ two sets of auxiliary machines: $ \machine{i}{1} $ to manage unused elements in $ \Sigma_{i} $, and $ \machine{i}{2} $ to manage unassigned positions within the range $ \Sigma_{i} $.

The machines in $ \machineSet{\offlineIntS'} $ send all used elements belonging to $ \Sigma_{i} $ to $ \machine{i}{1} $. Each machine $ \machine{i}{1} $ can then identify the complementary set of unused elements.

We first determine the global position of every block in $ \offlineIntS' $. This is achieved by computing the prefix sum of block lengths via a broadcast tree. With this offset information, machines in $ \machineSet{\offlineIntS'} $ identify the global indices of their $ \specialChar $ elements and send these positions to the corresponding machines $ \machine{i}{2} $.

Finally, we assign the unused elements to the unassigned positions. Although the lists in $ \machine{i}{1} $ and $ \machine{i}{2} $ are locally sorted, the number of elements in machine $ \machine{i}{1} $ may not be the same as the number of positions in machine $ \machine{i}{2} $. To resolve this, we compute the rank of each unused element (relative to all unused elements) and each unassigned position (relative to all unassigned positions) using a constant-depth broadcast tree. An element and a position sharing the same rank $ k $ are then paired: specifically, the element is routed to the machine in $ \machineSet{\offlineIntS'} $ that holds the position of rank $ k $, completing the permutation.

\begin{lemma}\label{lem:mpc.scalable.cycle.removal}
    The $ \scalableCycleRemoval $ algorithm can be implemented in \gls{mpc} model in $ \bigO{1} $ rounds, using $ \bigO{n^{1-\mpcSpaceRegime}} $ memory per machine and $ \bigO{n^{7\mpcSpaceRegime}\log^{8}n} $ machines. 
\end{lemma}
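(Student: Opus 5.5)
We prove \cref{lem:mpc.scalable.cycle.removal} by accounting separately for each of the four phases, checking that each uses $\bigO{1}$ rounds and respects the memory bounds. The machine count is dominated by Phase~1: we use one machine per tuple in $\windTupSet=\bigcup_j \windTupSet{j}$, and the analysis of the selective grouping strategy gives $\card{\windTupSet}=\bigO{n^{7\mpcSpaceRegime}\log^{8}n}$. Every later phase uses only $\bigO{n^{\mpcSpaceRegime}}$ additional machines (block machines, dummy-block machines, and the auxiliary $\machine{i}{1},\machine{i}{2}$), which is absorbed.

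\textbf{Phase~1 (one machine per tuple).} The machine receives five substrings $\pi_i[\startPoint{i},\endPoint{i})$. By \cref{eq:mpc_ulam_end_point_choices}, every window has length at most $\blSize+\blSize/\edErr=\bigO{n^{1-\mpcSpaceRegime}}$ for constant $\edErr$, and degenerate windows are empty. So the local input fits in memory.

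Delivering the substrings takes one routing round. Each permutation is stored in position ranges $\Sigma_i$ of size $n^{1-\mpcSpaceRegime}$, so a window touches $\bigO{1}$ storage machines. The subtle point is fan-out: a single storage machine may have to send to many tuple-machines. Total communication is bounded by the total space, and the standard MPC replication technique (a constant-depth broadcast tree with fan-out $n^{\Omega(1)}$) delivers the copies in $\bigO{1}$ rounds.

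Locally, the machine runs \cref{alg.local.cycle.removal}, whose graph has $\bigO{n^{1-\mpcSpaceRegime}}$ vertices. The edges are never stored: an edge's direction is computed on demand from the positions of its two endpoints in the five substrings, so space stays linear in the input. Triangle removal and topological sort are polynomial time and need only linear extra space. The five indel distances $\ed{\mpcLocalOut,\pi_i[\startPoint{i},\endPoint{i})}$ are computed by an LCS of permutation-like strings, which takes linear space. The output is a tuple of summary values together with the block $\mpcLocalOut$, and the block stays on the machine.

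\textbf{Phase~2 (the main obstacle).} This is the step I expect to be hardest to fit within the memory bound. Each summary is $\bigO{1}$ words, but there are $\card{\windTupSet}=\bigO{n^{7\mpcSpaceRegime}\log^8 n}$ of them. A single machine can hold all summaries only if $7\mpcSpaceRegime\le 1-\mpcSpaceRegime$, which holds for $\mpcSpaceRegime<1/8$ up to polylog factors; this matches the restriction in \cref{thm:mpc.main.ulam.theorem}. Given that, one machine runs \cref{alg.combining.blocks} on the summaries alone, since the DP uses only the indices, the $\ed_{\tuple a}$ values, and $j$. It is polynomial time and uses space linear in $\card{\windTupSet}$. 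Backtracking through $P$ yields at most $n^{\mpcSpaceRegime}$ tuple identifiers, which are broadcast to the relevant machines in one round.

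\textbf{Phases~3--4.} Each chosen Phase~1 machine keeps its block, and $\bigO{n^{\mpcSpaceRegime}}$ new machines each hold a block of $\blSize$ dummies. From here only $n^{\mpcSpaceRegime}$ machines participate, each with $\bigO{n^{1-\mpcSpaceRegime}}$ data. Every step in Phase~4 reduces to prefix sums over these $n^{\mpcSpaceRegime}$ machines:
\begin{itemize}
\item counting the $\specialChar$'s to delete,
\item assigning per-machine deletion quotas from left to right,
\item computing global offsets of the blocks,
\item ranking the unused elements and the free positions.
\end{itemize}
A broadcast/aggregation tree with fan-out $n^{1-\mpcSpaceRegime}$ has depth $\lceil \mpcSpaceRegime/(1-\mpcSpaceRegime)\rceil=\bigO{1}$, so each prefix sum takes $\bigO{1}$ rounds.

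Routing used elements to $\machine{i}{1}$ and free positions to $\machine{i}{2}$ is one round each. Each of these machines receives at most $\card{\Sigma_i}$ items, which fits in memory. Pairing equal ranks is one more routing round. Summing over the phases gives $\bigO{1}$ rounds, $\bigO{n^{1-\mpcSpaceRegime}}$ memory per machine (up to logs), and $\bigO{n^{7\mpcSpaceRegime}\log^8 n}$ machines.
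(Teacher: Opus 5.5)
Your proposal is correct and follows the paper's own argument. Both do a phase-by-phase accounting in which Phase~1 dominates with one machine per tuple of $W$ ($O(n^{7\varepsilon}\log^{8}n)$ machines, each holding $O(\kappa)$ data), Phase~2 fits all the short summaries on a single machine precisely because $\varepsilon<1/8$, and Phases~3--4 use only $O(n^{\varepsilon})$ machines with constant-depth broadcast trees.

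You go beyond the paper's terse proof in two places. First, you justify routing the input windows to the tuple machines despite large fan-out; this does require first splitting each $n^{1-\varepsilon}$-word chunk into $n^{\Omega(1)}$ pieces before broadcasting. Second, you store the tournament edges implicitly, so \texttt{Block Reconstruction} genuinely runs in $O(\kappa)$ space.
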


\begin{proof}
    It is clear from the algorithm description that the algorithm runs in $ \bigO{1} $ rounds.

    \vspace{2mm}
     Next, we justify the memory per machine.
In~\nameref*{par:mpc.local.round.one}, the input of each machine
    \begin{align}
        j; \pi_{1}[\startPoint{1}, \endPoint{1}), \pi_{2}[\startPoint{2}, \endPoint{2}), \dots, \pi_{5}[\startPoint{5}, \endPoint{5}) ; \{ \startPoint{1}, \endPoint{1} \}, \{ \startPoint{2}, \endPoint{2} \}, \dots, \left\{ \startPoint{5}, \endPoint{5} \right\}  \nonumber
    \end{align}
    is of size $ \bigO{\blSize} $, since $ [\startPoint{i}, \endPoint{i}) $ is a window of size at most $ \bigO{\blSize} $. The algorithm \texttt{\nameref*{sub:local_aggregating_algorithm}} (\cref{alg.local.cycle.removal}) can be implemented with $ \bigO{\blSize} $ memory. In the output of each machine,
    \begin{align}
    \left(j, \mpcLocalOut, \{ \startPoint{1}, \endPoint{1} \}, \{ \startPoint{2}, \endPoint{2} \}, \dots, \{ \startPoint{5}, \endPoint{5} \}, \sum_{i=1}^{5}\ed{\mpcLocalOut, \pi_{i}[\startPoint{i}, \endPoint{i})} \right), \nonumber
\end{align}
computing the distance $ \sum_{i=1}^{5} \ed{\mpcLocalOut, \pi_{i}[\startPoint{i}, \endPoint{i})} $ can be done with $ \bigO{\blSize} $ memory. Storing the output requires $ \bigO{\log n} $ memory. Thus, each machine in~\nameref*{par:mpc.local.round.one} uses $ \bigO{\blSize} = \bigO{n^{1-\mpcSpaceRegime}} $ memory.

    In~\nameref*{par:mpc.local.round.two}, the machine collects the outputs of all machines in~\nameref*{par:mpc.local.round.one}, each with size $ \bigO{\log n} $. Since there are $ \bigO{n^{7\mpcSpaceRegime}\log^{8}n} $ tuples in $ \windTupSet $ (see~\cref{sub:windows_decomposition}), the number of machines used in~\nameref*{par:mpc.local.round.one} is $ \bigO{n^{7\mpcSpaceRegime}\log^{8}n} $. Thus, the total size of the input collected by the machine in~\nameref*{par:mpc.local.round.two} is $ \bigO{n^{7\mpcSpaceRegime}\log^{9}n} $. With $ \mpcSpaceRegime < 1/8 $, we have $ \bigO{n^{7\mpcSpaceRegime} \log^{9}n} = \bigO{n^{1-\mpcSpaceRegime}} $. Running~\cref{alg.combining.blocks} requires $ \bigO{n^{7\mpcSpaceRegime}\log^{9}n} $ memory. Thus, the machine in~\nameref*{par:mpc.local.round.two} uses $ \bigO{n^{1-\mpcSpaceRegime}} $ memory.

    It is obvious that each machine in~\nameref*{par:mpc.local.round.three} and~\nameref*{par:mpc.local.round.four} uses $ \bigO{n^{1-\mpcSpaceRegime}} $ memory. Overall, the MPC $ \scalableCycleRemoval $ algorithm uses $ \bigO{n^{1-\mpcSpaceRegime}} $ memory per machine.

    \vspace{2mm}
    Next, we argue about the number of machines used. The number of machines in~\nameref*{par:mpc.local.round.one} is $ \bigO{n^{7\mpcSpaceRegime}\log^{8}n} $. We only use only one machine  in~\nameref*{par:mpc.local.round.two}. In~\nameref*{par:mpc.local.round.three} and~\nameref*{par:mpc.local.round.four}, the number of machines used is $ \bigO{n^{\mpcSpaceRegime}} $. Thus, the total number of machines used is $ \bigO{n^{7\mpcSpaceRegime}\log^{8}n} $.

    \vspace{2mm}

\end{proof}
\subsection{Finding a local solution under Hamming distance in MPC} \label{sec:mpc.hamming}

We simulate~\cref{alg:local-hamming} in the \gls{mpc} model. To facilitate this, we allocate an additional $ n^{\mpcSpaceRegime} $ machines to store the local solution. Assigning the majority element to each position (if one exists) is straightforward.

It remains to fill out the positions without majority agreement with the unused elements. For each $ i $, we introduce two additional machines: $ \machine{i}{1} $, which stores unused elements in $ \Sigma_{i} $, and $ \machine{i}{2} $, which stores unassigned positions in $ \Sigma_{i} $. In the first round, $ \machine{i} $ sends unused elements to $ \machine{i}{1} $, and unassigned positions to $ \machine{i}{2} $. 

Since we just need to provide a one-to-one mapping from the list of unused elements to the list of unassigned positions, in an arbitrary manner, we can apply the filling strategy described in \nameref{par:mpc.local.round.four} (\cref{sec:mpc.Ulam}).

\begin{lemma}\label{cor:mpc.hamming}
    \cref{alg:local-hamming} can be implemented in \gls{mpc} model in $ \bigO{1} $ rounds, using $ \otilda{n} $ total space.
\end{lemma}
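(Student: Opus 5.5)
The plan is to simulate the two phases of \cref{alg:local-hamming} directly on the storage layout fixed at the start of \cref{sec:mpc_implementation}. For each $\pi\in Q$ and each $i\in\{1,\dots,n^{\mpcSpaceRegime}\}$, the $i$'th machine of $\pi$ stores the elements at positions in $\Sigma_i$. We add $n^{\mpcSpaceRegime}$ output machines $\machine{i}$, one per position range $\Sigma_i$, together with the auxiliary machines $\machine{i}{1}$ (unused elements in $\Sigma_i$) and $\machine{i}{2}$ (unresolved positions in $\Sigma_i$).

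\textbf{Round 1: majority phase.} For each $i$, the three $i$'th machines of $\pi_1,\pi_2,\pi_3$ send their contents to $\machine{i}$. This costs $\bigO{n^{1-\mpcSpaceRegime}}$ words per machine. Machine $\machine{i}$ computes the majority element at every position $k\in\Sigma_i$, if one exists, and sets $y[k]$ accordingly. It sends the list of its unresolved positions to $\machine{i}{2}$. Every majority element $e$ is routed to $\machine{i'}{1}$, where $i'$ is the index with $e\in\Sigma_{i'}$.

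An element can be a majority at most once: if it were a majority at two positions, two of the three permutations would both contain it at two distinct positions, which is impossible. Hence each $\machine{i'}{1}$ receives at most $|\Sigma_{i'}|$ messages, so the memory bound holds. Each $\machine{i'}{1}$ then outputs its unused set $\Sigma_{i'}\setminus\{\text{received}\}$ in sorted order.

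\textbf{Rounds 2 onward: the filling phase.} This is exactly the filling strategy of \nameref{par:mpc.local.round.four}. Each $\machine{i}{1}$ and each $\machine{i}{2}$ reports a single count. We compute prefix sums of these counts over $i$, which gives every unused element and every unresolved position its global rank. Both lists have the same length $n-\#\{\text{resolved positions}\}$, since resolved positions and used elements are in bijection.

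To pair items without any single machine knowing the full prefix array, we route each item to a rendezvous machine indexed by $\ceil{k/n^{1-\mpcSpaceRegime}}$, where $k$ is its rank. Both the element of rank $k$ and the position of rank $k$ arrive at the same rendezvous machine. That machine, which receives at most $2n^{1-\mpcSpaceRegime}$ items, pairs them and forwards each element to the output machine $\machine{i}$ owning the paired position. This takes a constant number of rounds, uses $\bigO{n^{\mpcSpaceRegime}}$ machines each with $\bigO{n^{1-\mpcSpaceRegime}}$ memory, and the total data moved is $\bigO{n}$. Therefore the total space is $\otilda{n}$.

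\textbf{Main obstacle.} The only step that is not local is the prefix-sum computation over $n^{\mpcSpaceRegime}$ counts. When $\mpcSpaceRegime\le 1/2$, all counts fit on one machine, which computes the prefix sums and sends each offset back. For general $\mpcSpaceRegime<1$, I would use a broadcast/aggregation tree with fan-out $n^{1-\mpcSpaceRegime}$. Its depth is $\ceil{\mpcSpaceRegime/(1-\mpcSpaceRegime)}+1=\bigO{1}$. Partial sums are aggregated upward, and offsets are then pushed downward, each internal node adding the sums of its left siblings. This keeps the round count constant for constant $\mpcSpaceRegime$, and the tree adds only $\bigO{n^{\mpcSpaceRegime}}$ machines and $\bigO{n^{\mpcSpaceRegime}}$ total space. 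That completes the bound of $\bigO{1}$ rounds and $\otilda{n}$ total space.
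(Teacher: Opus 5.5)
Your proposal is correct and follows essentially the paper's route: you collocate the $i$'th blocks of the three permutations to resolve majorities, then reuse the Phase~4 filling strategy, which ranks unused elements and unresolved positions via prefix sums of per-block counts on a constant-depth broadcast tree and pairs items of equal rank. Your rendezvous machines indexed by $\lceil k/n^{1-\varepsilon}\rceil$, and your check that each $\machine{i'}{1}$ receives at most $\card{\Sigma_{i'}}$ majority elements, are details the paper leaves implicit; the paper simply has $\machine{i}$ send the unused elements of $\Sigma_i$ to $\machine{i}{1}$ and routes each element directly to the machine holding the position of the same rank.
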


\subsection{Finding a local solution under Spearman's footrule distance in MPC} \label{sec:mpc.spearman}

We simulate~\cref{alg:local-spearman} in the \gls{mpc} model. Similar to implementation under Hamming distance, we allocate an additional $ n^{\mpcSpaceRegime} $ machines to store the local solution. Computing the median element at each position is straightforward; however, the resulting string $ z $ may contain repeated elements and thus might not be a valid permutation.

To obtain a valid permutation,~\cref{alg:local-spearman} sorts the entries of $ z $ and assigns the sorted entries to positions $ 1,2,\dots, n $ in increasing order. The newly assigned values are mapped back to the original positions. To implement this step in the \gls{mpc} model, we use a standard sorting algorithm in the \gls{mpc} model~\cite{DBLP:conf/isaac/GoodrichSZ11} to sort all entries of $ z $ in increasing order. This sorting ensures that every entry in machine $ \machine{i} $ is less than every entry in machine $ \machine{i+1} $, and can be completed in a constant number of rounds. The final step is to map these sorted entries to the set $ \{1, 2, \dots, n\} $ in increasing order. This assignment is analogous to the process used for Hamming distance, where unused elements are matched to unassigned positions. It can also be accomplished in $ \bigO{1} $ rounds using $ \otilda{n} $ total space.

\begin{lemma}\label{cor:mpc.spearman}
    \cref{alg:local-spearman} can be implemented in \gls{mpc} model in $ \bigO{1} $ rounds, using $ \otilda{n} $ total space.
\end{lemma}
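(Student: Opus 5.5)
The statement to prove is \cref{cor:mpc.spearman}: \cref{alg:local-spearman} runs in $\bigO{1}$ MPC rounds with $\otilda{n}$ total space. My plan is to simulate the two phases of the algorithm separately, reusing the data layout of this section and the pairing primitive from \nameref{par:mpc.local.round.four} in \cref{sec:mpc.Ulam}.

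\textbf{Phase 1: position-wise median.} Recall that for each permutation $\pi_i\in Q$ its $k$'th machine stores the elements at the positions in $\Sigma_k$. In one round I send the $k$'th machine of each of $\pi_1,\pi_2,\pi_3$ to a fresh machine $\machine{k}$. That machine receives $3|\Sigma_k| = \bigO{n^{1-\mpcSpaceRegime}}$ words, which fits in local memory. It computes $z[p]=\texttt{median}(\pi_1[p],\pi_2[p],\pi_3[p])$ for every $p\in\Sigma_k$ and stores the pairs $(z[p],p)$. Across the $n^{\mpcSpaceRegime}$ machines this uses $\bigO{n}$ total space.

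\textbf{Phase 2: rank assignment.} I sort the $n$ pairs $(z[p],p)$ lexicographically with the constant-round MPC sorting algorithm of~\cite{DBLP:conf/isaac/GoodrichSZ11}. It runs with local memory $\bigO{n^{1-\mpcSpaceRegime}}$ and total space $\otilda{n}$. Breaking ties by position gives a total order, and this matches the sorted sequence $S'$ in \cref{alg:local-spearman}. After sorting, machine $\machine{k}$ holds a contiguous segment of the sorted list, with all its entries preceding those on $\machine{k+1}$.

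Each machine then needs the global rank $i$ of each of its entries. This requires the number of entries held by all earlier machines, which is a prefix sum of $n^{\mpcSpaceRegime}$ counts. I compute it with a broadcast tree of fan-in $n^{1-\mpcSpaceRegime}$, whose depth $\bigO{1/(1-\mpcSpaceRegime)}$ is constant. This is the same primitive used in \nameref{par:mpc.local.round.four}.

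Finally, the entry of rank $i$ with original position $k_i$ produces the assignment $y[k_i]=i$. It is routed to the machine responsible for the block of positions containing $k_i$, so that $y$ ends up stored in the standard layout. Each destination machine receives exactly $|\Sigma_k|$ messages, so memory is respected.

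\textbf{Main obstacle.} I expect the only delicate step to be checking that the sorting primitive and the prefix-sum broadcast tree both work in the sublinear regime, with $\bigO{n^{1-\mpcSpaceRegime}}$ local memory, in constant rounds for every constant $\mpcSpaceRegime\in(0,1)$. The rest is bookkeeping. Summing over the phases gives $\bigO{1}$ rounds and $\otilda{n}$ total space.
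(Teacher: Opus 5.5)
Your proposal is correct and follows the same route as the paper: compute the position-wise medians block by block, sort the entries of $z$ with the constant-round MPC sorting algorithm of Goodrich et al., and then assign the ranks $1,\dots,n$ from prefix counts over a constant-depth broadcast tree, as in the pairing step used for the Hamming and Ulam post-processing. You are somewhat more explicit than the paper, which only sketches these steps. You fix ties by position (any tie-break is a valid reading of the algorithm's sort step) and you route each assignment $y[k_i]=i$ back to the machine holding the block of positions containing $k_i$.
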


\subsection{Finding a local solution under Kendall-tau distance in MPC} \label{sec:mpc.kendall.tau}
For Kendall-tau distance, we apply the KWIK-SORT algorithm to solve the feedback arc set problem on the majority graph induced by $ \smallSubset $. In this graph, every vertex corresponds to an element, and there is a directed edge from vertex $ a $ to vertex $ b $ if $ a $ precedes $ b $ in at least two permutations from $ \smallSubset $. In our \gls{mpc} implementation, edges do not need to be stored explicitly; determining the direction of an edge between two elements requires only accessing their positions in the three permutations of $ \smallSubset $. The KWIK-SORT algorithm starts with a random pivot, partitions the remaining vertices according to the direction of the edges to the pivot, and recurses on the resulting sets.

To simulate this in constant rounds, we adapt the constant-round MPC implementation of KWIK-SORT from~\cite{im2020fast}, in which multiple pivots are selected per round to form a decision tree that partitions the vertices into multiple sets for parallel recursion. Unlike~\cite{im2020fast}, where $ \bigO{n^{1/2}} $ pivots are chosen per round due to $ \otilda{n} $ local space, our local space constraint allows only $ \bigO{n^{1-\mpcSpaceRegime}} $ pivots. However, since partition sizes cannot be too large~\cite[Lemma 7]{im2020fast}, choosing $ \Omega(\log(n)) $ pivots per round suffices to maintain constant total rounds. Furthermore, by storing edges implicitly via element positions, we keep the total space at $ \bigO{n} $, satisfying the local and total space constraints.

\begin{lemma}\label{cor:mpc.kendall}
    Computing a local solution under Kendall-tau distance can be implemented in \gls{mpc} model in $ \bigO{1} $ rounds, using $ \otilda{n} $ total space.
\end{lemma}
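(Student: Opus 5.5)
The plan is to simulate the randomized KWIK-SORT on the majority graph $G_Q$ in a constant number of phases. Each phase fixes a batch of pivots and refines the current partition of $[n]$ by a decision tree, in the spirit of~\cite{im2020fast}. Two changes keep local memory $\bigO{n^{1-\mpcSpaceRegime}}$ and total memory $\otilda{n}$. First, edges are never stored: we keep, for every element $a$, its \emph{position triple} $(\pi_1^{-1}[a],\pi_2^{-1}[a],\pi_3^{-1}[a])$, so the orientation of $(a,b)$ is a majority vote computable from two triples in $\bigO{1}$ time. Second, each subproblem uses at most $n^{1-\mpcSpaceRegime}$ pivots per phase. The weights $w$ play no role in the execution of KWIK-SORT, since it only reads edge directions; they matter only for the approximation guarantee, which was already handled in~\cref{sec:the_median_problem_under_kendall_tau_distance}. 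Hence it suffices to show that the MPC procedure outputs a permutation distributed exactly as sequential KWIK-SORT on $G_Q$.

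\textbf{Step 1 (triples).} Using the storage convention of this section, machine $i$ of $\pi_t$ holds the positions of the elements of $\Sigma_i$. In one round we route these so that one machine per $\Sigma_i$ holds all triples of $\Sigma_i$; this uses $\bigO{n}$ total space. Every element also carries a label, initially the root of the recursion, identifying the subproblem (part) it belongs to.

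\textbf{Step 2 (one phase).} Sort elements by part label with the constant-round MPC sort of~\cite{DBLP:conf/isaac/GoodrichSZ11}.
\begin{itemize}
\item In each part $V'$, draw a uniformly random ordered sample of $k=\min(|V'|,n^{1-\mpcSpaceRegime})$ pivots together with their triples. Collect them on one machine, which computes the decision tree locally. Following the pivots in order, each pivot either becomes the root pivot of the current sub-part or is routed by earlier pivots into one branch, where it then splits that branch only. This is exactly the structure of KWIK-SORT's recursion restricted to the sampled pivots, so processing pivots in a random order gives the same distribution as sequential pivot choices.
\item Broadcast the pivot triples (size $\bigO{n^{1-\mpcSpaceRegime}}$) along a constant-depth broadcast tree to the machines holding $V'$. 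The per-part cost is proportional to the number of machines holding $V'$ times $n^{1-\mpcSpaceRegime}$, hence $\bigO{n}$ in total.
\item Each non-pivot element walks down the tree, comparing its triple with pivot triples, and obtains a new label: a leaf, that is, a new sub-part. Leaves are ordered left to right, as in KWIK-SORT's concatenation order.
\end{itemize}

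\textbf{Step 3 (rounds and finish).} By~\cite[Lemma 7]{im2020fast}, with $k=\Omega(\log n)$ random pivots every resulting part has size $\bigO{|V'|\log n/k}$ w.h.p. With $k=n^{1-\mpcSpaceRegime}$, each phase shrinks part sizes by a factor $n^{1-\mpcSpaceRegime}/\log n$. After $\bigO{1/(1-\mpcSpaceRegime)}$ phases every part has size at most $n^{1-\mpcSpaceRegime}$. Each such part is then shipped to a single machine and finished by sequential KWIK-SORT. Final positions come from prefix sums of part sizes in tree order, computed with a constant-depth broadcast tree, as in the PostProcessing implementation of~\cref{sec:mpc.Ulam}. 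The output is written back in the distributed permutation format.

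\textbf{Main obstacle.} The delicate step is the batched-pivot simulation in Step 2. I expect the most work in arguing that ordering the sampled pivots randomly and routing later pivots through earlier ones reproduces the sequential KWIK-SORT distribution, including the case where a pivot lands in a sub-part that contains no other pivot. A second point needing care is that~\cite[Lemma 7]{im2020fast} still applies with only $n^{1-\mpcSpaceRegime}$ (rather than $n^{1/2}$) pivots per part, uniformly over all parts by a union bound. Memory accounting is routine by comparison: each element stores $\bigO{1}$ words plus a label, and pivot broadcasts are charged to the machines of their own part.
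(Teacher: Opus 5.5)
Your proposal follows the same route as the paper's sketch. That route is the batched-pivot KWIK-SORT of \cite{im2020fast}, with the number of pivots capped by local memory, \cite[Lemma 7]{im2020fast} to bound part sizes, and edges stored implicitly through the positions of elements in the three permutations of $\smallSubset$. Your write-up also correctly supplies details the paper omits, in particular that polynomially many pivots per part (not merely $\Omega(\log n)$) are what make the number of phases constant.

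One point to tighten concerns the broadcast in Step 2. If a machine may also send only $\otilda{n^{1-\mpcSpaceRegime}}$ words per round, a pivot set of size $n^{1-\mpcSpaceRegime}$ cannot be replicated to $n^{\mpcSpaceRegime}$ machines by a constant-depth tree. Taking $k=\min(|V'|,n^{(1-\mpcSpaceRegime)/2})$ pivots per part fixes this, still shrinks parts by a polynomial factor per phase, and keeps $\bigO{1}$ rounds and $\otilda{n}$ total space.
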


\subsection{Implementing \cref{alg.space.efficient.general.framework} in MPC}\label{sec:mpc.framework}
Let us first review~\cref{alg.space.efficient.general.framework}, which will be simulated by our \gls{mpc} algorithm. The algorithm first samples a set of $ \bigO{\log(n)/\distPara} $ permutations from the input set $ \inpset $, and includes these permutations in a candidate set $ \candidates $. Then, for each subset $ \smallSubset $ of $ r $ permutations sampled uniformly at random from the input set $ \inpset $, we find a local solution from $ \smallSubset $, and add this solution to $ \candidates $. Recall that $ r $ is a constant that will be chosen depending on the specific metric. Finally, we sample a set $ S $ of $ \bigO{\log(n)/\distPara^{2}} $ permutations from $ \inpset $, and return the permutation $ x\in \candidates $ with the smallest $ \cost{x}{S} $.~\cref{alg.mpc.general.framework} presents our \gls{mpc} implementation of this algorithm. 

\begin{algorithm}[htbp]
    \SetKwInOut{KwIn}{Input}
    \SetKwInOut{KwOut}{Output}
    \KwIn{Set $ \inpset = \{p_{1}, p_{2}, \dots, p_{m}\} \subseteq \metricspace $, constants $ r, \distPara $.}
    \KwOut{A permutation.}
        $ \candidates\gets $ a set of $ \bigO{\log(n)} $ permutations sampled uniformly at random from $ \inpset $. \nllabel{line:mpc.sampled.input.set.for.candidates}

        \For{$ i=1,2,\dots, \bigO{\log( n )/\distPara} $, in parallel \nllabel{line:mpc.find.local.solution.start}}{
            $ \smallSubset_{i}\gets $ a set of $ r $ permutations sampled uniformly at random from $ \inpset $.

            $ y_{i}\gets \texttt{localAlg}(\smallSubset_{i}) $ \nllabel{line:mpc.local.algorithm}

            $ C\gets C\cup \{y_{i}\} $
        } \nllabel{line:mpc.find.local.solution}

        Sample a set $ S $ of $ \bigO{\log( n )/\distPara^{2}} $ permutations uniformly at random from $ \inpset $. \nllabel{line:mpc.sampled.input.set.for.evaluation}

        Index every permutation in $ \candidates $ by $ \mpcIdx{x} $ and $ s\in S $ by $ \mpcIdx{s} $. \nllabel{line:mpc.indexing}

        \ForEach{$ (x,s)\in \candidates \times S $, in parallel}{
            Compute $ \dist(x,s) $. \nllabel{line:mpc.compute.distance}
        }

        \ForEach{$ x\in \candidates  $, in parallel}{ \nllabel{line:mpc.compute.cost.start}
            Compute $ \cost{x}{S} = \sum_{s\in S}\dist(x,s) $.
        } \nllabel{line:mpc.compute.cost.end}

        \KwRet{$ \mpcIdx{x} $ such that $ x = \argmin_{x\in \candidates}\cost{x}{S} $}. \nllabel{line:mpc.return.the.best.candidate}
    \caption{A general \gls{mpc} framework for median permutations.}
    \label{alg.mpc.general.framework}
\end{algorithm}

In~\cref{alg.mpc.general.framework}, from~\cref{line:mpc.sampled.input.set.for.candidates} to~\cref{line:mpc.sampled.input.set.for.evaluation}, we simulate the step of establishing a candidate set $ \candidates $ and sampling an evaluation set $ S $ in~\cref{alg.space.efficient.general.framework}. From~\cref{line:mpc.indexing} to~\cref{line:mpc.return.the.best.candidate}, we simulate the step of identifying a candidate that achieves a $ (1+\distPara) $-approximation to the best candidate in $ \candidates $ in~\cref{alg.space.efficient.general.framework}. 

Let $ \mpcLRound $ and $ \mpcLSpace $ be the number of rounds and spaces required to find a local solution from a set of $ r $ permutations. Denote by $ \mpcSDist $ the spaces required to compute the distance between two permutations in $ \permutations $ in $ \bigO{1} $ rounds. We now analyze the round and space complexity of~\cref{alg.mpc.general.framework}.

\paragraph{Space complexity.}
    The candidate set $ \candidates $ is initialized with $ \bigO{\log(n)} $ permutations in~\cref{line:mpc.sampled.input.set.for.candidates}, and after~\cref{line:mpc.find.local.solution}, it is complemented with $ \bigO{\log n} $ local solutions. Storing $ \candidates $ hence acquires a total space of $ \bigO{n\log n} $.

    Each execution of MPC algorithm for producing a local solution in~\cref{line:mpc.local.algorithm} uses a total space of $ \mpcLSpace $. Since from~\cref{line:mpc.find.local.solution.start} and~\cref{line:mpc.find.local.solution}, we run the this algorithm for $ \bigO{\log n} $ subsets of size $ r $ in parallel, we use a total space of $ \mpcLSpace \log n $ in these steps.
    Storing a sampled set of size $ \bigO{\log n/\distPara^{2}} $ in~\cref{line:mpc.sampled.input.set.for.evaluation} then requires $ \bigO{n\log n} $ total space.

    We analyze the space used from~\cref{line:mpc.indexing} to~\cref{line:mpc.return.the.best.candidate}.
    As both $ \candidates $ and $ S $ have $ \otilda{\log n} $ permutations, we can uniquely index every permutation $ x\in\candidates $ and $ s\in S $ using identifiers $ \mpcIdx{x} $ and $ \mpcIdx{s} $ respectively. The length of each index is logarithmic in the set size.

    In parallel, for each pair $ (x,s)\in \candidates \times S $, we assign a total space of $ \mpcSDist $ for $ (x,s) $ and compute their distances in $ \bigO{1} $ rounds. This results in a total space of $ \otilda{\mpcSDist \log^{2} n} $.

    Finally, for each $ x\in \candidates $, in parallel, we can use a broadcast tree of constant depth to compute $ \cost{x}{S} $ using $ \bigO{n \log n} $ space . Using the same argument, it is possible to find the permutation $ x\in \candidates $ with the smallest $ \cost{x}{S} $ using $ \bigO{n \log n} $ space. 

    Therefore, the total space used by~\cref{alg.mpc.general.framework} is 
    $ \otilda{\log^{2}(n) \cdot \mpcSDist + \log(n) \cdot \mpcLSpace + n \log(n)} $.

    \paragraph{Round Complexity.}

    Each execution of \gls{mpc} algorithm for producing a local solution in~\cref{line:mpc.local.algorithm} runs in $ \mpcLRound $ rounds. Since all executions of this algorithm are done in parallel from~\cref{line:mpc.find.local.solution.start} and~\cref{line:mpc.find.local.solution}, these steps take $ \bigO{1} $ rounds.

    All the other steps of the algorithm also run in $ \bigO{1} $ rounds. Therefore, the total round complexity of~\cref{alg.mpc.general.framework} is $ \mpcLRound $.

\begin{proof}[Proof of~\cref{thm:mpc.main.ulam.theorem}]
    We apply~\cref{alg.mpc.general.framework} with $ r=5 $ and in place of $ \texttt{localAlg} $ in~\cref{line:mpc.local.algorithm}, we use the \gls{mpc} implementation of $ \scalableCycleRemoval $ algorithm.

    From~\cref{lem:mpc.scalable.cycle.removal}, we can compute a local solution under Ulam distance in $ \mpcLRound = \bigO{1} $ rounds, using $ \mpcLSpace = \otilda{n^{1+6\mpcSpaceRegime}} $ total space. The Ulam distance between two permutations can be approximated to within a factor of $ 1+\approxFactMPCED $ in $ \bigO{1} $ rounds, with a total space of $ \mpcSDist = \otilda{n^{1+\mpcSpaceRegime}} $ by using the \gls{mpc} algorithm for computing edit distance proposed in~\cite{hajiaghayi2019massively}. Therefore, our algorithm runs in $ \bigO{1} $ rounds and uses a total space of $ \otilda{ n^{1+6\mpcSpaceRegime} } $.

    To analyze the approximation ratio, note that~\cref{alg.mpc.general.framework} is a direct simulation of~\cref{alg.space.efficient.general.framework}, and the local solution algorithm used in~\cref{line:mpc.local.algorithm} produces the same output as the offline $ \scalableCycleRemoval $ algorithm. Hence, our algorithm achieves a $ (2-\alpha) $-approximation for Ulam $ 1 $-median problem.
\end{proof}

\begin{proof}[Proof of~\cref{thm:mpc.main.theorem}]
    We apply~\cref{alg.mpc.general.framework} with $ r=3 $ and the \gls{mpc} implementations of local solution algorithms presented in~\cref{sec:mpc.hamming},~\cref{sec:mpc.spearman}, and~\cref{sec:mpc.kendall.tau} for Hamming, Spearman's footrule, and Kendall-tau distances respectively.

    For weighted Hamming and Spearmans's footrule distances, from~\cref{cor:mpc.hamming} and~\cref{cor:mpc.spearman}, we can compute a local solution in $ \mpcLRound = \bigO{1} $ rounds, using $ \mpcLSpace = \otilda{n} $ total space. The Hamming and Spearman's footrule distances between two permutations can be computed exactly in $ \bigO{1} $ rounds, using $ \mpcSDist = \bigO{n} $ total space. Therefore, our algorithm runs in $ \bigO{1} $ rounds and uses a total space of $ \otilda{n} $ for both Hamming and Spearman's footrule $ 1 $-median problems.

    For weighted Kendall-tau distance, from~\cref{cor:mpc.kendall}, we can compute a local solution in $ \mpcLRound = \bigO{1} $ rounds, using $ \mpcLSpace = \otilda{n} $ total space. The Kendall-tau distance between two permutations can be computed exactly in $ \bigO{1} $ rounds, using $ \mpcSDist = \otilda{n^{1+\mpcSpaceRegime}} $ total space. Therefore, our algorithm runs in $ \bigO{1} $ rounds and uses a total space of $ \otilda{n^{1+\mpcSpaceRegime}} $ for Kendall-tau $ 1 $-median problem.    

    As~\cref{alg.mpc.general.framework} is a direct simulation of~\cref{alg.space.efficient.general.framework}, and the \gls{mpc} algorithms for producing local solutions are straightforward implementations of their offline counterparts, our achieves the same approximation ratios as those in~\cref{thm:offline.weighted}.
\end{proof}